\newcommand{\ELBO}{\mathcal{L}}
\newcommand{\rme}{\mathrm{e}}
\newcommand{\BS}[1]{\boldsymbol{#1}}
\newcommand{\rmd}{\mathrm{d}}
\newcommand{\bphi}{{\BS{\varphi}}}
\newcommand{\btheta}{{\BS{\theta}}}
\newcommand{\prop}{\;\propto\;}
\newcommand{\rank}{\mathrm{rank}}
\newcommand{\RR}{\mathds{R}}
\newcommand{\NN}{\mathds{N}}
\newcommand{\Var}{\mathds{V}\mathrm{ar}}
\newcommand{\KL}[2]{\mathrm{KL}\left[#1 \Vert #2\right]}
\newcommand{\set}[2]{\left\{#1, \dots, #2\right\}}
\newcommand{\norm}[1]{\left\Vert #1 \right\Vert}
\newcommand{\entropy}{\mathrm{H}}
\newcommand{\gaussian}[2]{\mathcal{N}\left(#1, #2\right)}
\newcommand{\gaussianat}[3]{\mathcal{N}\left(#1;#2, #3\right)}
\newcommand{\poisson}[1]{\mathcal{P}(#1)}
\newcommand{\eqsp}{\;}
\newcommand{\diag}{\mathrm{diag}}
\newcommand{\EE}[2]{\mathds{E}_{#1}\!\left[#2\right]}
\newcommand{\cov}{\mathbb{C}\mathrm{ov}}
\newcommand{\reciprocal}{\oslash}
\newcounter{notation}
\newcounter{properties}
\theoremstyle{plain}
\newtheorem{theorem}{Theorem}[section]
\newtheorem{proposition}[theorem]{Proposition}
\newtheorem{lemma}[theorem]{Lemma}
\newtheorem{corollary}[theorem]{Corollary}
\theoremstyle{definition}
\theoremstyle{remark}
\icmltitlerunning{Independent Component Discovery in Temporal Count Data}
\begin{document}

\twocolumn[
  \icmltitle{Independent Component Discovery in Temporal Count Data}



  \icmlsetsymbol{equal}{*}

  \begin{icmlauthorlist}
    \icmlauthor{Alexandre Chaussard}{lpsm}
    \icmlauthor{Anna Bonnet}{lpsm}
    \icmlauthor{Sylvain Le Corff}{lpsm}
  \end{icmlauthorlist}

  \icmlaffiliation{lpsm}{CNRS, Laboratoire de Probabilit\'es, Statistique et Mod\'elisation, LPSM, Sorbonne Universit\'e, F-75005 Paris, France}

  \icmlcorrespondingauthor{Alexandre Chaussard}{alexandre.chaussard@sorbonne-universite.fr}

  \icmlkeywords{Machine Learning, ICML}

  \vskip 0.3in
]



\printAffiliationsAndNotice{}  

\begin{abstract} 
 Advances in data collection are producing growing volumes of temporal count observations, making adapted modeling increasingly necessary. In this work, we introduce a generative framework for independent component analysis of temporal count data, combining regime-adaptive dynamics with Poisson log-normal emissions. The model identifies disentangled components with regime-dependent contributions, enabling representation learning and perturbations analysis. Notably, we establish the identifiability of the model, supporting principled interpretation. To learn the parameters, we propose an efficient amortized variational inference procedure. Experiments on simulated data evaluate recovery of the mixing function and latent sources across diverse settings, while real-world applications to gut microbiome and climate datasets reveal co-variation patterns and regime shifts consistent with domain-specific knowledge.
\end{abstract}

\section{Introduction}
Multivariate temporal count data arise in many application areas, notably healthcare, finance or climate science, where both interpretability and faithful characterization of perturbations and regimes (e.g. infections, trends, seasons) are increasingly central \cite{lyu2023methodological,bacry2015hawkes,koh2023spatiotemporal}. Despite their prevalence, principled modeling for temporal count data remains challenging: observations are non-negative, discrete, frequently overdispersed, and often exhibit heavy-tailed variability. While a substantial body of work focuses on forecasting models for temporal count data (see \citet{davis2021count} and the references therein), these approaches provide more limited tools for theoretically grounded analyses of latent structures, dimension reduction, and the extraction of interpretable mechanisms.

In this context, Independent Component Analysis (ICA) provides an attractive framework for interpretable representation learning by expressing observed signals as mixed independent latent sources \cite{hyvarinen2013independent}. In temporal settings with linear mixing, the dependency structure can be leveraged to characterize the mixing map and latent sources \cite{belouchrani2002blind}, supporting principled interpretation. Recent works then generalize identification results to structured ICA and nonlinear mixing models \cite{morioka2021independent,halva2024identifiable}, improving on model expressivity. While these developments enable principled downstream analysis when observations are real-valued and corrupted by additive noise, their framework is not well-suited for temporal count data, where discreteness and heteroskedastic noise can lead to mispecification or unstable inference, highlighting the need for ICA models tailored to count data.

In contrast, count data analysis typically relies on observation models that explicitly capture their statistical properties, such as negative binomial likelihoods \cite{hilbe2011negative} or Poisson mixed models. Among these, Poisson log-normal (PLN) generative models \cite{aitchison1982statistical} are particularly compelling, as their latent Gaussian variables induce flexible dependencies across dimensions while preserving a well-grounded probabilistic interpretation \cite{chiquet2021poisson}. Yet, while Poisson mixed models are showing promises in temporal count data applications, their potential for representation learning and ICA remain underexplored. 

In this work, we propose a generative framework for ICA of temporal count data using PLN models. To accommodate exogenous perturbations and regime changes in dynamics, we couple the PLN observation model with a switching linear dynamical system \cite{ghahramani2000variational}, enabling latent regime segmentation and globally nonlinear dynamics. Leveraging the model’s structure, we establish identifiability guarantees on the latent components under mild assumptions, supporting principled downstream interpretation. To learn the model parameters, we derive an efficient amortized variational inference procedure \cite{blei2017variational} that exploits conditional structure and uses a deep variational parameterization designed to scale to multivariate time series. To complement our identifiability results, we study several simulation regimes and demonstrate partial recovery of the mixing map in finite-sample setting. We then demonstrate the empirical relevance of the model on two real-world datasets: a gut microbiome study \cite{bucci2016mdsine}, where inferred regimes and components align with clinical perturbations and medically relevant microbial interactions; and a severe weather hazard dataset \cite{noaa1996storm}, where inferred regimes track seasonal structure while components recover known meteorological patterns. We also report an auxiliary forecasting benchmark against standard baselines as a quantitative evaluation of the model.

Our methodological contributions are the following: (i) a generative model for ICA of temporal count data with regime switching for perturbation-aware analysis, (ii) identifiability guarantees for the mixing map up to standard ICA indeterminacies, and (iii) a structured amortized variational inference procedure to learn the model parameters.

\paragraph{Notations. } Consider a dataset $\mathcal{D}$ of $N$ temporal count data of length $T$. We write $v_{1:T}^{1:K}$ to represent a multivariate time series of length $T$ with $K$ features per time step, i.e. for each $x\in\mathcal{D}$, we write $x = x_{1:T}^{1:K} \in \NN^{T \times K}$. For all $1\leq t \leq T$, $1\leq i \leq K$, the $i$-th feature of $v_t$ is written $v_t^{(i)}$. For all sequences $z = (z_j)_{1\leq j \leq K}\in\mathbb{R}^K$, we write  $x\sim \mathcal{P}(\exp z)$ when $(x_j)_{1\leq j \leq K}$ are independent with $x_j\sim \mathcal{P}(\exp z_j)$. We denote functions under the model by $p_\btheta$ with parameters $\btheta$, and $q_\bphi$ the variational proxy with parameters $\bphi$.

\section{Background and related works}
\paragraph{Independent Component Analysis.}
A central goal in data analysis is to establish interpretable representations of high-dimensional observations, which are often correlated and difficult to analyze directly. ICA addresses this by assuming that observations are generated from \emph{component-wise independent} latent sources of possibly smaller dimension, mixed through an unknown \emph{mixing} function \cite{hyvarinen2013independent}. Denoting the sources by $s = s_{1:T}^{1:d} \in \RR^{T \times d}$ and a parametric mixing map $f_\btheta:\RR^d \to \RR^K$, the observations are given by $x_t = f_\btheta(s_t)$, $1 \leq t \leq T$. To explicitly account for observation noise and enable principled inference, probabilistic ICA specifies a generative model via a prior $p_\theta(s_{1:T})$ and an observation model $p_\btheta(x_{1:T}\mid s_{1:T})$, enabling likelihood-based learning and uncertainty quantification \cite{halva2024identifiable,hyvarinen2016unsupervised}.
A key challenge in ICA is the characterization of the mixing map $f_\btheta$. In linear ICA with additive noise, identifiability holds under some assumptions, typically excluding more than one Gaussian source \cite{hyvarinen2000independent}, or by assuming joint diagonalization of auto-correlation matrices in time series \cite{belouchrani2002blind}; both yielding mixing identification up to column-wise permutation and rescaling. Beyond the linear case, nonlinear ICA is in general not identifiable without additional structure, and temporal information is one principled route to restore identifiability \cite{hyvarinen2016unsupervised}. Recent work formalizes structured nonlinear ICA models and establishes identifiability even with additive noise of unknown distribution \cite{halva2024identifiable}. However, these formulations do not directly extend to count data, whose discreteness, non-negativity, heteroskedasticity and overdispersion make continuous additive-noise models ill-suited. In this work, we adapt the structured ICA viewpoint of \citet{halva2024identifiable} to count data by using a Poisson-based emission model, treating observation noise through composition in the likelihood, and enabling identifiability under mild assumptions.

\paragraph{Structured count data. }
Count data are commonly found in many applications, and pose distinct challenges due to discreteness, non-negativity, and frequent overdispersion, limiting the applicability of standard modeling tools \cite{cameron2013regression}. Among count-specific approaches, Poisson Log-Normal (PLN) models provide a principled framework for capturing dependencies via latent Gaussian log-intensities, enabling downstream tasks such as network inference and principle component analysis \cite{chiquet2019variational,chiquet2021poisson}. Recent extensions to structured settings further highlight the flexibility of the framework and its practical interest \cite{batardiere2025zero,chaussard2025taxapln,qian2024splang}; yet, PLN models tailored to temporal structures remain largely underexplored.
Meanwhile, several models have been proposed for temporal count data, including integer auto-regressive models \cite{jin1991integer}, binary series \cite{cui2009new,livsey2018multivariate}, and negative binomial or Poisson regression \cite{davis2009negative,fokianos2009poisson}; however, these methods are mostly forecasting-oriented and provide limited tools for interpretability-driven analysis. Extensions of Poisson mixed models have then been developed, notably switching linear dynamical models for regime segmentation in neuroscience applications \cite{zoltowski2020general,glaser2020recurrent}.   
In ecology, explainable mechanistic models such as generalized Lotka-Volterra dynamics are also widely employed to describe species interactions over time \cite{stein2013ecological,gibson2018robust,gibson2025learning}, but they are tied to specific applications and often lack identifiability guarantees supporting interactions analysis. 
The generative framework introduced in this paper aims at addressing these limitations by extending PLN modeling to ICA of temporal count data while retaining a tractable structure for inference, and theoretical guarantees supporting downstream analysis.

\section{ICA for temporal count data}
\subsection{Generative Model}
\label{sec:generative_model_definition}

\paragraph{General formulation.} 
The proposed PLN-ICA framework models temporal count observations as conditionally independent Poisson variables whose intensities are obtained by mixing component-wise independent latent sources. Formally, letting $f_\btheta : \RR^d \rightarrow \RR^K$, $K \geq d > 0$, we assume that (i) the latent sources $(s_{1:T}^{(i)})_{1\leq i \leq d}$ are mutually independent, (ii) $z_t = f_\btheta(s_t)$ defines the mixed log-intensities at time $t\leq T$, and (iii) conditional on $s_{1:T}$, the observations $x=x_{1:T}^{1:K}$ are independent with $x_t^{(k)} \sim \mathcal{P}\big(\!\exp z_t^{(k)}\big)$ for $1\leq k \leq K$. This construction inherits key advantages of PLN modeling: Poisson emissions enforce non-negativity, while latent log-normal intensities induce overdispersion and heavy-tail modeling. From a representation-learning ICA perspective, the sources $s$ provide a low-dimensional, disentangled description of multivariate count trajectories. Moreover, observation noise is modeled directly through Poisson sampling, yielding a non-additive, non-Gaussian model. As such, the PLN-ICA framework provides a structured ICA model tailored to temporal counts, departing from prior identifiable ICA formulations developed for real-valued additive-noise observations. 

\paragraph{Auto-Regressive PLN-ICA.} The PLN-ICA framework encapsulates a large variety of models depending on the prior over $s_{1:T}$ and the choice of mixing function $f_{\btheta}$. In this work, we focus on a linear mixing model combined with \emph{switching} linear Gaussian dynamics, which we refer to as Auto-Regressive PLN-ICA (ARPLN-ICA, see Figure~\ref{fig:graphical_dependencies}):
\begin{itemize}
    \item \textbf{Component-wise switching}: For all $1\leq i \leq d$, $u^{(i)}$ is a discrete Markov chain with initial distribution $\pi^{(i)}$ and transition matrix $A^{(i)} = (a^{(i)}_{k\ell})_{1\leq k, \ell\leq C}$.
    \item \textbf{Regime dependent sources}: For all $1\leq i \leq d$, conditionally on $\{u_{t+1}^{(i)} = k, \;s_t\}$,
    \begin{equation}
        s_{t+1}^{(i)} \sim \mathcal{N}\big({B_k^{(i)} s_t^{(i)} + b_k^{(i)}}, {\psi_k^{(i)}}\big) \eqsp,
    \end{equation}
    with $B_k^{(i)}, b_k^{(i)} \in \RR, \;\psi_k^{(i)} \in \RR^{+}$, and $s_1$ following a multivariate Gaussian conditionally on $\{u_1 = k\}$ with mean $\bar{b}_k$ and diagonal covariance $\bar{\psi}_k$.
    \item \textbf{PLN-ICA linear mixing}: Given a linear mixing $\Gamma \in \RR^{K \times d}$, for all $1\leq t \leq T$, conditionally on $s_t$,
    \begin{equation}
        x_t\sim \mathcal{P}\big(\!\exp{(\Gamma s_t)}\big) \eqsp.
    \end{equation}
\end{itemize}
The PLN emission enforces a statistical framework suited to counts, converting the latent AR variability into discrete noise, while regime switching further accommodates nonstationary and nonlinearity commonly found in applications. The latent sources are then linearly mixed into multivariate log-intensities via $\Gamma$, yielding an interpretable decomposition of temporal dynamics. In the next section, we establish the identifiability class of $\Gamma$ under mild conditions, providing guidelines on principled downstream interpretation.
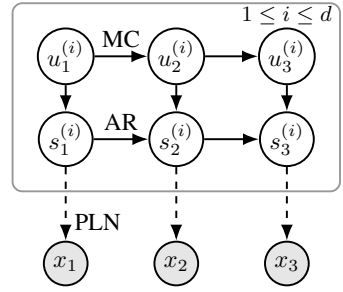
\begin{figure}
\centering
    \resizebox{.55\linewidth}{!}{
    \begin{tikzpicture}[
      >=Latex,
      thick,
      latent/.style={circle,draw,minimum size=18pt,inner sep=1pt},
      obs/.style={latent,fill=gray!20},
      plate/.style={draw,rounded corners,inner sep=10pt}
    ]
    
    \node[latent] (u1) at (0,  1.2) {$u^{(i)}_{1}$};
    \node[latent] (u2) at (1.6,1.2) {$u^{(i)}_{2}$};
    \node[latent] (u3) at (3.2,1.2) {$u^{(i)}_{3}$};
    
    \node[latent] (s1) at (0,  0.0) {$s^{(i)}_{1}$};
    \node[latent] (s2) at (1.6,0.0) {$s^{(i)}_{2}$};
    \node[latent] (s3) at (3.2,0.0) {$s^{(i)}_{3}$};
    
    \node[obs] (x1) at (0,  -1.8) {$x_{1}$};
    \node[obs] (x2) at (1.6,-1.8) {$x_{2}$};
    \node[obs] (x3) at (3.2,-1.8) {$x_{3}$};
    
    \draw[->] (u1) -- node[midway, above] {MC} (u2);
    \draw[->] (u2) -- (u3);
    
    \draw[->] (s1) -- node[midway, above] {AR} (s2);
    \draw[->] (s2) -- (s3);
    
    \draw[->] (u1) -- (s1);
    \draw[->] (u2) -- (s2);
    \draw[->] (u3) -- (s3);
    
    \draw[->, dashed] (s1) -- node[near end, right] {{PLN}} (x1);
    \draw[->, dashed] (s2) -- (x2);
    \draw[->, dashed] (s3) -- (x3);
    
    \node[plate, fit=(u1)(u2)(u3)(s1)(s2)(s3), draw opacity=0.4] (pl) {};
    \node[anchor=south east, font=\small]
      at ([xshift=0cm,yshift=2.3cm]pl.south east) {$1\le i\le d$};
    
    \end{tikzpicture}
    }
\caption{ARPLN-ICA graphical dependencies representation. MC indicates the discrete Markov chain on the latent switching labels modeling regimes, AR the Gaussian auto-regressive prior on the sources conditional to the latent regimes, and PLN the Poisson Log-Normal linear mixing emission of counts at time $t$.}
\label{fig:graphical_dependencies}
\end{figure}

\subsection{Identifiability results}
\label{sec:identifiability_results}
Identifiability is a central property of ICA models, with both practical and theoretical implications. Formally, a model is identifiable if its parameters are uniquely determined by the distribution of the observations. 
From a theoretical standpoint, lack of identifiability undermines meaningful interpretation, for instance when recovering an interaction network or estimating independent components. In practice, non-identifiability can lead to ill-posed estimation due to invariance classes and may even degrade generalization \cite{damour2022underspecification}. For these reasons, we explore the identifiability of PLN-ICA models in the following results. All proofs are postponed to Appendix~\ref{app:identifiability_results}.
\begin{proposition}
\label{prop:identifiability_plnica}
    Let $(x, s) = (x_t, s_t)_{1 \leq t \leq T}$ (resp. $(\tilde x, \tilde s)$) such that $x \in \NN^{T \times K}$ and $s \in \RR^{T \times d}$ with $K \geq d > 0$. Let $f$ (resp. $\tilde f$) a mapping from $\RR^d$ to $\RR^K$, and assume that conditionally on $s_t$, $x_t \sim \poisson{\exp{f(s_t)}}$ (resp. $\tilde x_t \sim \poisson{\exp \tilde f(\tilde s_t)}$). Then, if $x$ and $\tilde x$ have the same distribution, $(f(s_t))_{1 \leq t \leq T}$ and $(\tilde f(\tilde s_t))_{1 \leq t \leq T}$ have the same distribution.
\end{proposition}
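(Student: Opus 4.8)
\emph{Reduction.} Set $n := TK$ and identify an element of $\NN^{T\times K}$ (resp.\ $\RR^{T\times K}$) with a vector in $\NN^{n}$ (resp.\ $\RR^{n}$). Collect the conditional Poisson rates into the random vector $\Lambda := \big(\exp f(s_t)^{(k)}\big)_{1\le t\le T,\,1\le k\le K}$, which takes values in $(0,\infty)^{n}$ since $f$ is $\RR^{K}$-valued, and define $\tilde\Lambda := \big(\exp \tilde f(\tilde s_t)^{(k)}\big)_{t,k}$ likewise. Because the coordinatewise exponential is a homeomorphism from $\RR^{n}$ onto $(0,\infty)^{n}$, it suffices to prove that $\Lambda$ and $\tilde\Lambda$ have the same law: the claim for $(f(s_t))_{t}$ and $(\tilde f(\tilde s_t))_{t}$ then follows by pushing forward through the (continuous) coordinatewise logarithm. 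The strategy is to read off the law of $\Lambda$ from the law of $x$ via a Laplace-transform argument.

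\emph{Step 1: generating function of a mixed Poisson vector.} Conditionally on $s$, the entries $x_t^{(k)}$ are independent with $x_t^{(k)}\sim\poisson{\Lambda_{t,k}}$. Using the elementary Poisson identity $\EE{}{u^{X}} = \exp\big(\lambda(u-1)\big)$ for $X\sim\poisson{\lambda}$ and $u\in[0,1]$, I would condition on $s$, factorize over $(t,k)$ using conditional independence, and take the outer expectation; the integrand lies in $(0,1]$, so there is no integrability issue. This yields, for every $u=(u_{t,k})\in[0,1]^{n}$,
\[
 \EE{}{\textstyle\prod_{t,k} u_{t,k}^{\,x_t^{(k)}}} \;=\; \EE{}{\exp\!\big(\textstyle\sum_{t,k}\Lambda_{t,k}(u_{t,k}-1)\big)} \;=\; L_{\Lambda}(\mathbf{1}-u)\eqsp,
\]
where $L_{\Lambda}(w):=\EE{}{\exp(-\langle \Lambda,w\rangle)}$ is the Laplace transform of $\Lambda$, finite on all of $[0,\infty)^{n}$ because $\Lambda>0$; the identity for $\tilde x$ and $L_{\tilde\Lambda}$ is obtained verbatim.

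\emph{Step 2: transporting equality of laws.} The map $u\mapsto \EE{}{\prod_{t,k} u_{t,k}^{x_t^{(k)}}}$ on $[0,1]^{n}$ is the probability generating function of the $\NN^{n}$-valued vector $x$, whose power-series coefficients are exactly the masses $\PP(x=m)$, $m\in\NN^{n}$; hence it is determined by the law of $x$. Thus $x\equallaw\tilde x$ forces $L_{\Lambda}=L_{\tilde\Lambda}$ on $[0,1]^{n}$. Now each of $L_{\Lambda},L_{\tilde\Lambda}$ extends to a function that is separately holomorphic in each coordinate on the tube $\{w\in\mathbb{C}^{n}:\ \mathrm{Re}\,w_j>0,\ 1\le j\le n\}$ --- differentiation under the integral sign is legitimate there, dominating by a bounded function of $\Lambda$ since $\Lambda\ge 0$. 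Applying the identity theorem one coordinate at a time (or Hartogs' theorem), agreement on the open box $(0,1)^{n}$ propagates to agreement on the whole tube, hence on $(0,\infty)^{n}$ and, by continuity, on $[0,\infty)^{n}$. The classical uniqueness theorem for Laplace transforms of finite measures on $[0,\infty)^{n}$ then gives that $\Lambda$ and $\tilde\Lambda$ have the same law, which by the reduction above concludes the proof.

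\emph{Main obstacle.} Steps~1 and the first half of Step~2 are routine bookkeeping; the genuinely delicate point is the passage from ``$L_\Lambda=L_{\tilde\Lambda}$ on the bounded box $[0,1]^n$'' to ``$\Lambda\equallaw\tilde\Lambda$''. A bare appeal to Laplace-transform uniqueness is not available, since the transforms are a priori known only on a bounded set; the analytic-continuation step (equivalently: $w\mapsto L_\Lambda(w)$ is real-analytic on $(0,\infty)^n$, which is connected, and $(0,1)^n$ has nonempty interior there) is what bridges the gap, and one must take some care in stating the domination that makes each coordinate map holomorphic. A secondary point worth spelling out is that no moment assumption is needed: although the moments of $\Lambda$ can be recovered from the factorial moments of $x$, the vector $\Lambda=\exp(f(s))$ need not be moment-determinate (e.g.\ log-normal-type tails), so the argument must go through the Laplace transform rather than moments.
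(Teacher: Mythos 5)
Your proof is correct, and it follows the same overall architecture as the paper's: flatten the $T\times K$ array into a single index (the paper does this in its Lemma~\ref{prop:identifiability_multivariate_poisson_latent_process}), identify the probability generating function of the mixed Poisson vector with the Laplace transform of the rate vector $\Lambda$, invoke Laplace-transform uniqueness for nonnegative random vectors, and push forward through the coordinatewise logarithm. The one substantive difference is how the Laplace transform is obtained on all of $[0,\infty)^n$. The paper evaluates $\EE{}{\prod_t \beta_t^{x_t}}$ for $\beta_t\in(-\infty,1]$, so that $1-\beta_t$ sweeps out $[0,\infty)$ directly and no continuation is needed; you instead restrict to $u\in[0,1]^n$ and bridge the gap by extending $L_\Lambda$ holomorphically to the tube $\{\mathrm{Re}\,w_j>0\}$ and applying the identity theorem coordinate by coordinate. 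Your route costs an extra (standard, and correctly dominated) complex-analysis step, but it buys genuine rigor on a point the paper glosses over: for $\beta_t<-1$ the test function $\prod_t\beta_t^{x_t}$ need not be absolutely integrable when the mixing variable is heavy-tailed --- precisely the log-normal case targeted by PLN, where $\EE{}{|\beta_t|^{x_t}}=\EE{}{\rme^{z_t(|\beta_t|-1)}}$ can be infinite --- so the paper's interchange of expectations and its claim that this quantity is a functional of the law of $x$ alone requires an argument that is not supplied. Your closing remark that moment-determinacy is unavailable for log-normal-type rates, forcing the Laplace route, is also well taken and not made explicit in the paper. In short: same skeleton, but your handling of the domain-extension step is the more careful of the two.
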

Proposition~\ref{prop:identifiability_plnica} establishes that the identifiability class of PLN-ICA models coincides with the invariance class of the corresponding noiseless ICA problem on the log-intensities. This key modeling advantage of PLN models notably allows to alleviate restrictive assumptions found in noise-additive structured ICA \cite{halva2024identifiable}, allowing to compose any identifiable ICA framework with PLN distributions while preserving global identifiability.
Notably, Corollary~\ref{corollary:identifiability_ARPLNICA} shows that the remaining invariances of ARPLN-ICA reduce to standard permutation and rescaling of the columns of the mixing for switching linear ICA. Likewise, further nonlinear PLN-ICA identifiability results can be derived for more complex ICA modeling, for instance by relying on Theorem~2 of \citet{halva2024identifiable}.
\begin{corollary}
\label{corollary:identifiability_ARPLNICA}
    Let $(x, s, u) = (x_t, s_t, u_t)_{1 \leq t \leq T}$ (resp. $(\tilde x, \tilde s, \tilde u_t)$) such that $x \in \NN^{T \times K}$, $s \in \RR^{T \times d}$, $u \in \RR^{T \times C}$ (resp. $\tilde x, \tilde s, \tilde u$) with $K \geq d > 0$ and $C > 0$. Assume $(x, s, u)$, $(\tilde x, \tilde s, \tilde u)$ are distributed according to distinct ARPLN-ICA models, with $\Gamma \in \RR^{K \times d}$ (resp. $\tilde \Gamma$) full rank. Assume $\Sigma_{1}=\cov(s_1,s_1)$ has positive diagonal entries, and that there exist $t_0 \leq T$ and $\ell_0 < t_0 $ such that $\Sigma_{1}^{-1/2}\, \cov(s_{t_0}, s_{t_0 - \ell_0})\, \Sigma_{1}^{-1/2}$ diagonal entries are non-zero pairwise distinct.
    Then, if $x$ and $\tilde x$ have the same distribution, $\Gamma$ and $\tilde \Gamma$ are equal up to column-wise permutation and non-zero rescaling. 
\end{corollary}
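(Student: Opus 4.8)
The plan is to reduce the statement to a second-order blind identification argument in the spirit of AMUSE/SOBI, using the two prescribed covariance matrices as a whitener and as a target for joint diagonalization. First I would apply Proposition~\ref{prop:identifiability_plnica}: since $x$ and $\tilde x$ have the same distribution, the mixed log-intensity processes $y=(\Gamma s_t)_{1\le t\le T}$ and $\tilde y=(\tilde\Gamma\tilde s_t)_{1\le t\le T}$ have the same law, hence identical (cross-)covariances at every pair of times. Under the ARPLN-ICA model the $d$ scalar source processes are mutually independent (component-wise switching and component-wise dynamics), so $\cov(s_t,s_{t'})$ is diagonal for all $t,t'$; moreover $s_t$ is marginally a finite Gaussian mixture, so these covariances exist. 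Writing $\Sigma_1=\cov(s_1,s_1)$ and $D=\cov(s_{t_0},s_{t_0-\ell_0})$ (both diagonal, with tilded analogues), matching second moments of $y$ and $\tilde y$ yields
\[
M_0:=\cov(y_1,y_1)=\Gamma\Sigma_1\Gamma^\top=\tilde\Gamma\tilde\Sigma_1\tilde\Gamma^\top,\qquad
M_1:=\cov(y_{t_0},y_{t_0-\ell_0})=\Gamma D\Gamma^\top=\tilde\Gamma\tilde D\tilde\Gamma^\top,
\]
where $M_1$ is symmetric because $D,\tilde D$ are diagonal, and $\tilde\Sigma_1$ has positive diagonal entries since $\tilde\Gamma$ is full rank and $M_0$ has rank $d$ (same reasoning forces $\tilde\Sigma_1$ invertible, hence positive on the diagonal as a covariance).

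Next I would whiten against $M_0$. Taking the reduced eigendecomposition $M_0=U\Lambda U^\top$ with $U\in\RR^{K\times d}$, $U^\top U=I_d$, $\Lambda$ positive diagonal, set $W=\Lambda^{-1/2}U^\top$, so $WM_0W^\top=I_d$. Because $\Sigma_1,\tilde\Sigma_1$ are invertible and $\Gamma,\tilde\Gamma$ are full rank, $\mathrm{col}(M_0)=\mathrm{col}(\Gamma)=\mathrm{col}(\tilde\Gamma)=\mathrm{col}(U)$, and $Q:=W\Gamma\Sigma_1^{1/2}$ and $\tilde Q:=W\tilde\Gamma\tilde\Sigma_1^{1/2}$ are both orthogonal $d\times d$ matrices. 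Substituting into the identity for $M_1$,
\[
WM_1W^\top=Q\,\big(\Sigma_1^{-1/2}D\Sigma_1^{-1/2}\big)\,Q^\top=\tilde Q\,\big(\tilde\Sigma_1^{-1/2}\tilde D\tilde\Sigma_1^{-1/2}\big)\,\tilde Q^\top.
\]
The hypothesis says precisely that $\Delta:=\Sigma_1^{-1/2}D\Sigma_1^{-1/2}$ has pairwise distinct non-zero diagonal entries, so $WM_1W^\top$ is symmetric with simple spectrum. By uniqueness of the spectral decomposition in the non-degenerate case, there is a permutation matrix $P$ and a diagonal sign matrix $E$ with $\tilde Q=QPE$ (and $\tilde\Sigma_1^{-1/2}\tilde D\tilde\Sigma_1^{-1/2}=P^\top\Delta P$, so the tilded normalized autocovariances are likewise non-zero and distinct).

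Finally I would unwind the two changes of variable. From $\tilde Q=QPE$ we get $W(\tilde\Gamma\tilde\Sigma_1^{1/2}-\Gamma\Sigma_1^{1/2}PE)=0$; since the columns of both $\tilde\Gamma\tilde\Sigma_1^{1/2}$ and $\Gamma\Sigma_1^{1/2}PE$ lie in $\mathrm{col}(U)$, on which $W$ acts injectively (as $Uc\mapsto\Lambda^{-1/2}c$), it follows that $\tilde\Gamma\tilde\Sigma_1^{1/2}=\Gamma\Sigma_1^{1/2}PE$, i.e. $\tilde\Gamma=\Gamma\big(\Sigma_1^{1/2}PE\tilde\Sigma_1^{-1/2}\big)$. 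Conjugating a diagonal matrix by the signed permutation $PE$ again gives a diagonal matrix, so $\Sigma_1^{1/2}PE\tilde\Sigma_1^{-1/2}$ rewrites as $PE$ times an invertible diagonal matrix; absorbing the $\pm1$ signs of $E$ into that diagonal factor shows $\tilde\Gamma$ equals $\Gamma$ post-multiplied by a permutation matrix times a non-zero diagonal matrix, which is exactly the claimed indeterminacy. The main obstacle, I expect, is the bookkeeping in the rectangular regime $K>d$: one must be careful that $M_0$ (hence the whitener $W$) sees exactly the common column space $\mathrm{col}(\Gamma)=\mathrm{col}(\tilde\Gamma)$, so that whitening is lossless and invertible on that subspace, and that the finite Gaussian-mixture marginals do supply the finite second moments invoked throughout; once the distinctness assumption is in place, the spectral-uniqueness step is standard.
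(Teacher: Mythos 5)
Your proof is correct and follows essentially the same route as the paper: reduce to equality in law of the mixed log-intensities via Proposition~\ref{prop:identifiability_plnica}, then run a SOBI-style second-order argument that whitens against $\cov(y_1,y_1)$ and applies spectral uniqueness to the whitened lag-covariance, whose distinct diagonal entries force the residual orthogonal factor to be a signed permutation. The only difference is presentational — you whiten in observation space via the reduced eigendecomposition of $M_0$, whereas the paper applies a left inverse of $\Gamma$ and diagonal Cholesky factors of the source covariances (Lemmas~\ref{lemma:columns_swap_existing} and~\ref{lemma:F_PS}) — but both yield the same orthogonal-conjugation-of-a-diagonal-matrix step and the same conclusion.
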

The assumptions underlying Corollary~\ref{corollary:identifiability_ARPLNICA} are generically satisfied, as the indeterminacy of $\Gamma$ arises only on a Lebesgue measure zero subset of parameter configurations. In fact, the conditions that the initial covariance of the sources is positive definite, and that the whitened lag-covariance has nonzero pairwise distinct diagonal entries can be interpreted as mild non-degeneracy assumptions. Notably, Appendix~\ref{app:identifiability_ARPLN_ICA} shows that in the homogeneous case $C = 1$, the AR coefficient matrix $B$ enforces the required whitening condition whenever it has pairwise distinct non-zero diagonal entries, which only fails on a measure-zero set under any absolutely continuous prior over $B$.

These limited indeterminacies on $\Gamma$ enable a principled characterization of interactions between the latent components and the counted entities, where the relationship is given by the log-intensity loadings of entity $k \leq K$ at time $t \geq 1$ as
\begin{equation}
    \label{eq:log_intensity}
    z_t^{(k)} = \sum_{i=1}^d \Gamma_i^{(k)} s_t^{(i)} \eqsp.
\end{equation}
Since the mixing matrix $\Gamma$ is identifiable only up to column-wise permutation and rescaling, entity-component relationships are best described component-wise, in terms of relative loading magnitudes, and using a fixed ordering of components. To handle the rescaling ambiguity, one may fix the column norms of $\Gamma$ without loss of generality, since any column-wise rescaling can be absorbed by an inverse rescaling of the corresponding latent component, together with the induced reparameterization of the latent AR dynamics.

\subsection{Variational inference and learning}
\label{sec:variational_inference}
As with standard PLN models, ARPLN-ICA yields an intractable observed likelihood $p_\btheta(x)$, and the posterior $p_\btheta(z\mid x)$ does not admit a closed form. While Monte Carlo estimators can approximate the likelihood and its gradients, these methods entail practical challenges and substantial computational cost. Conversely, variational inference has shown efficient training and fast sampling upsides \cite{blei2017variational}, making them favorable alternatives in practice. Variational training consists in approximating the posterior with a tractable proxy $q_{\bphi}(z\mid x)$, then maximizing a surrogate objective called evidence lower bound (ELBO):
\begin{equation}
    \ELBO(x; \btheta, \bphi) = \mathds{E}_{q_{\bphi}(\cdot \mid x)}[\log p_\btheta(x, z) - \log q_\bphi(z \mid x)] \eqsp.
\end{equation}

\paragraph{Variational proxy.}
To learn the parameters of ARPLN-ICA, we consider a structured variational family that factorizes across latent coordinates and separates the continuous sources from the discrete regimes given the observed counts,
\begin{equation}
\label{eq:variational_joint_law}
    q_\bphi(s,u\mid x)
    = \prod_{i=1}^d q_\bphi\big(s_{1:T}^{(i)}\mid x\big)\,
      q_\bphi\big(u_{1:T}^{(i)}\mid x\big)\eqsp.
\end{equation}
This approximation assumes coordinate and source-label conditional independence, while allowing arbitrary temporal dependencies within each factor, making it further structured than a typical mean-field approximation (\citet{blei2017variational}, Appendix~\ref{app:mean_field}). Motivated by the latent Markov structure of the true posterior distribution, we further parameterize each variational source factor as a linear Gaussian Markov chain,
\begin{equation}
\label{eq:variational_sources_law}
    \begin{split}
        &q_\bphi\big(s_{1:T}^{(i)}\mid x\big)
        = \gaussianat{s_1^{(i)}}{m^{(i)}(x)}{\tilde \psi_{1}^{(i)}(x)} \\
        &\times \prod_{t=1}^{T-1}
        \gaussianat{s_{t+1}^{(i)}}{\tilde B_{t+1}^{(i)}(x)\, s_t^{(i)} + \tilde b_{t+1}^{(i)}(x)}{\tilde \psi_{t+1}^{(i)}(x)} ,
    \end{split}
\end{equation}
where $m^{(i)}(\cdot), \tilde B_{t}^{(i)}(\cdot)$, and $\tilde b_{t}^{(i)}(\cdot)$ are functions of the observations mapping to $\mathbb{R}$ and $\tilde \psi_t^{(i)}(\cdot)$ maps to $\mathbb{R}_{>0}$, with parameters $\bphi$. This construction retains the Markov structure of the posterior while enabling efficient inference via marginal recursions (see Appendix~\ref{app:elbo_plnica}). For the discrete regimes, selecting the variational proxy in the exponential family yields an optimal closed-form distribution thanks to the model's conjugacy \cite{johnson2016composing}, enabling fast and optimal inference. See Appendix~\ref{proof:cavi_update_labels} for its derivation.

\begin{algorithm}[b]
\caption{Stochastic VEM for ARPLN-ICA}
\label{alg:svem_arplnica}
\begin{algorithmic}[0]
\REQUIRE Observations $x_{1:T}$
\STATE Initialize $\btheta^{(0)}$, $\Gamma^{(0)}$, and $\bphi^{(0)}$
\FOR{$h = 0, \dots, H-1$}
    \STATE \textbf{Expectation step.}
    \STATE Compute recursive marginals of $q_{\bphi^{(h)}}(s\mid x)$
    \STATE $q_{\bphi^{(h)}}(u\mid x) \leftarrow \textsc{CAVI}\big(\btheta^{(h)}, \Gamma^{(h)}, \bphi^{(h)}\big)$
    \vspace{.25cm}
    
    \STATE \textbf{Maximization step.}
    \STATE $\big[\Gamma^{(h+1)}, \bphi^{(h+1)}\big] \leftarrow \textsc{SGD}\big(x;\btheta^{(h)}, \Gamma^{(h)}, \bphi^{(h)}\big)$
    \STATE $\btheta^{(h+1)} \!\leftarrow \!\textsc{ClosedForm}\big(x;\btheta^{(h)}, \bphi^{(h+1)}\big)$
\ENDFOR
\STATE \textbf{return} $(\btheta^{(H)}, \Gamma^{(H)}, \bphi^{(H)})$
\end{algorithmic}
\end{algorithm}
\paragraph{Learning procedure.}
We derive a closed-form expression of the ARPLN-ICA ELBO in Appendix~\ref{app:elbo_plnica} thanks to tractable forward recursions of the variational distribution. The result highlights close connections with the PLN ELBO of \citet{chiquet2019variational}, up to additional contributions induced by the temporal Markov structure of ARPLN-ICA. We estimate the model parameters by maximizing the ELBO with a stochastic variational EM (VEM) procedure that alternates between (i) computing the variational factors and evaluating the ELBO, and (ii) updating the generative parameters, as summarized in Algorithm~\ref{alg:svem_arplnica}. The variational update of the latent labels leverages the variational proxy of the sources to compute $q_\bphi(u\mid x)$ via coordinate-ascent variational inference (CAVI; see Appendix~\ref{proof:cavi_update_labels}). Conditional on these variational quantities, most parameters in $\btheta$ admit closed-form optima (see Appendix~\ref{app:closed_form_update_theta}). While the mixing matrix $\Gamma$ does not yield a closed-form update, we optimize it with a stochastic gradient step, jointly with the neural networks parameterizing the variational sources distribution.

\paragraph{Model implementation.}
In practice, specifying the variational proxy requires to map the count sequence $x_{1:T}$ to the parameters of the variational factors in Eq.~\eqref{eq:variational_sources_law}. However, the time-dependent dimension of the observation can make the learning procedure challenging. To capture temporal dependencies in a way that is consistent with the Markov structure of the posterior, we encode $x_{1:T}$ into a sequence of fixed-dimensional representations $\tilde x_t\in\RR^{d_e}$ using a gated recurrent unit (GRU), similar to \citet{chaussard2025tree}. For each time step $t$, the variational parameters are then predicted from $(x_t, \tilde x_t)$, mimicking a residual connection to preserve local information. This amortized parameterization illustrated in Figure~\ref{fig:architecture_encoder} can scale to any temporal input, enabling fast inference while leveraging the flexibility of neural networks to represent rich dependencies.
Furthermore, while Corollary~\ref{corollary:identifiability_ARPLNICA} establishes the identifiability of the model, the mixing invariances can hinder optimization through underspecification effects \cite{damour2022underspecification}. To address the scaling indeterminacy, we propose to restrict the equivalence class to signed permutations by enforcing a column-wise normalization of the mixing matrix during training, that is, each column of Gamma has unit $\ell_2$-norm during training. This constraint is lightweight and compatible with stochastic optimization, and does not restrict the set of representable ARPLN-ICA models, as the overall scale of each source can still be expressed through the unconstrained mean and variance parameters of the latent linear Gaussian processes. 
To facilitate use and reproducibility, we provide an efficient PyTorch implementation of our model with GPU support, available on GitHub\footnote{\href{https://github.com/AlexandreChaussard/PLNICA}{\texttt{github.com/AlexandreChaussard/PLNICA}}}, together with Python scripts allowing to reproduce our experiments.

\subsection{Fixed effects and measurement noise modeling}
\label{sec:model_extensions}
In many count-data applications, natural discrepancies and systematic exogenous factors such as sampling effort (offset) can partially explain variability over time. Following \citet{chiquet2019variational}, we incorporate these effects additively in the log-intensity of the emission model. Let $\eta \in \RR^K$ denote entity-specific fixed effects (baseline log-abundances) and let $o_t\in\RR$ be a known sampling-effort offset at time $t$ (e.g., library size, sequencing depth, exposure). Conditionally on $s_t$, we model the observed counts as
\begin{equation}
\label{eq:covariates_offsets_emission}
    x_t \sim \poisson{\exp\{\Gamma s_t + o_t\mathbf{1}_K + \eta\}}\eqsp,
\end{equation}
where $\mathbf{1}_K$ denotes the $K$-dimensional vector of ones.
Notably, as these deterministic effects enter additively in the log-intensity, they cancel in the centered mixed process, maintaining the validity of the proof of Corollary~\ref{corollary:identifiability_ARPLNICA}. Inference for the fixed-effects coefficients $\eta$ is conducted jointly with $\Gamma$ through stochastic optimization.

\section{Simulations}
\label{sec:simulations}
\paragraph{Mixing recovery simulation setup.} 
Corollary~\ref{corollary:identifiability_ARPLNICA} establishes the identifiability of the mixing $\Gamma$ under mild assumptions.
To support this asymptotic result in a finite-sample regime, we study the empirical recovery of $\Gamma$ under three simulated settings. The first two settings evaluate the effect of column colinearity on recovery: a \emph{moderate-coherence} regime where columns are weakly colinear, and a \emph{high-coherence} regime where columns are strongly colinear. Then, we consider a moderate-coherence, \emph{low-excitation} regime in which the observed counts exhibit a high proportion of zeros. 
Across all scenarios, we fix $K = 12$, $T = 20$, $C = 1$, $d = 5$, and sample $n=150$ trajectories from an oracle model. We then train $10$ ARPLN-ICA models with same latent dimension $d$ and fixed variational architecture, with parameters initialized with different random initializations. Fit to the oracle distribution is quantified using the sliced Wasserstein distance, while recovery of $\Gamma$ is assessed column-wise via cosine similarity after column-wise normalization and optimal matching to mitigate signed permutation invariance (see Appendix~\ref{app:mixing_matching}). 
We additionally report mixing recovery obtained with two widely used linear ICA methods: \textsc{UwedgeICA}, a second-order identification approach for temporal ICA \cite{pfister2019robustifying}, and \textsc{Picard}, a robust general-purpose ICA algorithm \cite{ablin2018faster}. For comparability, both baselines are fit on log-counts, which under ARPLN-ICA model yield an approximately linear ICA task with mixing $\Gamma$ (see Appendix~\ref{app:linear_ICA_baselines}). Since these methods only estimate $\Gamma$, they do not define generative models, and therefore we do not report distributional fit to the oracle.
Finally, we compare our structured variational inference procedure (AR, see Eq.\eqref{eq:variational_sources_law}) to a classical mean field (MF) approximation \cite{blei2017variational}, in which $q_\bphi(s^{(i)}_{1:T} \mid x)$ factorizes along time (see Appendix~\ref{app:mean_field}). Additional details on the simulation design, evaluation metrics and compared methods are provided in Appendix~\ref{app:simulated_study_material}.

\paragraph{Empirical recovery of diverse mixing.}
\begin{figure*}
    \centering
    \includegraphics[width=1.\linewidth, trim={0 0.25cm 0 0}, clip]{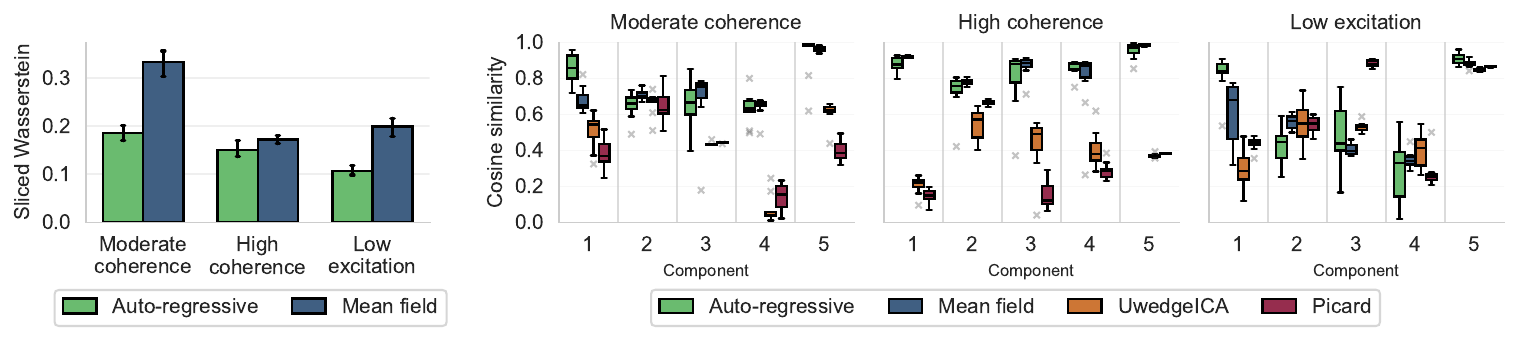}
    \caption{Generative distribution quality and mixing recovery in simulation scenarios with ARPLN-ICA, comparing Auto-Regressive (AR) and Mean-Field (MF) inference of the mixing in ARPLN-ICA, and linear ICA mixing estimation algorithms. (Left) Sliced Wasserstein evaluates the distance to the baseline distribution with the fitted model (mean with 95\% CI; lower is better). (Right) Cosine similarity evaluates the absolute scalar product between the columns of the baseline mixing and the fitted mixing (1: recovery, 0: orthogonality).}
    \label{fig:identifiability_by_scenario}
\end{figure*}
\begin{figure}[b!]
    \centering
    \includegraphics[width=1.\linewidth, trim={0 0.25cm 0 0}, clip]{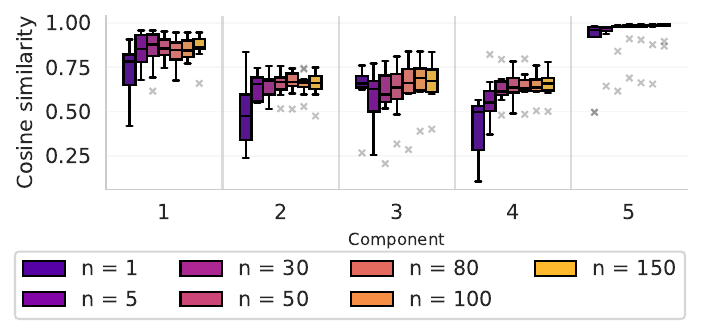}
    \caption{Mixing components recovery with varying training size $n$ using the Auto-Regressive variational approximation in the moderate coherence scenario. Recovery is assessed via cosine similarity per component over $10$ training with different initializations.}
    \label{fig:identifiability_varying_n_easy_case}
\end{figure}
Figure~\ref{fig:identifiability_by_scenario} summarizes recovery performance across the three simulation settings. Under both moderate and high coherence, the proposed structured AR inference reaches partial component-wise recovery with the oracle mixing matrix, with average cosine similarities of $74.6\%$ and $84.5\%$, respectively. Despite the reduced separation forced in \emph{high coherence} setting, the procedure still reliably discriminates between closely aligned columns. In the low-excitation regime, recovery becomes more heterogeneous across components, with the less discriminative components relative to the log-intensity contribution being the hardest to identify  (see Figure~\ref{fig:log_intensity_scenarios}, Appendix~\ref{app:gamma_effect}). Overall, these results suggest that recovery is primarily driven by the magnitude of the latent log-intensities: when components induce only subtle variations in the counts, finite-sample identification is more challenging. Additionally, \textsc{UwedgeICA} and \textsc{Picard} generally underperform, with respective average recovery of $46\%$ and $44\%$ across scenarios, indicating that linear ICA on log-counts is sensitive to observation-model mismatch: the surrogate $\log$-counts only approximately fits the additive linear ICA framework while introducing heteroskedasticity, which degrades identification even when leveraging temporal structure (\textsc{UwedgeICA}) or robust estimation schemes (\textsc{Picard}), showing the need for count-specific ICA models.

Beyond parameter recovery, the structured variational family yields clear improvement in fit to the oracle generative model, as measured by the sliced Wasserstein, compared to a mean-field approximation. In contrast, both inference procedures achieve comparable levels of recovery of $\Gamma$ across scenarios.
Taken together, these findings support the use of structured variational approximations to better capture temporal dependencies and improve distributional fit.

\paragraph{Effect of the number of training trajectories.} We evaluate the dependence of mixing recovery on sample size by varying the number of training trajectories $n$ in the moderate-coherence regime. As shown in Figure~\ref{fig:identifiability_varying_n_easy_case}, the recovery of $\Gamma$ improves consistently with $n$, with diminishing returns beyond $n \approx 80$ in this setting. The resulting plateau in partial recovery may reflect practical limitations of amortized variational inference, including the variational approximation gap, amortization-induced bias, and optimization effects \cite{shu2018amortized}. Appendix~\ref{app:scalability_timings} provides an additional timing evaluation in the same setting, illustrating the dependence of wall-clock runtime on $n$ and $T$.

\section{Illustration on microbiome data}
\label{sec:illustration_microbiome}
\subsection{Microbiome dataset and study design}
\paragraph{Gnotobiotic mice dataset.} Microbiome studies provide a quantitative view of microbial communities and are increasingly used to understand disease mechanisms and guide treatment conception \citep{schmidt2018human, asnicar2024machine}. Genome sequencing using high-throughput metagenomic pipelines allows to explore the composition of the gut microbiome, forming abundance profiles over microbial taxa living in the gut. In this work, we consider the \textit{in vivo} longitudinal dataset of \citet{bucci2016mdsine}, which measures gut microbiome dynamics in response to \textit{Clostridium difficile} infection. The study tracks $n = 5$ mice for $56$ days, with controlled gut microbial community (gnotobiotic). Samples are collected approximately every two days, yielding $T=26$ microbiome count observations per mouse. Two controlled perturbations induce distinct dynamical regimes: mice are first colonized with $K = 12$ human commensal species (index $t = 0$) and allowed $28$ days of rest to reach a stable community; at day $28$ (index $t=13$), all mice are infected with \textit{C. difficile} spores and monitored for an additional $28$ days. Overall, this dataset provides a challenging modeling task in low-sample size regime with dynamic shifts induced by perturbations, matching the context of our model.

\paragraph{Number of ICA components.} Choosing the number of latent components $d$ is a central modeling decision in ICA. In our experiments, we set $d=4$ to prioritize interpretability and facilitate visualization of the learned trajectories and loadings. This choice is not claimed to be optimal; rather, it reflects a practical trade-off between descriptive resolution and readability. We provide a sensitivity analysis and selection procedure for $d$ in Appendix~\ref{app:selection_of_d_analysis} based on reconstruction performance, which indicates stable behavior across a range of values of $d$. Notably, qualitatively similar component-level interpretations can be obtained for larger $d$, albeit with reduced visual tractability. We further assess the stability of the analysis at $d=4$ in Appendix~\ref{app:stability_and_medoid}.

\paragraph{Training procedure.} In the following experiments, we set the latent dimension to $d=4$ sources and use $C=2$ switching regimes. We use the encoder architecture employed in the simulation study (Figure~\ref{fig:architecture_encoder}, Appendix~\ref{app:gamma_effect}) and mitigate offset noise by considering logsum effects (see Appendix~\ref{app:offset}). Given the small number of subjects, we regularize training using weight decay via AdamW \cite{loshchilov2019adamw}. Full architectural and optimization details are provided in Appendix~\ref{app:design_analysis_gnotobiotic}. All evaluations are performed under leave-one-out cross-validation (LOO) across mice.

\begin{figure}[b!]
    \centering
    \includegraphics[width=.9\linewidth]{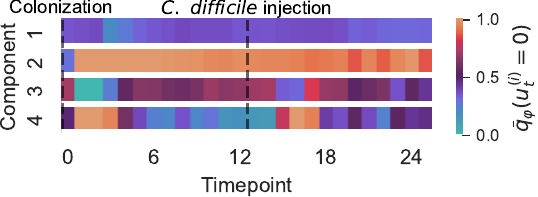}
    \caption{Average probability of being in latent state $0$ for each component, computed on test mice across LOO CV. The probability is computed from empirical inhomogeneous derivation in Appendix~\ref{app:empirical_switching}.  Evaluation per test mouse is provided in Figure~\ref{fig:MDSINE_perturbation_components_all_mice}.}
    \label{fig:MDSINE_perturbation_components}
\end{figure}
\begin{figure*}
    \centering
    \begin{subfigure}{0.58\textwidth}
        \includegraphics[width=1.\linewidth]{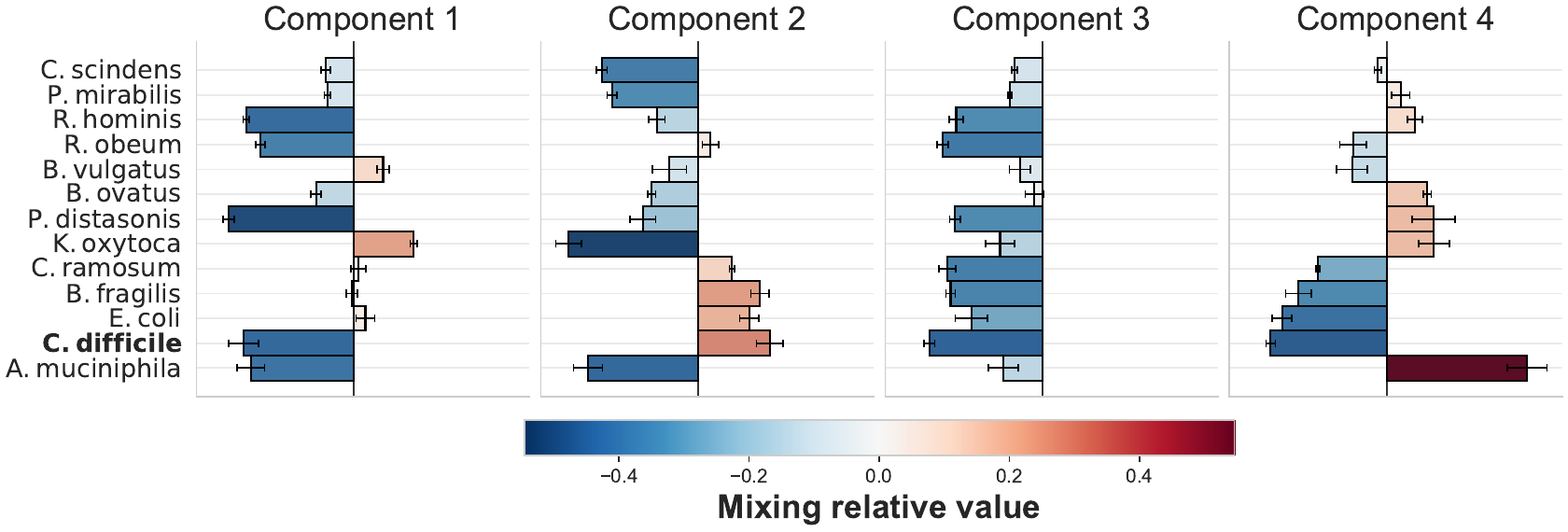}
    \end{subfigure}
    \begin{subfigure}{0.39\textwidth}
        \includegraphics[width=1.\linewidth]{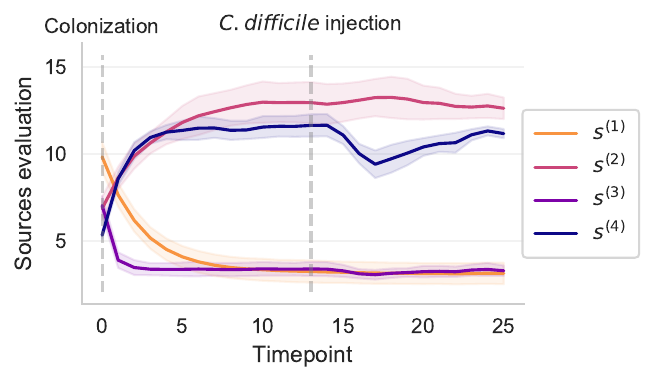}
    \end{subfigure}
    \caption{Independent components inferred by ARPLN-ICA on the gnotobiotic mice dataset. (Left) Taxa loading per component (column) of the medoid mixing across LOO CV, with standard deviation to the medoid across folds. (Right) Latent components evolution through time for test mice, aligned with the medoid mixing across LOO, with standard deviation. The fourth component displays a sharp reaction to the perturbation, consistent with Figure~\ref{fig:MDSINE_perturbation_components}, with taxa loadings separating opportunistic pathogens from healthy and commensal microbial species. See Appendix~\ref{app:stability_and_medoid} for the computation of the medoid mixing and stability of components.}
    \label{fig:interactions_MDSINE}
\end{figure*}
\subsection{Unsupervised perturbation modeling}
\label{sec:perturbation_detection}
Exogenous perturbations, such as medical interventions or environmental changes, can induce abrupt shifts in gut microbiome dynamics. While most existing approaches assume that perturbation times are known \cite{gibson2018robust,gibson2025learning}, ARPLN-ICA can leverage its switching dynamics to infer hidden regimes in an unsupervised framework.
In the gnotobiotic mice study, two perturbations are documented: the initial gut colonization and the subsequent \textit{C. difficile} infection. Accordingly, we consider $C=2$ switching states to model pre- and post- infection regimes. Figure~\ref{fig:MDSINE_perturbation_components} reports the posterior probability of being in regime $0$ over time on test samples in the LOO CV. Following colonization at $t=0$, all components exhibit a transient departure from their starting regime before stabilizing around $t=5$, consistent with the establishment of a commensal community following an initial proliferation phase. After \textit{C. difficile} infection at $t=13$, components~3 and~4 show renewed regime switch, while the other components remain comparatively stable. These components then return to a stable regime by $t=18$, aligning with the recovery phase reported in \citet{bucci2016mdsine}, suggesting that these components capture infection-driven mechanisms. 

\subsection{Interaction modeling}
The mixing matrix $\Gamma$ defines a linear map from latent sources to the Poisson log-intensities modeling each microbial abundance. Notably, given the log-intensities in Eq.~\eqref{eq:log_intensity}, the $k$-th row of $\Gamma$ summarizes how strongly taxon $k$ relies on each latent source to explain its abundance, with relatively larger absolute coefficients indicating stronger association, while the $i$-th column characterizes the taxa-level association patterns with source $i$. Since $\Gamma$ is column-normalized during inference, Corollary~\ref{corollary:identifiability_ARPLNICA} implies that it is identifiable only up to a signed permutation; as a result, taxa-source associations should be interpreted in terms of \emph{co-variation} rather than uniquely oriented effects. To ensure meaningful comparisons across LOO-CV folds, we align all estimated mixing matrices to a reference medoid, defined as the fold-specific $\Gamma$ with the highest average pairwise similarity across folds. Additional details on medoid construction and component stability are provided in Appendix~\ref{app:stability_and_medoid}.

Figure~\ref{fig:interactions_MDSINE} reports the resulting taxa co-variation patterns for each component in the gnotobiotic experiment, together with the temporal evolution of the inferred latent sources. Focusing on component~4, the infection-related attribution from Section~\ref{sec:perturbation_detection} is corroborated by the abrupt trend shift in Figure~\ref{fig:interactions_MDSINE}, accounting for $67\%$ of the source-wise absolute mean shift between commensal ($8 \leq t \leq 13$) and infection phases ($14 \leq t \leq 19$). Moreover, taxa loadings of component~4 indicate strong positive co-variation between \textit{C. difficile} and taxa frequently reported as dysbiosis-associated bacteria or opportunistic pathogens of the gut microbiome, including \textit{E. coli}, \textit{B. fragilis}, and \textit{C. ramosum} \cite{fujimoto2020metagenome,patrick2022tale}. In contrast, \textit{C. difficile} exhibits negative co-variation with \textit{A. muciniphila} which is often linked to anti-inflammatory profiles and metabolic health \cite{derrien2017akkermansia}. 
Overall, ARPLN-ICA provides an interpretable view on microbial interactions consistent with \citet{bucci2016mdsine} analysis, with regime switching dynamics helping to localize components that exhibit distinct functional responses over time, and lower dimensional representations facilitating the interpretation of complex dynamics.

\subsection{Evaluation on microbial forecasting}
Beyond providing qualitative interpretations matching domain-specific knowledge, the interest of a dynamical model can be assessed through its ability to capture temporal structure that generalizes across unseen trajectories. Forecasting provides such an auxiliary quality-of-fit task, secondary to the representation learning objective of ICA, by comparing held-out trajectory segments with model predictions. To that end, we provide a quantitative benchmark against standard forecasting methods used in microbiome analysis as well as temporal ICA baselines. This evaluation requires an online inference procedure and an appropriate sampling scheme which are detailed in Appendix~\ref{app:forecasting}.

\begin{figure}[b!]
    \centering
    \includegraphics[width=1.\linewidth, trim={0 0.25cm 0 0}, clip]{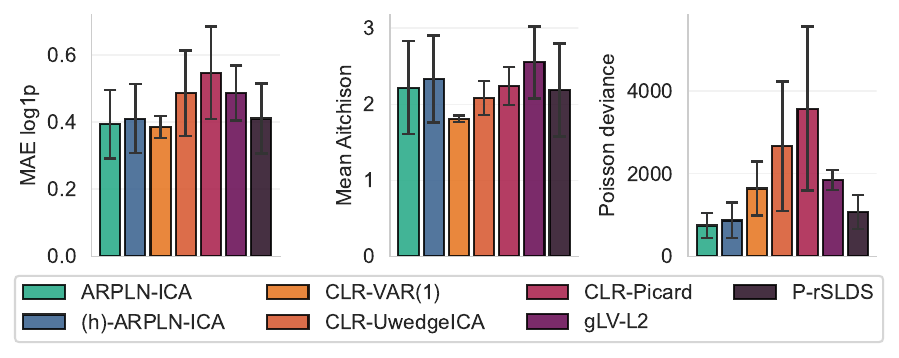}
    \caption{Average forecasting performance over history length $t_0 \in \{2,\dots,20\}$ on the gnotobiotic mice dataset. Boxplots show the distribution across LOO folds of the mean scoring value obtained by varying $t_0$ and forecasting the remainder of the trajectory. Lower is better for all metrics.}
    \label{fig:MDSINE_forecast_benchmark}
\end{figure}

\paragraph{Evaluation design.}
We compare ARPLN-ICA forecasting against a set of complementary baselines typically encountered in microbiome analysis, covering both statistical and mechanistic approaches~\cite{lyu2023methodological} detailed in Appendix~\ref{app:forecasting_baselines}.
We notably include a time-homogeneous ARPLN-ICA ((h)-ARPLN-ICA) as an ablation of regime modeling, a vector auto-regressive model VAR(1) suited to microbiome data (CLR-VAR(1), see \citet{zheng2017dirichlet}), \textsc{UwedgeICA} \cite{pfister2019robustifying} and \textsc{Picard} \cite{ablin2018faster} as classical ICA preprocessing mixed with CLR-VAR(1) for forecasting, an auto-regressive Poisson log-link model with two recurrent linear switching states (P-rSLDS, see \citet{linderman2017bayesian}) and gLV-L2 \cite{stein2013ecological} as a microbiome specialized mechanistic model.
To ensure comparability across methods, all baselines are evaluated in the same offset setting, where predicted compositions are mapped back to count space using the observed test-time offset (see Appendix~\ref{app:offset}). 
To assess forecasting performance, we evaluate complementary statistics that capture distributional accuracy for counts (Poisson deviance), pointwise prediction error (MAE log1p), and microbial community-level fidelity (Aitchison distance). 
Computations for each statistics are provided in Appendix~\ref{app:metrics_trajectories}. Evaluation is performed through leave-one-out cross validation. Architecture and hyperparameters choices are discussed in Appendix~\ref{app:training_protocol_forecasting}.

\paragraph{Forecasting results.} Figure~\ref{fig:MDSINE_forecast_benchmark} summarizes forecasting performance across methods. ARPLN-ICA is competitive across metrics and achieves a clear gain in Poisson deviance, consistent with P-rSLDS which both match better the discrete count likelihood. On this task, explicitly modeling $C = 2$ regimes brings small gains, with the switching variant slightly outperforming its time-homogeneous counterpart on average. Overall, while PLN-ICA is not primarily designed for forecasting, these results indicate that it captures the longitudinal dynamics at a level comparable to forecasting-oriented models and remains competitive with ICA baselines, reinforcing its value as an interpretable analysis tool for temporal count data.

\begin{figure*}
    \centering
    \begin{subfigure}{0.39\textwidth}
        \centering
        \raisebox{.34\height}{%
            \includegraphics[width=1.05\linewidth]{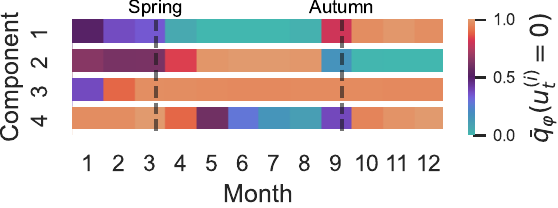}%
        }
    \end{subfigure}
    \hfill
    \begin{subfigure}{0.58\textwidth}
        \centering
        \includegraphics[width=\linewidth]{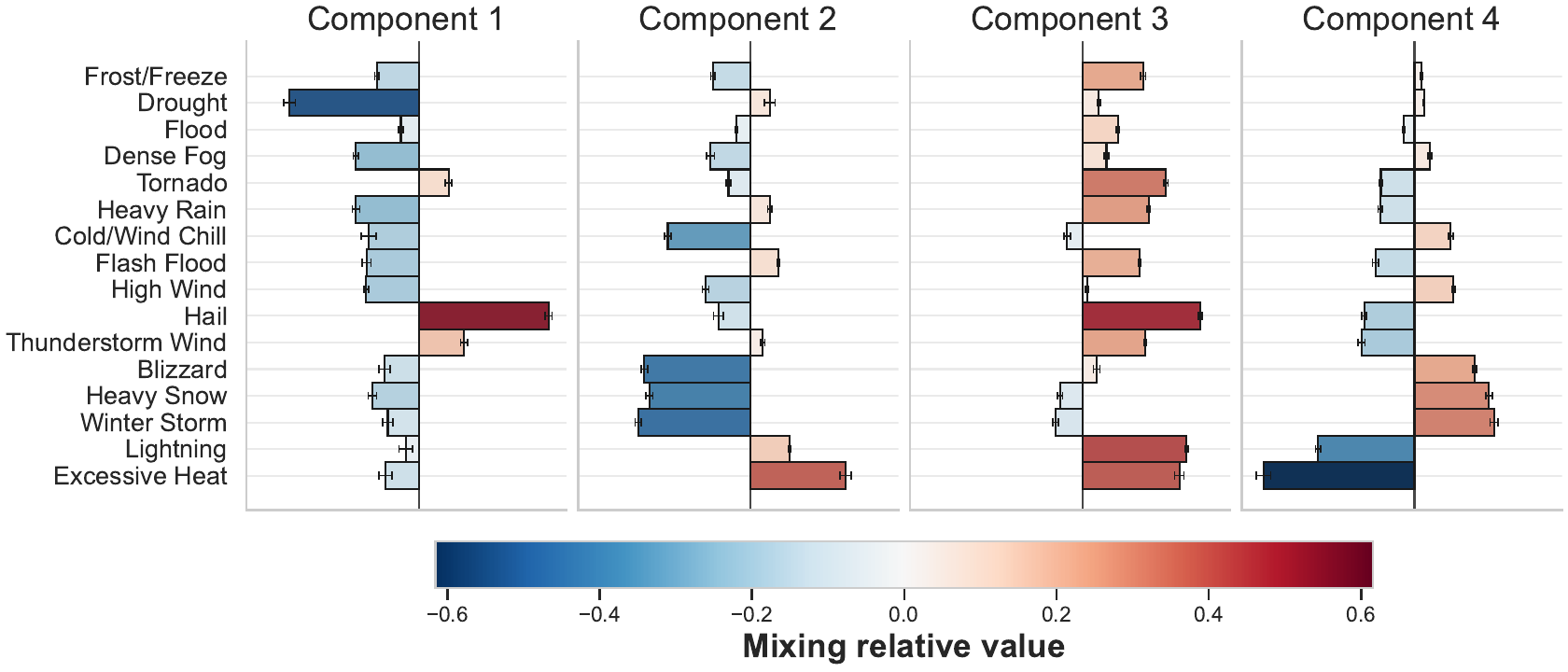}
    \end{subfigure}
    \caption{Regime switches and linear mixing loadings inferred by ARPLN-ICA on the Storm Events dataset.
    (Left) Average probability of being in latent state $0$ for each component across LOO CV, see Appendix~\ref{app:empirical_switching}. Components 1, 2, 4, highlight regime changes compatible with seasonal cycles.
    (Right) Weather event loading per component (column) of the medoid mixing across LOO CV, with standard deviation to the medoid across folds. Components highlight known seasonal patterns and meteorological contrasts at different scales.   
    }
    \label{fig:interactions_StormEvents}
\end{figure*}
\section{Illustration on climate data}
\subsection{Climate dataset and study design}
\paragraph{Storm Events dataset.} The NOAA Storm Events database records severe weather hazards occurring across the United States, yielding multivariate count profiles of significant or rare weather phenomena \cite{noaa1996storm}. In this work, we consider yearly panels from $2000$ to $2025$ ($n = 25$) for $K = 16$ continental event types aggregated at the monthly scale ($T = 12$). The resulting multivariate count time series exhibit pronounced heterogeneity across event types, with strong seasonal structure and many low or near-zero counts (see Appendix~\ref{app:noaa_dataset_description}). As dependencies between event types may reflect shared meteorological drivers, this dataset is well suited to analyzing recurrent co-occurrence patterns and regimes with ARPLN-ICA.

\paragraph{Training procedure.} 
As in the microbiome experiments, our model is trained using LOO CV. We set $d=4$ components as a trade-off between reconstruction quality and interpretability (see Appendix~\ref{app:noaa_model_selection_and_stability}). To mitigate climate change trends, ARPLN-ICA models are trained with plug-in logsum offset estimators. To assess the transferability of our inference procedure, we use the same parameterization as in the microbiome experiments, namely $C=2$ and the encoder architecture of Appendix~\ref{fig:architecture_encoder}, trained with AdamW as detailed in Appendix~\ref{app:design_analysis_gnotobiotic}. The stability of the ARPLN-ICA decomposition is further evaluated in Appendix~\ref{app:noaa_model_selection_and_stability}.

\subsection{Severe weather events interactions and regimes}

Figure~\ref{fig:interactions_StormEvents} reports the ARPLN-ICA component-wise switching states across the year together with the corresponding mixing loadings. The inferred states are consistent with a broad warm--cold seasonal contrast, with components $1$, $2$, and $4$ switching between spring and autumn, suggesting that the model captures large-scale seasonal structure. The ICA loadings further recover interpretable meteorological associations: component $1$ contrasts severe convective hazards with moisture-related events \cite{barton2025soil}; component $2$ opposes excessive heat with winter-related hazards; component $3$ captures a compound warm-season weather; and component $4$ contrasts seasonally opposed synoptic winter hazards with heat-related convective events. Together, these results show that ARPLN-ICA can disentangle seasonal regime changes from weather hazard occurrences, providing interpretable representations of climate event.

\section{Conclusion}
\label{sec:discussion}
In this work, we introduced a generative ICA framework for temporal count data, coupling count-specific modeling with perturbation-aware switching dynamics. To support interpretability, we established identifiability guarantees, complemented with finite-sample evidence in a linear ICA model setting. The practical value of the approach was illustrated on two temporal count datasets: a gut microbiome study, where the inferred regimes align with clinical perturbations and reveal biologically meaningful associations; and a severe weather hazard dataset, which recovered recurrent seasonal structure and known meteorological patterns. An auxiliary forecasting task on microbiome data further showed that ARPLN-ICA is competitive in predicting count trajectories, providing quantitative support for the qualitative analyses. Overall, the model yields principled representations of temporal count data, supporting its use for representation learning and exploratory analysis.

Meanwhile, our approach opens several research perspectives. Although Proposition~\ref{prop:identifiability_plnica} applies to nonlinear ICA, the present method focuses on linear mixing, where identifiability applies under mild assumptions and feature--source associations remain interpretable through the loadings. A natural extension is to consider richer nonlinear mixing, for instance by relaxing the assumptions of Theorem~2 in \cite{halva2024identifiable}, while preserving forms of interpretability. Subsequently, parameter learning relies on amortized variational inference, which can induce an objective gap relative to maximum likelihood estimation \cite{shu2018amortized}. Establishing formal generalization guarantees for such variational procedures would therefore be valuable, in the spirit of \citet{tang2021empirical}. In addition, while our microbiome case study is encouraging, extending the framework to real-world cohorts remains nontrivial as clinical datasets often exhibit missing data which would require methodological developments tailored to these settings.

\section*{Acknowledgements}
The authors would like to thank Harry Sokol for his medical expertise on the microbiome analyses.
The authors also thank the anonymous reviewers for their valuable feedback which have contributed to improve the manuscript.

\section*{Impact Statement}
This paper presents methodological contributions aimed at advancing machine learning research. We do not foresee specific societal risks beyond the potential for downstream misuse or misinterpretation of model outputs without appropriate domain expertise and validation. The ethical aspects of the gnotobiotic mice study were reviewed by the committees reported in \citet{bucci2016mdsine}, following institutional guidelines on animal experimentation protocols.


\bibliography{bibliography}

@article{hyvarinen2016unsupervised,
  title={Unsupervised feature extraction by time-contrastive learning and nonlinear {ICA}},
  author={Hyvarinen, Aapo and Morioka, Hiroshi},
  journal={Advances in Neural Information Processing Systems},
  volume={29},
  year={2016}
}

@inproceedings{halva2024identifiable,
  title={Identifiable feature learning for spatial data with nonlinear {ICA}},
  author={H{\"a}lv{\"a}, Hermanni and So, Jonathan and Turner, Richard E and Hyv{\"a}rinen, Aapo},
  booktitle={International Conference on Artificial Intelligence and Statistics},
  pages={3331--3339},
  year={2024},
  organization={PMLR}
}

@inproceedings{johnson2016composing,
  author       = {Matthew J. Johnson and
                  David Duvenaud and
                  Alexander B. Wiltschko and
                  Ryan P. Adams and
                  Sandeep R. Datta},
  editor       = {Daniel D. Lee and
                  Masashi Sugiyama and
                  Ulrike von Luxburg and
                  Isabelle Guyon and
                  Roman Garnett},
  title        = {Composing graphical models with neural networks for structured representations
                  and fast inference},
  booktitle    = {Advances in Neural Information Processing Systems},
  pages        = {2946--2954},
  year         = {2016},
  timestamp    = {Mon, 16 May 2022 15:41:51 +0200},
  biburl       = {https://dblp.org/rec/conf/nips/0002DWAD16.bib},
  bibsource    = {dblp computer science bibliography, https://dblp.org}
}

@article{blei2017variational,
  title={Variational inference: A review for statisticians},
  author={Blei, David M and Kucukelbir, Alp and McAuliffe, Jon D},
  journal={Journal of the American statistical Association},
  volume={112},
  number={518},
  pages={859--877},
  year={2017},
  publisher={Taylor \& Francis}
}

@inproceedings{tang2021empirical,
  title={On empirical bayes variational autoencoder: An excess risk bound},
  author={Tang, Rong and Yang, Yun},
  booktitle={Conference on Learning Theory},
  pages={4068--4125},
  year={2021},
  organization={PMLR}
}

@article{gibson2025learning,
  title={Learning ecosystem-scale dynamics from microbiome data with {MDSINE2}},
  author={Gibson, Travis E and Kim, Younhun and Acharya, Sawal and Kaplan, David E and DiBenedetto, Nicholas and Lavin, Richard and Berger, Bonnie and Allegretti, Jessica R and Bry, Lynn and Gerber, Georg K},
  journal={Nature Microbiology},
  pages={1--15},
  year={2025},
  publisher={Nature Publishing Group UK London}
}

@article{stein2013ecological,
  title={Ecological modeling from time-series inference: insight into dynamics and stability of intestinal microbiota},
  author={Stein, Richard R and Bucci, Vanni and Toussaint, Nora C and Buffie, Charlie G and R{\"a}tsch, Gunnar and Pamer, Eric G and Sander, Chris and Xavier, Joao B},
  journal={PLoS computational biology},
  volume={9},
  number={12},
  pages={e1003388},
  year={2013},
  publisher={Public Library of Science San Francisco, USA}
}

@article{belouchrani2002blind,
  title={A blind source separation technique using second-order statistics},
  author={Belouchrani, Adel and Abed-Meraim, Karim and Cardoso, J-F and Moulines, Eric},
  journal={IEEE Transactions on signal processing},
  volume={45},
  number={2},
  pages={434--444},
  year={2002},
  publisher={IEEE}
}

@article{bucci2016mdsine,
  title={{MDSINE: Microbial Dynamical Systems INference Engine for microbiome time-series analyses}},
  author={Bucci, Vanni and Tzen, Belinda and Li, Ning and Simmons, Matt and Tanoue, Takeshi and Bogart, Elijah and Deng, Luxue and Yeliseyev, Vladimir and Delaney, Mary L and Liu, Qing and others},
  journal={Genome biology},
  volume={17},
  number={1},
  pages={121},
  year={2016},
  publisher={Springer}
}

@inproceedings{chiquet2019variational,
  title={Variational inference for sparse network reconstruction from count data},
  author={Chiquet, Julien and Robin, Stephane and Mariadassou, Mahendra},
  booktitle={International Conference on Machine Learning},
  pages={1162--1171},
  year={2019},
  organization={PMLR}
}

@article{damour2022underspecification,
  title={Underspecification presents challenges for credibility in modern machine learning},
  author={D'Amour, Alexander and Heller, Katherine and Moldovan, Dan and Adlam, Ben and Alipanahi, Babak and Beutel, Alex and Chen, Christina and Deaton, Jonathan and Eisenstein, Jacob and Hoffman, Matthew D and others},
  journal={Journal of Machine Learning Research},
  volume={23},
  number={226},
  pages={1--61},
  year={2022}
}

@article{chaussard2025taxapln,
  title={{TaxaPLN}: a taxonomy-aware augmentation strategy for microbiome-trait classification including metadata},
  author={Chaussard, Alexandre and Bonnet, Anna and Le Corff, Sylvain and Sokol, Harry},
  journal={BMC Bioinformatics},
  year={2025},
  publisher={Springer}
}

@inproceedings{tong1990amuse,
  title={{AMUSE}: a new blind identification algorithm},
  author={Tong, Lang and Soon, VC and Huang, YF and Liu, RALR},
  booktitle={IEEE international symposium on circuits and systems},
  pages={1784--1787},
  year={1990},
  organization={IEEE}
}

@article{hyvarinen2000independent,
  title={Independent component analysis: algorithms and applications},
  author={Hyv{\"a}rinen, Aapo and Oja, Erkki},
  journal={Neural networks},
  volume={13},
  number={4-5},
  pages={411--430},
  year={2000},
  publisher={Elsevier}
}

@article{chiquet2021poisson,
  title={The {Poisson}-lognormal model as a versatile framework for the joint analysis of species abundances},
  author={Chiquet, Julien and Mariadassou, Mahendra and Robin, St{\'e}phane},
  journal={Frontiers in Ecology and Evolution},
  volume={9},
  pages={588292},
  year={2021},
  publisher={Frontiers Media SA}
}

@article{hyvarinen2013independent,
  title={Independent component analysis: recent advances},
  author={Hyv{\"a}rinen, Aapo},
  journal={Philosophical Transactions of the Royal Society A: Mathematical, Physical and Engineering Sciences},
  volume={371},
  number={1984},
  pages={20110534},
  year={2013},
  publisher={The Royal Society Publishing}
}

@article{lyu2023methodological,
  title={Methodological considerations in longitudinal analyses of microbiome data: A comprehensive review},
  author={Lyu, Ruiqi and Qu, Yixiang and Divaris, Kimon and Wu, Di},
  journal={Genes},
  volume={15},
  number={1},
  year={2023},
  publisher={MDPI}
}

@article{chaussard2025tree,
  title={{Tree-based variational inference for Poisson log-normal models}},
  author={Chaussard, Alexandre and Bonnet, Anna and Gassiat, Elisabeth and Le Corff, Sylvain},
  journal={Statistics and Computing},
  volume={35},
  number={5},
  pages={135},
  year={2025},
  publisher={Springer}
}

@article{shu2018amortized,
  title={Amortized inference regularization},
  author={Shu, Rui and Bui, Hung H and Zhao, Shengjia and Kochenderfer, Mykel J and Ermon, Stefano},
  journal={Advances in Neural Information Processing Systems},
  volume={31},
  year={2018}
}

@article{schmidt2018human,
  title={The human gut microbiome: from association to modulation},
  author={Schmidt, Thomas SB and Raes, Jeroen and Bork, Peer},
  journal={Cell},
  volume={172},
  number={6},
  pages={1198--1215},
  year={2018},
  publisher={Elsevier}
}

@article{asnicar2024machine,
  title={Machine learning for microbiologists},
  author={Asnicar, Francesco and Thomas, Andrew Maltez and Passerini, Andrea and Waldron, Levi and Segata, Nicola},
  journal={Nature Reviews Microbiology},
  volume={22},
  number={4},
  pages={191--205},
  year={2024},
  publisher={Nature Publishing Group UK London}
}

@article{paszke2019pytorch,
  title={Pytorch: An imperative style, high-performance deep learning library},
  author={Paszke, Adam and Gross, Sam and Massa, Francisco and Lerer, Adam and Bradbury, James and Chanan, Gregory and Killeen, Trevor and Lin, Zeming and Gimelshein, Natalia and Antiga, Luca and others},
  journal={Advances in Neural Information Processing Systems},
  volume={32},
  year={2019}
}

@inproceedings{tomczak2018vae,
  title={{VAE with a VampPrior}},
  author={Tomczak, Jakub and Welling, Max},
  booktitle={International Conference on Artificial Intelligence and Statistics},
  pages={1214--1223},
  year={2018},
  organization={PMLR}
}

@article{himberg2004validating,
  title={Validating the independent components of neuroimaging time series via clustering and visualization},
  author={Himberg, Johan and Hyv{\"a}rinen, Aapo and Esposito, Fabrizio},
  journal={Neuroimage},
  volume={22},
  number={3},
  pages={1214--1222},
  year={2004},
  publisher={Elsevier}
}

@article{yi2024cw_ica,
  title={{CW\_ICA}: an efficient dimensionality determination method for independent component analysis},
  author={Yi, Yuyan and Billor, Nedret and Ekstrom, Arne and Zheng, Jingyi},
  journal={Scientific Reports},
  volume={14},
  number={1},
  pages={143},
  year={2024},
  publisher={Nature Publishing Group UK London}
}

@article{hu2020snowball,
  title={Snowball {ICA}: a model order free independent component analysis strategy for functional magnetic resonance imaging data},
  author={Hu, Guoqiang and Waters, Abigail B and Aslan, Serdar and Frederick, Blaise and Cong, Fengyu and Nickerson, Lisa D},
  journal={Frontiers in neuroscience},
  volume={14},
  pages={569657},
  year={2020},
  publisher={Frontiers Media SA}
}

@article{aitchison1982statistical,
  title={The statistical analysis of compositional data},
  author={Aitchison, John},
  journal={Journal of the Royal Statistical Society: Series B (Methodological)},
  volume={44},
  number={2},
  pages={139--160},
  year={1982},
  publisher={Wiley Online Library}
}

@article{bastiaanssen2023bugs,
  title={Bugs as features (part 1): concepts and foundations for the compositional data analysis of the microbiome--gut--brain axis},
  author={Bastiaanssen, Thomaz FS and Quinn, Thomas P and Loughman, Amy},
  journal={Nature Mental Health},
  volume={1},
  number={12},
  pages={930--938},
  year={2023},
  publisher={Nature Publishing Group US New York}
}

@article{haslbeck2021tutorial,
  title={A tutorial on estimating time-varying vector autoregressive models},
  author={Haslbeck, Jonas MB and Bringmann, Laura F and Waldorp, Lourens J},
  journal={Multivariate behavioral research},
  volume={56},
  number={1},
  pages={120--149},
  year={2021},
  publisher={Taylor \& Francis}
}

@article{zheng2017dirichlet,
  title={Dirichlet {ARMA} models for compositional time series},
  author={Zheng, Tingguo and Chen, Rong},
  journal={Journal of Multivariate Analysis},
  volume={158},
  pages={31--46},
  year={2017},
  publisher={Elsevier}
}

@inproceedings{loshchilov2019adamw,
  author       = {Ilya Loshchilov and
                  Frank Hutter},
  title        = {Decoupled Weight Decay Regularization},
  booktitle    = {7th International Conference on Learning Representations},
  publisher    = {OpenReview.net},
  year         = {2019},
  timestamp    = {Thu, 25 Jul 2019 14:26:04 +0200},
  biburl       = {https://dblp.org/rec/conf/iclr/LoshchilovH19.bib},
  bibsource    = {dblp computer science bibliography, https://dblp.org}
}

@article{fontanarrosa2025mimic,
  title={{MIMIC: a Python package for simulating, inferring, and predicting microbial community interactions and dynamics}},
  author={Fontanarrosa, Pedro and Clare, Chania and Fedorec, Alex JH and Barnes, Chris P},
  journal={Bioinformatics},
  volume={41},
  number={5},
  pages={btaf174},
  year={2025},
  publisher={Oxford University Press}
}

@inproceedings{gibson2018robust,
  title={Robust and scalable models of microbiome dynamics},
  author={Gibson, Travis and Gerber, Georg},
  booktitle={International Conference on Machine Learning},
  pages={1763--1772},
  year={2018},
  organization={PMLR}
}

@inproceedings{morioka2021independent,
  title={Independent innovation analysis for nonlinear vector autoregressive process},
  author={Morioka, Hiroshi and H{\"a}lv{\"a}, Hermanni and Hyvarinen, Aapo},
  booktitle={International Conference on Artificial Intelligence and Statistics},
  pages={1549--1557},
  year={2021},
  organization={PMLR}
}

@article{ghahramani2000variational,
  title={Variational learning for switching state-space models},
  author={Ghahramani, Zoubin and Hinton, Geoffrey E},
  journal={Neural computation},
  volume={12},
  number={4},
  pages={831--864},
  year={2000},
  publisher={MIT Press}
}

@book{hilbe2011negative,
  title={Negative binomial regression},
  author={Hilbe, Joseph M},
  year={2011},
  publisher={Cambridge University Press}
}

@article{davis2021count,
  title={Count time series: {A} methodological review},
  author={Davis, Richard A and Fokianos, Konstantinos and Holan, Scott H and Joe, Harry and Livsey, James and Lund, Robert and Pipiras, Vladas and Ravishanker, Nalini},
  journal={Journal of the American Statistical Association},
  volume={116},
  number={535},
  pages={1533--1547},
  year={2021},
  publisher={Taylor \& Francis}
}

@article{koh2023spatiotemporal,
  title={Spatiotemporal wildfire modeling through point processes with moderate and extreme marks},
  author={Koh, Jonathan and Pimont, Fran{\c{c}}ois and Dupuy, Jean-Luc and Opitz, Thomas},
  journal={The Annals of Applied Statistics},
  volume={17},
  number={1},
  pages={560--582},
  year={2023},
  publisher={Institute of Mathematical Statistics}
}

@article{bacry2015hawkes,
  title={Hawkes processes in finance},
  author={Bacry, Emmanuel and Mastromatteo, Iacopo and Muzy, Jean-Fran{\c{c}}ois},
  journal={Market Microstructure and Liquidity},
  volume={1},
  number={01},
  pages={1550005},
  year={2015},
  publisher={World Scientific}
}

@article{batardiere2025zero,
  title={{Zero-inflation in the multivariate {Poisson} lognormal family}},
  author={Batardi{\`e}re, Bastien and Chiquet, Julien and Gindraud, Fran{\c{c}}ois and Mariadassou, Mahendra},
  journal={Statistics and Computing},
  volume={35},
  number={6},
  pages={196},
  year={2025},
  publisher={Springer}
}

@article{pan2022review,
  title={A review of second-order blind identification methods},
  author={Pan, Yan and Matilainen, Markus and Taskinen, Sara and Nordhausen, Klaus},
  journal={Wiley interdisciplinary reviews: computational statistics},
  volume={14},
  number={4},
  pages={e1550},
  year={2022},
  publisher={Wiley Online Library}
}

@article{pfister2019robustifying,
  title={Robustifying independent component analysis by adjusting for group-wise stationary noise},
  author={Pfister, Niklas and Weichwald, Sebastian and B{\"u}hlmann, Peter and Sch{\"o}lkopf, Bernhard},
  journal={Journal of Machine Learning Research},
  volume={20},
  number={147},
  pages={1--50},
  year={2019}
}

@article{tichavsky2008fast,
  title={Fast approximate joint diagonalization incorporating weight matrices},
  author={Tichavsky, Petr and Yeredor, Arie},
  journal={IEEE Transactions on Signal Processing},
  volume={57},
  number={3},
  pages={878--891},
  year={2008},
  publisher={IEEE}
}

@article{ablin2018faster,
  title={Faster independent component analysis by preconditioning with {Hessian} approximations},
  author={Ablin, Pierre and Cardoso, Jean-Fran{\c{c}}ois and Gramfort, Alexandre},
  journal={IEEE Transactions on Signal Processing},
  volume={66},
  number={15},
  pages={4040--4049},
  year={2018},
  publisher={IEEE}
}

@article{patrick2022tale,
  title={A tale of two habitats: {Bacteroides} fragilis, a lethal pathogen and resident in the human gastrointestinal microbiome},
  author={Patrick, Sheila},
  journal={Microbiology},
  volume={168},
  number={4},
  pages={001156},
  year={2022},
  publisher={Microbiology Society}
}

@article{derrien2017akkermansia,
  title={Akkermansia muciniphila and its role in regulating host functions},
  author={Derrien, Muriel and Belzer, Clara and de Vos, Willem M},
  journal={Microbial pathogenesis},
  volume={106},
  pages={171--181},
  year={2017},
  publisher={Elsevier}
}

@article{fujimoto2020metagenome,
  title={Metagenome data on intestinal phage-bacteria associations aids the development of phage therapy against pathobionts},
  author={Fujimoto, Kosuke and Kimura, Yasumasa and Shimohigoshi, Masaki and Satoh, Takeshi and Sato, Shintaro and Tremmel, Georg and Uematsu, Miho and Kawaguchi, Yunosuke and Usui, Yuki and Nakano, Yoshiko and others},
  journal={Cell Host \& Microbe},
  volume={28},
  number={3},
  pages={380--389},
  year={2020},
  publisher={Elsevier}
}

@book{cameron2013regression,
  title={Regression analysis of count data},
  author={Cameron, Adrian Colin and Trivedi, Pravin K},
  year={2013},
  publisher={Cambridge university press},
}

@article{jin1991integer,
  title={The integer-valued autoregressive {(INAR (p))} model},
  author={Jin-Guan, Du and Yuan, Li},
  journal={Journal of time series analysis},
  volume={12},
  number={2},
  pages={129--142},
  year={1991},
  publisher={Wiley Online Library}
}

@article{cui2009new,
  title={A new look at time series of counts},
  author={Cui, Yunwei and Lund, Robert},
  journal={Biometrika},
  volume={96},
  number={4},
  pages={781--792},
  year={2009},
  publisher={Oxford University Press}
}

@article{livsey2018multivariate,
  title={Multivariate integer-valued time series with flexible autocovariances and their application to major hurricane counts},
  author={Livsey, James and Lund, Robert and Kechagias, Stefanos and Pipiras, Vladas},
  journal={The Annals of Applied Statistics},
  volume={12},
  number={1},
  pages={408--431},
  year={2018},
  publisher={JSTOR}
}

@article{davis2009negative,
  title={A negative binomial model for time series of counts},
  author={Davis, Richard A and Wu, Rongning},
  journal={Biometrika},
  volume={96},
  number={3},
  pages={735--749},
  year={2009},
  publisher={Oxford University Press}
}

@article{fokianos2009poisson,
  title={Poisson autoregression},
  author={Fokianos, Konstantinos and Rahbek, Anders and Tj{\o}stheim, Dag},
  journal={Journal of the American Statistical Association},
  volume={104},
  number={488},
  pages={1430--1439},
  year={2009},
  publisher={Taylor \& Francis}
}

@inproceedings{zoltowski2020general,
  author       = {David M. Zoltowski and
                  Jonathan W. Pillow and
                  Scott W. Linderman},
  title        = {A general recurrent state space framework for modeling neural dynamics
                  during decision-making},
  booktitle    = {International Conference on Machine Learning},
  series       = {Proceedings of Machine Learning Research},
  pages        = {11680--11691},
  publisher    = {{PMLR}},
  year         = {2020},
}

@inproceedings{glaser2020recurrent,
  author       = {Joshua I. Glaser and
                  Matthew R. Whiteway and
                  John P. Cunningham and
                  Liam Paninski and
                  Scott W. Linderman},
  editor       = {Hugo Larochelle and
                  Marc'Aurelio Ranzato and
                  Raia Hadsell and
                  Maria{-}Florina Balcan and
                  Hsuan{-}Tien Lin},
  title        = {Recurrent Switching Dynamical Systems Models for Multiple Interacting
                  Neural Populations},
  booktitle    = {Advances in Neural Information Processing Systems},
  year         = {2020},
}

@inproceedings{linderman2017bayesian,
  author       = {Scott W. Linderman and
                  Matthew J. Johnson and
                  Andrew C. Miller and
                  Ryan P. Adams and
                  David M. Blei and
                  Liam Paninski},
  title        = {Bayesian Learning and Inference in Recurrent Switching Linear Dynamical
                  Systems},
  booktitle    = {International Conference on Artificial Intelligence and Statistics},
  series       = {Proceedings of Machine Learning Research},
  pages        = {914--922},
  publisher    = {{PMLR}},
  year         = {2017},
}

@misc{noaa1996storm,
  author       = {{NOAA} and 
                  {National Climatic Data Center} and 
                  {NESDIS} and 
                  {U.S. Department of Commerce}},
  title        = {{NCDC Storm Events Database}},
  year         = {1996},
  publisher    = {{National Centers for Environmental Information, NESDIS, NOAA, U.S. Department of Commerce}},
  note         = {Dataset identifiers: NCEI DSI 3910\_03; gov.noaa.ncdc:C00510.},
  url          = {https://www.ncei.noaa.gov/metadata/geoportal/rest/metadata/item/gov.noaa.ncdc:C00510/html}
}

@article{barton2025soil,
  title={Soil moisture gradients strengthen mesoscale convective systems by increasing wind shear},
  author={Barton, Emma J and Klein, Cornelia and Taylor, Christopher M and Marsham, John and Parker, Douglas J and Maybee, Ben and Feng, Zhe and Leung, L Ruby},
  journal={Nature Geoscience},
  volume={18},
  number={4},
  pages={330--336},
  year={2025},
  publisher={Nature Publishing Group UK London}
}

@article{qian2024splang,
  title={SPLANG—a synthetic Poisson-Lognormal-based abundance and network generative model for microbial interaction inference algorithms},
  author={Qian, Weicheng and Stanley, Kevin G and Aziz, Zohaib and Aziz, Umair and Siciliano, Steven D},
  journal={Scientific Reports},
  volume={14},
  number={1},
  pages={25099},
  year={2024},
  publisher={Nature Publishing Group UK London}
}
\bibliographystyle{icml2026}

\newpage
\appendix
\onecolumn
\normalsize

\section*{Appendix organization and notations}

The appendix is organized as follows:
\begin{itemize}
    \item \textbf{Appendix~\ref{app:identifiability_results}} presents the proofs of our identifiability results for PLN-ICA and ARPLN-ICA, and discusses their connection to prior work.
    \item \textbf{Appendix~\ref{app:elbo_plnica}} details the derivation of the ELBO for ARPLN-ICA and its tensorized form used in our implementation.
    \item \textbf{Appendix~\ref{proof:cavi_update_labels}} derives the optimal CAVI update for the switching labels, and provides background on structured variational inference and coordinate ascent variational inference.
    \item \textbf{Appendix~\ref{app:closed_form_update_theta}} derives the closed-form updates for the hyperparameters of the ARPLN-ICA prior.
    \item \textbf{Appendix~\ref{app:simulated_study_material}} provides additional details for the simulation study, including the mean-field approximation, baseline methods, experimental protocols, metrics definitions, and an analysis of scalability and wall-clock time.
    \item \textbf{Appendix~\ref{app:numerical_analysis_gnotobiotic_big_header}} describes the microbiome study design, the computation of empirical switching distributions, and discusses components stability and model selection, with a comparison to the main manuscript analysis at $d = 7$.
    \item \textbf{Appendix~\ref{app:forecasting}} reports details of the forecasting study design, including the derivation of an online variational inference procedure for forecasting.
    \item \textbf{Appendix~\ref{app:metrics_trajectories}} describes the metrics used to compare temporal count data observations.
    \item \textbf{Appendix~\ref{app:noaa_dataset_experiments}} provides additional analyses on the Storm Events dataset, including model selection and stability.
\end{itemize}

\paragraph{Notations. }
In the following, $\odot$ denotes the pointwise product operator and $\reciprocal$ the pointwise division operator. The superscript  $\cdot^*$  broadcasts a given vector in the appropriate dimension. The Kullback-Leibler divergence between two densities $q,\,p$ is denoted by $\KL{q}{p}$. The entropy of a distribution $p$ is denoted by $\entropy[p]$. When two random variables $x,y$ have the same distribution we note $x \sim y$.

\section{Identifiability results}
\label{app:identifiability_results}
Identifiability is a prerequisite for reliable inference and meaningful downstream interpretation \cite{damour2022underspecification}. In this section, we establish the identifiability of PLN-ICA and ARPLN-ICA models under mild conditions. Proof of Proposition~\ref{prop:identifiability_plnica} is deferred to Appendix~\ref{app:identifiability_plnica}, while proof of Corollary~\ref{corollary:identifiability_ARPLNICA} is found in Appendix~\ref{app:corollary_proof_arplnica}. 

\subsection{Proof of Proposition~\ref{prop:identifiability_plnica}}
\label{app:identifiability_plnica}

The proof of Proposition~\ref{prop:identifiability_plnica} relies on two Lemmas. First, Lemma~\ref{prop:identifiability_univariate_poisson_latent_process} shows the identifiability of univariate PLN models when the latent variable has arbitrary temporal dependency. Lemma~\ref{prop:identifiability_multivariate_poisson_latent_process} then extends this result to the multivariate setting.

\begin{lemma}
\label{prop:identifiability_univariate_poisson_latent_process}
    Let $(x, z) = (x_t, z_t)_{1 \leq t \leq T}$ and $(\tilde z, \tilde z) = (\tilde x_t, \tilde z_t)_{1 \leq t \leq T}$  such that $z, \tilde z \in \RR^T_{>0}$ and $(x_t)_{1 \leq t \leq T}$ (resp.  $(\tilde x_t)_{1 \leq t \leq T}$) are conditionally independent given $(z_t)_{1 \leq t \leq T}$ (resp. $(\tilde z_t)_{1 \leq t \leq T}$) and  such that for all $1\leq t\leq T$, $x_t \sim \poisson{z_t} $ (resp. $\tilde x_t \sim \poisson{\tilde z_t}$). Then, if $x$ and $\tilde x$ have the same distribution, $z$ and $\tilde z$ have the same distribution.
\end{lemma}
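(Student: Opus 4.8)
The plan is to convert the distributional equality of the counts into an equality of Laplace transforms of the latent rate vectors, and then to propagate that equality from a bounded box to the whole positive orthant by analyticity. First I would fix $w = (w_1,\dots,w_T) \in [0,1]^T$ and note that, since $v \mapsto \prod_{t} w_t^{v_t}$ is bounded on $\NN^T$, the hypothesis $x \equallaw \tilde x$ yields $\EE{}{\prod_{t=1}^T w_t^{x_t}} = \EE{}{\prod_{t=1}^T w_t^{\tilde x_t}}$. Next, using conditional independence of $(x_t)_t$ given $(z_t)_t$ together with $x_t \mid z \sim \poisson{z_t}$, I would factor the conditional probability generating function, $\EE{}{\prod_t w_t^{x_t}\mid z} = \prod_t \rme^{z_t(w_t-1)} = \rme^{-\scalar{\mathbf{1}_T - w}{z}}$, and take expectations to obtain $\EE{}{\prod_t w_t^{x_t}} = L_z(\mathbf{1}_T - w)$, where $L_z(\lambda) := \EE{}{\rme^{-\scalar{\lambda}{z}}}$ is the Laplace transform of the law of $z$ (finite since $z \geq 0$ a.s.), and likewise for $\tilde z$. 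As $w$ ranges over $[0,1]^T$ so does $\mathbf{1}_T - w$, hence $L_z = L_{\tilde z}$ on $[0,1]^T$.

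The second half of the argument upgrades agreement on the box to agreement everywhere and then applies Laplace uniqueness. Because $z \geq 0$ almost surely, $\rme^{-\scalar{\lambda}{z}} \leq 1$ for $\lambda$ in the open orthant $(0,\infty)^T$, and a routine domination of the tails of the exponential series (bounding them by $\rme^{-\scalar{\lambda_0}{z}}$ for a slightly smaller $\lambda_0 \in (0,\infty)^T$) shows that $L_z$ and $L_{\tilde z}$ are real-analytic on the connected open set $(0,\infty)^T$. Since they coincide on the nonempty open subset $(0,1)^T$, the identity theorem for real-analytic functions forces them to coincide on all of $(0,\infty)^T$, hence on $[0,\infty)^T$ by dominated convergence. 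Finally, since $z$ and $\tilde z$ induce finite Borel measures on $\RR_+^T$, uniqueness of the multivariate Laplace transform gives $z \equallaw \tilde z$, which is the claim.

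The computational steps — the conditional PGF factorization and the substitution $\lambda = \mathbf{1}_T - w$ — are elementary; the only delicate point is the continuation step, namely establishing genuine real-analyticity of $L_z$ on $(0,\infty)^T$ and using the identity theorem to pass from the full-dimensional box $(0,1)^T$ to the whole orthant. I would also double-check that the hypotheses enter only through $z$ being a.s. non-negative (which makes $L_z$ finite and analytic on $(0,\infty)^T$) and through the conditional-independence structure (which makes the conditional PGF a product); the restriction $z \in \RR_+^T \setminus \{0\}$ plays no role in this lemma beyond well-posedness of the latent rates.
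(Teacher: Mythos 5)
Your proof is correct, and its skeleton is the same as the paper's: factor the conditional probability generating function using conditional independence and the Poisson law to obtain $\EE{}{\prod_t w_t^{x_t}} = \mathcal{L}_z(\mathbf{1}_T - w)$, deduce equality of the Laplace transforms of $z$ and $\tilde z$, and conclude by Laplace uniqueness for non-negative random vectors. The one place where you genuinely diverge is the domain-extension step. The paper takes test functions $h_t(u)=\beta_t^u$ with $\beta_t \in (-\infty,1]$, so that $\mathbf{1}_T-\beta$ sweeps out all of $[0,\infty)^T$ in one shot; for $\beta_t<-1$ these $h_t$ are unbounded and sign-oscillating, and the interchange $\EE{}{h(x)}=\EE{}{\rme^{-(\mathbf{1}_T-\beta)^\top z}}$ tacitly requires $\EE{}{|h(x)|}=\EE{}{\rme^{(|\beta|-\mathbf{1}_T)^\top z}}<\infty$, which need not hold for heavy-tailed $z$ and is also needed for $x\equallaw \tilde x$ to transfer the (possibly undefined) expectation. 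You instead restrict to $w\in[0,1]^T$, where the test functions are bounded and the Fubini step is unconditional, and you recover the full orthant via real-analyticity of $\mathcal{L}_z$ on $(0,\infty)^T$ and the identity theorem. This costs you an extra (standard but nontrivial) analytic-continuation argument, but it buys a proof that is airtight without any moment condition on $z$; in that sense your route patches a small integrability gap in the paper's version. Your closing remark that only non-negativity of $z$ and the conditional-independence structure are used is accurate.
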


\begin{lemma}
\label{prop:identifiability_multivariate_poisson_latent_process}
    Let $(x, z) = (x_t, z_t)_{1 \leq t \leq T}$ (resp. $(\tilde x, \tilde z)$) such that $z \in \RR^{T \times K}_{>0}$ (resp. $\tilde z \in \RR^{T \times K}_{>0})$ and $(x_t^{(k)})_{1 \leq t \leq T, 1 \leq k \leq K}$ are conditionally independent on their respective $(z_t^{(k)})_{1 \leq t \leq T, 1 \leq k \leq K}$ such that for all $t \leq T, k \leq K, x_t^{(k)} \sim \poisson{z_t^{(k)}} $ (resp. $(\tilde x_t^{(k)})_{1 \leq t \leq T}$ with $(\tilde z_t^{(k)})_{1 \leq t \leq T}$). Then, if $x$ and $\tilde x$ have the same distribution, $z$ and $\tilde z$ have the same distribution.
\end{lemma}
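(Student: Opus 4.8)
The plan is to reduce Lemma~\ref{prop:identifiability_multivariate_poisson_latent_process} to the univariate statement of Lemma~\ref{prop:identifiability_univariate_poisson_latent_process}. The key observation is that the multivariate temporal model is, up to relabeling, a univariate latent-Poisson model whose index set is the product $\{1,\dots,T\}\times\{1,\dots,K\}$ rather than $\{1,\dots,T\}$. Concretely, I would fix a bijection between $\{1,\dots,T\}\times\{1,\dots,K\}$ and $\{1,\dots,TK\}$ and relabel the entries of $z$, $\tilde z$, $x$, $\tilde x$ accordingly into length-$TK$ sequences. Under this relabeling every hypothesis of Lemma~\ref{prop:identifiability_univariate_poisson_latent_process} holds: the relabeled latent sequence takes values in $\RR^{TK}_+ \backslash \{0\}$ since $z \in \RR^{T\times K}_+ \backslash \{0\}$; the relabeled counts are conditionally independent given the relabeled latent sequence, which is exactly the assumed mutual conditional independence of the $TK$ entries $x_t^{(k)}$ given $z$; and each relabeled count is Poisson with the corresponding relabeled intensity, with no structure imposed on the dependence of the latent sequence, matching the arbitrary-temporal-dependency setting of the univariate lemma.

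With this reduction in hand, since $x$ and $\tilde x$ have the same distribution as random elements of $\NN^{T\times K}$, the relabeled count sequences have the same distribution in $\NN^{TK}$, so Lemma~\ref{prop:identifiability_univariate_poisson_latent_process} gives equality in distribution of the relabeled latent sequences in $\RR^{TK}_+$, which is $z \equallaw \tilde z$ in $\RR^{T\times K}_+$ after undoing the bijection. The only thing to make explicit is that the proof of Lemma~\ref{prop:identifiability_univariate_poisson_latent_process} uses only conditional independence and the Poisson conditional law, never the linear order of its index set, hence applies verbatim to an arbitrary finite index set. This is also where the single mild technical point of the whole argument sits (inherited for free here): knowing a Laplace transform only on a bounded full-dimensional region and having to propagate it to the entire orthant before invoking the standard uniqueness theorem.

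For completeness I would record the direct argument, which is the same mechanism applied in place. Working with the joint probability generating function, for $\mathbf{u} = (u_t^{(k)}) \in [0,1]^{T\times K}$, conditioning on $z$ and using conditional independence together with the identity $\mathds{E}[u^{N}\mid\lambda] = \rme^{\lambda(u-1)}$ for $N \sim \poisson{\lambda}$ yields
\begin{equation}
    \mathds{E}\left[\prod_{t,k}(u_t^{(k)})^{x_t^{(k)}}\right] = \mathds{E}\left[\exp\left(-\sum_{t,k} z_t^{(k)}(1-u_t^{(k)})\right)\right] \eqsp.
\end{equation}
Thus $x \equallaw \tilde x$ forces the multivariate Laplace transforms of the laws of $z$ and $\tilde z$ to agree on $[0,1]^{T\times K}$; since $z,\tilde z \ge 0$ these transforms extend holomorphically to the tube over the positive orthant, so agreement on the full-dimensional region $(0,1)^{T\times K}$ propagates by the identity theorem to the whole positive orthant, and uniqueness of the Laplace transform of finite measures on $\RR^{TK}_+$ gives $z \equallaw \tilde z$.
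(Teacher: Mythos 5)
Your proposal is correct and follows essentially the same route as the paper: the paper's proof also flattens the $T\times K$ array via the explicit bijection $i(t,k)=(t-1)K+k$ and invokes the univariate lemma on the resulting length-$TK$ sequences. Your supplementary ``direct'' generating-function argument is sound but unnecessary, and note that the paper's univariate proof already obtains the Laplace transform on the entire orthant by allowing $\beta_t\in(-\infty,1]$, so no analytic-continuation step from $[0,1]^{TK}$ is needed there.
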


Then, defining $z = (\exp f(s_t))_{1 \leq t \leq T}$ and $\tilde z = (\exp \tilde f(\tilde s_t))_{1 \leq t \leq T}$, Lemma~\ref{prop:identifiability_multivariate_poisson_latent_process} induces $z \sim \tilde z$. Taking the $\log$ point-wise yields $(f(s_t))_{1 \leq t \leq T} \sim (\tilde f(\tilde s_t))_{1 \leq t \leq T}$ concluding the proof of Proposition~\ref{prop:identifiability_plnica}. \hfill\qedsymbol

\paragraph{Proof of Lemma~\ref{prop:identifiability_univariate_poisson_latent_process}.} 
    The proof follows the arguments of Lemma 5 of \citet{chaussard2025tree} with standard PLN models.
    Let $h$ a measurable function such that $h(x) = \prod_{t=1}^T h_t(x_t)$, then
    \begin{equation}
        \EE{}{h(x)} = \EE{}{\prod_{t=1}^T \sum_{u \in \NN} \rme^{-z_t} \frac{z_t^u}{u!}h_t(u)} \eqsp.
    \end{equation}
    For all $1\leq t\leq T$, choosing $\beta_t \in (-\infty,1]$ and  $h_t(u) = \beta_t^u$ yields
    \begin{equation}
        \EE{}{h(x)} = \EE{}{\prod_{t=1}^T \rme^{-z_t} \sum_{u \in \NN} \frac{(z_t \beta_t)^u}{u!}} = \EE{}{\rme^{-(1-\beta)^\top z}} = \mathcal{L}_z(1-\beta) \eqsp,
    \end{equation}
    where $\beta = (\beta_1,\ldots,\beta_T)$ and $\mathcal{L}_z$ denotes   the Laplace transform of $z$. 
    Thus, since $x$ and $\tilde x$ have the same distribution, the Laplace transforms of $z,\tilde z$ coincide on $\RR_+^T$, that is for all $v \in [0,+\infty)^T, \mathcal{L}_{z}(v) = \mathcal{L}_{\tilde z}(v)$. Since $z,\tilde z$ are non-negative, their Laplace transform is finite on $\RR_+^T$ and uniquely characterizes their distributions, which concludes the proof. \hfill\qedsymbol

\paragraph{Proof of Lemma~\ref{prop:identifiability_multivariate_poisson_latent_process}.}
Consider the indexing $i(t,k) = (t-1)K + k$, then for the index $(t,k)$ we set $j = i(t,k)$ and define $x_j' = x_t^{(k)}$ and $z_j' = z_t^{(k)}$. By model definition, we get that $(x_j')_{1 \leq j \leq TK}$ are conditionally independent given $(z_j')_{1 \leq j \leq TK}$ with $x_j' \sim \poisson{z_j'}$ conditionally on $z_j' \in \RR^{TK}_{>0}$. Using a similar construction for $(\tilde x'_j, \tilde z_j')_{1 \leq j \leq TK}$, since $x' \sim \tilde x'$ by Lemma~\ref{prop:identifiability_univariate_poisson_latent_process} we get $z' \sim \tilde z'$ hence $z \sim \tilde z$ concluding the proof. \hfill\qedsymbol

\subsection{Proof of Corollary~\ref{corollary:identifiability_ARPLNICA}}
\label{app:corollary_proof_arplnica}

From Proposition~\ref{prop:identifiability_plnica}, we get $(\Gamma s_t)_{1 \leq t \leq T} \sim (\tilde \Gamma \tilde s_t)_{1 \leq t \leq T}$. 
Thus, the remaining indeterminacy lies in the identification of linear mixed ICA under regime switching, which is characterized in the following Lemma~\ref{prop:linear_switching_mixing_gaussian_process_identifiability_appendix}.

\begin{lemma}
\label{prop:linear_switching_mixing_gaussian_process_identifiability_appendix}
    Let $K \ge d > 0$ and $C \in \NN^*$. Define $(w,s)=(w_t,s_t)_{1\le t\le T}$ (resp. $(\tilde w,\tilde s)$) such that $s\in\RR^{T\times d}$ (resp. $\tilde s$) and $w\in\RR^{T\times K}$ (resp. $\tilde w$), and let $\Gamma\in\RR^{K\times d}$ (resp. $\tilde\Gamma$). Assume the model assumptions hold:

    \begin{itemize}
        \item $\Gamma$ (resp. $\tilde \Gamma$) satisfy $\mathrm{rank}(\Gamma)=d$ and define $w_t=\Gamma s_t$ (resp. $\tilde w_t=\tilde\Gamma \tilde s_t$).
        \item The coordinate-wise collections $\big((u_t^{(i)})_{1\le t\le T},(s_t^{(i)})_{1\le t\le T}\big)_{1\le i\le d}$ (resp. $(\tilde u_t^{(i)})_{1\le t\le T},(\tilde s_t^{(i)})_{1\le t\le T}$) are mutually independent across components $i$.
        \item For each $i\in\{1,\dots,d\}$, let $(u_t^{(i)})_{1\le t\le T}$ (resp. $(\tilde u_t^{(i)})_{1\le t\le T}$) be a discrete Markov chain on $\{1,\dots,C\}$ with initial law $\pi^{(i)}$ and transition matrix $A^{(i)}$ (resp. $\tilde\pi^{(i)}$, $\tilde A^{(i)}$).
        \item Assume that, conditionally on $\{u_1^{(i)}=k\}$, $s_1^{(i)}\sim \mathcal{N}(m_k^{(i)}, \Phi_k^{(i)})$, and conditionally on $\{s_t^{(i)},u_{t+1}^{(i)}=k\}$, $s_{t+1}^{(i)}\sim \mathcal{N}(B_k^{(i)} s_t^{(i)}+b_k^{(i)},{\Psi_k^{(i)}})$, where for all $k$, $B_k\in D_d(\RR^*)$ and $\Phi_k,\Psi_k\in D_d(\RR_{>0})$. Assume a similar construction for $\tilde s$ with parameters $\tilde\pi^{(i)},\tilde A^{(i)},\tilde m_k^{(i)},\tilde \Phi_k^{(i)},\tilde B_k^{(i)},\tilde b_k^{(i)},\tilde \Psi_k^{(i)}$.
    \end{itemize}
    Then, assuming that $w$ and $\tilde w$ have the same distribution and that $\Sigma_{1,1}=\cov(s_1,s_1)$ has positive diagonal entries, if there exist $t_0\in\{1,\dots,T\}$ and $\ell_0\in\{0,\dots,t_0-1\}$ such that the diagonal entries of 
    \begin{equation}
        \label{eq:distinct_whitened_lagcov}
        \Sigma_{1,1}^{-1/2}\, \Sigma_{t_0,t_0-\ell_0}\, \Sigma_{1,1}^{-1/2}
        \quad \text{where } \Sigma_{t,s} = \cov(s_t,s_s)
    \end{equation} 
    are pairwise distinct, then there exist a permutation matrix $P\in\RR^{d\times d}$ and a diagonal scaling $D\in D_d(\RR^*)$ such that
    \begin{equation}
    \tilde\Gamma=\Gamma P D^{-1}.
    \end{equation}
\end{lemma}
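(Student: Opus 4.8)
\textbf{Overall strategy.} The plan is to reduce the identification of $\Gamma$ to a classical second-order (joint diagonalization) argument, in the spirit of \citet{belouchrani2002blind}, by working with the covariance structure of the mixed Gaussian process $w_t = \Gamma s_t$. The crucial observation is that, because the sources $s^{(i)}$ are mutually independent across components, the marginal (over the switching chains) covariance matrices $\Sigma_{t,s} = \cov(s_t, s_s)$ are \emph{diagonal} for every pair $(t,s)$: each switching chain is independent of the others, so cross-coordinate covariances vanish. Hence $\cov(w_t, w_s) = \Gamma \Sigma_{t,s} \Gamma^\top$ exhibits a simultaneous congruence-diagonalization structure, and the hypothesis that $w$ and $\tilde w$ have the same law forces $\Gamma \Sigma_{t,s}\Gamma^\top = \tilde\Gamma \tilde\Sigma_{t,s}\tilde\Gamma^\top$ for all $t,s$. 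I would then exploit the pair $(t_0, t_0-\ell_0)$ with distinct whitened diagonal entries to pin down $\Gamma$ up to signed permutation.

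\textbf{Key steps, in order.} First I would record that, since $w \sim \tilde w$ and $w$ is a (mixture of) Gaussian(s) whose \emph{second-order} structure is all that matters here, the first two moments coincide: in particular $\Gamma \Sigma_{1,1}\Gamma^\top = \tilde\Gamma\tilde\Sigma_{1,1}\tilde\Gamma^\top$ and $\Gamma\Sigma_{t_0,t_0-\ell_0}\Gamma^\top = \tilde\Gamma\tilde\Sigma_{t_0,t_0-\ell_0}\tilde\Gamma^\top$. Second, since $\Sigma_{1,1}$ is diagonal with positive entries (by the cross-component independence and the assumption on its diagonal), it is invertible; I would whiten by setting $W = \Sigma_{1,1}^{-1/2}$, $\tilde W = \tilde\Sigma_{1,1}^{-1/2}$, and define $M = \Gamma W^{-1}$ and $\tilde M = \tilde\Gamma \tilde W^{-1}$, both of rank $d$. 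The first moment identity becomes $M M^\top = \tilde M \tilde M^\top$; writing the thin SVDs $M = U_1 R$, $\tilde M = U_2 \tilde R$ with $U_1, U_2 \in \RR^{K\times d}$ having orthonormal columns and $R,\tilde R$ invertible $d\times d$, this forces $U_1$ and $U_2$ to share the same column space and in fact $M = \tilde M Q$ for some orthogonal $Q \in \RR^{d\times d}$ (obtained by a left-inverse manipulation, using $\rank(\Gamma) = \rank(\tilde\Gamma)=d$). Third, I substitute into the lag-covariance identity: with $\Lambda = W\,\Sigma_{t_0,t_0-\ell_0}\,W$ and $\tilde\Lambda = \tilde W\,\tilde\Sigma_{t_0,t_0-\ell_0}\,\tilde W$ both diagonal (again by cross-component independence), the identity $M\Lambda M^\top = \tilde M\tilde\Lambda\tilde M^\top$ together with $M = \tilde M Q$ and the fact that $\tilde M$ has a left inverse yields $Q\Lambda Q^\top = \tilde\Lambda$. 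By hypothesis the diagonal entries of $\Lambda$ are pairwise distinct; a standard eigenvalue-uniqueness argument (orthogonal conjugation preserving a diagonal matrix with distinct entries) then forces $Q$ to be a signed permutation matrix, i.e. $Q = P\,\mathrm{diag}(\pm 1)$ for a permutation $P$. Finally, unwinding the whitening gives $\Gamma = M W = \tilde M Q W = \tilde\Gamma \tilde W^{-1} Q W$, and setting $D^{-1} = \tilde W^{-1} Q W$ — which is a diagonal matrix composed with the signed permutation $Q$, hence of the form $P D^{-1}$ with $D$ diagonal invertible after absorbing the whitening scalars — produces exactly $\tilde\Gamma = \Gamma P D^{-1}$.

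\textbf{Main obstacle.} The delicate point is step three: showing that the \emph{same} orthogonal $Q$ from the first-moment identity also diagonalizes the whitened lag-covariance, and that the distinctness hypothesis is stated precisely so that $Q$ collapses to a signed permutation. One must be careful that $\Lambda$ and $\tilde\Lambda$ may have negative diagonal entries (lag-covariances need not be PSD), so the argument is about $Q\Lambda Q^\top = \tilde\Lambda$ with $\Lambda$ a \emph{symmetric} matrix having distinct diagonal (hence distinct eigenvalue) entries, not a PSD whitening; the spectral-uniqueness lemma still applies but this subtlety should be flagged. A secondary bookkeeping issue is verifying that the marginalized $\Sigma_{t,s}$ really are diagonal: this needs the cross-component independence of the \emph{pairs} $(u^{(i)}, s^{(i)})$ together with the fact that each $s^{(i)}_t$ is a measurable function of its own chain $u^{(i)}_{1:t}$ and its own Gaussian innovations, so that coordinates $i \neq j$ are independent and their covariance vanishes — routine but essential, and I would state it as a short preliminary claim. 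The homogeneous-case reduction (mentioned in the main text, that $B$ with distinct nonzero diagonal entries suffices) would be a corollary obtained by computing $\Sigma_{t_0,t_0-\ell_0}$ explicitly in the $C=1$ case and checking the whitened diagonal entries are $\prod$-type products of the $B_i$'s, hence distinct generically.
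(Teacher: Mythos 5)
Your proposal is correct and follows essentially the same route as the paper's proof: equality of the laws of $w$ and $\tilde w$ gives $\Gamma\Sigma_{t,s}\Gamma^\top=\tilde\Gamma\tilde\Sigma_{t,s}\tilde\Gamma^\top$, the cross-component independence makes every $\Sigma_{t,s}$ diagonal, whitening by $\Sigma_{1,1}^{1/2}$ produces an orthogonal transfer matrix, and the pairwise-distinct whitened lag-covariance entries force it to be a signed permutation via the same spectral-uniqueness lemma. The only (immaterial) difference is bookkeeping: the paper first extracts an invertible $Q$ with $\tilde\Gamma=\Gamma Q$ from a column-space argument and then whitens $Q$ into $F=L_1^{-1}Q\tilde L_1$, whereas you whiten the mixing matrices directly before comparing them.
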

\begin{proof}
    We first show the result with $C = 1$ under milder assumptions in Appendix~\ref{app:identifiability_ARPLN_ICA}. The proof with $C > 1$ relies on identification arguments from \citet{belouchrani2002blind}, and is postponed to Appendix~\ref{app:identifiability_switching_linear_ICA}. 
\end{proof}

\subsection{Identifiability of regime-homogeneous ARPLN-ICA}
\label{app:identifiability_ARPLN_ICA}

We consider the regime-homogeneous ARPLN-ICA characterized by $C = 1$ (no switching). In this setting, Corollary~\ref{corollary:identifiability_ARPLNICA_homogeneous} shows that the mixing function can be identified under mild conditions on the AR coefficients of the underlying dynamic. The proof relies on Proposition~\ref{prop:identifiability_plnica} combined with a similar proof strategy as that of Theorem 1 of \citet{belouchrani2002blind} reported in Lemma~\ref{prop:linear_mixing_gaussian_process_identifiability}, specialized to Gaussian AR(1) models.

\begin{corollary}
\label{corollary:identifiability_ARPLNICA_homogeneous}
    Let $(x, s) = (x_t, s_t)_{1 \leq t \leq T}$ (resp. $(\tilde x, \tilde s)$) such that $x \in \NN^{T \times K}$, $s \in \RR^{T \times d}$ (resp. $\tilde x, \tilde s$) with $K \geq d > 0$. Assume $(x, s), \, (\tilde x, \tilde s)$ are distributed according to two distinct ARPLN-ICA models with $C = 1$. Assume $\Gamma \in \RR^{K \times d}$ (resp. $\tilde \Gamma$) is full rank and $B$ (resp. $\tilde B$) is diagonal with non-zero pairwise distinct entries. Then, if $x$ and $\tilde x$ have the same distribution, $\Gamma$ and $\tilde \Gamma$ are equal up to column-wise permutation and non-zero rescaling. 
\end{corollary}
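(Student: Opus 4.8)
The plan is to reduce the claim to the identifiability of a \emph{linearly mixed Gaussian AR(1) process} and then to close the argument with a joint-diagonalization step in the spirit of Theorem~1 of \citet{belouchrani2002blind} (Lemma~\ref{prop:linear_mixing_gaussian_process_identifiability}). First, since $x$ and $\tilde x$ have the same law, Proposition~\ref{prop:identifiability_plnica} yields $(\Gamma s_t)_{1\le t\le T}\sim(\tilde\Gamma\tilde s_t)_{1\le t\le T}$; write $w_t=\Gamma s_t$ and $\tilde w_t=\tilde\Gamma\tilde s_t$. Each $s_t$ is an affine image of the Gaussian initial state and the Gaussian innovations, so $(w_t)_t$ and $(\tilde w_t)_t$ are Gaussian processes, and equality of laws forces equality of all first and second moments; in particular $\cov(w_t,w_s)=\cov(\tilde w_t,\tilde w_s)$ for all $t,s\le T$.

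Second, I would use source independence and the AR(1) recursion to put the relevant covariances in jointly-diagonalizable form. Coordinate-wise independence of the sources makes $\Lambda_{t,s}:=\cov(s_t,s_s)$ diagonal, with $\Lambda_{1,1}=\diag(\bar\psi^{(1)},\dots,\bar\psi^{(d)})\succ0$, while $s_{t+1}^{(i)}=B^{(i)}s_t^{(i)}+b^{(i)}+(\text{noise})$ gives $\Lambda_{2,1}=B\Lambda_{1,1}$. Hence $\cov(w_1,w_1)=\Gamma\Lambda_{1,1}\Gamma^\top$ and $\cov(w_2,w_1)=\Gamma B\Lambda_{1,1}\Gamma^\top$ (the latter using $T\ge2$), and the analogous identities hold for $\tilde\Gamma,\tilde B,\tilde\Lambda_{1,1}$. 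Because $\Gamma$ has full column rank and $\Lambda_{1,1}\succ0$, the matrix $\cov(w_1,w_1)$ has rank $d$ and column space equal to $\mathrm{colspan}(\Gamma)$; since $\cov(w_1,w_1)=\cov(\tilde w_1,\tilde w_1)$, this forces $\mathrm{colspan}(\tilde\Gamma)=\mathrm{colspan}(\Gamma)$, so a single whitener $W\in\RR^{d\times K}$ with $W\cov(w_1,w_1)W^\top=I_d$ serves both models, making $U:=W\Gamma\Lambda_{1,1}^{1/2}$ and $\tilde U:=W\tilde\Gamma\tilde\Lambda_{1,1}^{1/2}$ orthogonal. Using that $B$ and $\Lambda_{1,1}$ are diagonal and commute, one computes $W\cov(w_2,w_1)W^\top=UBU^\top$ and likewise $W\cov(\tilde w_2,\tilde w_1)W^\top=\tilde U\tilde B\tilde U^\top$; since the two sides agree, $UBU^\top=\tilde U\tilde B\tilde U^\top$.

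Finally, I would invoke uniqueness of the orthogonal diagonalization of a symmetric matrix with simple spectrum. By hypothesis $B$ and $\tilde B$ are diagonal with pairwise distinct (non-zero) entries, so $UBU^\top$ has $d$ distinct eigenvalues with one-dimensional eigenspaces; hence there is a permutation matrix $P$ with $\tilde B=P^\top B P$ and a sign matrix $S=\diag(\pm1)$ with $\tilde U=UPS$. Inverting the whitening with a common right inverse $V$ of $W$ on $\mathrm{colspan}(\Gamma)$ (so that $VW\Gamma=\Gamma$ and $VW\tilde\Gamma=\tilde\Gamma$) gives $\Gamma=VU\Lambda_{1,1}^{-1/2}$ and therefore $\tilde\Gamma=V\tilde U\tilde\Lambda_{1,1}^{-1/2}=\Gamma\,\Lambda_{1,1}^{1/2}PS\tilde\Lambda_{1,1}^{-1/2}$; absorbing the diagonal factor through the permutation, $\Lambda_{1,1}^{1/2}PS\tilde\Lambda_{1,1}^{-1/2}=PD$ with $D$ diagonal and invertible, so $\tilde\Gamma=\Gamma PD$, which is exactly equality up to a column-wise permutation and non-zero rescaling. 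Equivalently, one may feed the pair $\big(\cov(w_1,w_1),\cov(w_2,w_1)\big)$ directly into Lemma~\ref{prop:linear_mixing_gaussian_process_identifiability} after checking that $\Lambda_{1,1}^{-1}(B\Lambda_{1,1})=B$ has distinct diagonal entries.

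The step I expect to be the main obstacle is the bookkeeping in the non-square case $K>d$: one must argue cleanly that $\Gamma$ and $\tilde\Gamma$ share the same $d$-dimensional column space (they do, since it is pinned down by the matched rank-$d$ matrix $\cov(w_1,w_1)$), so that a single whitener and a single right inverse serve both models, and then track how the two diagonal scale factors $\Lambda_{1,1}^{1/2}$ and $\tilde\Lambda_{1,1}^{-1/2}$, produced by whitening the two models, merge with the permutation and sign indeterminacy of the eigenvectors into a single diagonal invertible factor. The remaining ingredients — the AR(1) covariance computation and the simple-spectrum uniqueness — are routine, and the hypothesis that the diagonal of $B$ has pairwise distinct entries is precisely what the simple-spectrum argument requires, confirming that the general whitened-lag-covariance condition of Corollary~\ref{corollary:identifiability_ARPLNICA} reduces, when $C=1$, to a mild non-degeneracy condition on $B$.
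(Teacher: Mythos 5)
Your proposal is correct and follows essentially the same route as the paper: reduce to $(\Gamma s_t)_t \sim (\tilde\Gamma \tilde s_t)_t$ via Proposition~\ref{prop:identifiability_plnica}, match the lag-0 and lag-1 second moments $\Gamma\Lambda_{1,1}\Gamma^\top$ and $\Gamma B\Lambda_{1,1}\Gamma^\top$, whiten, and use the simple-spectrum/orthogonal-diagonalization argument (the paper's Lemma~\ref{lemma:F_PS}) to force a signed permutation, with the distinct nonzero diagonal entries of $B$ playing exactly the role you identify. The only difference is bookkeeping: the paper first extracts $\tilde\Gamma=\Gamma Q$ from the matched rank-$d$ covariance (Lemma~\ref{lemma:columns_swap_existing}) and whitens the $d\times d$ matrix $F=L_1^{-1}Q\tilde L_1$ in source space, whereas you whiten in observation space and pull back with a right inverse; these are equivalent.
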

\begin{proof}
    From Proposition~\ref{prop:identifiability_plnica}, we get $(\Gamma s_t)_{1 \leq t \leq T} \sim (\tilde \Gamma \tilde s_t)_{1 \leq t \leq T}$. The remaining indeterminacies are then reduced to identification of linear ICA models, characterized in the following Lemma:
    \begin{lemma}
    \label{prop:linear_mixing_gaussian_process_identifiability}
        Let $(w,s)=(w_t,s_t)_{1\le t\le T}$ (resp. $(\tilde w,\tilde s)$) such that $s\in\RR^{T\times d}$ (resp. $\tilde s \in\RR^{T\times d}$) and $w\in\RR^{T\times K}$ (resp. $\tilde w\in\RR^{T\times K}$) with $K\ge d>0$.
        Assume that $s$ is a $d$-dimensional Markov chain such that $s_1\sim \mathcal{N}(m,{\Phi})$ and, conditionally on $s_t$, $s_{t+1}\sim \mathcal{N}({B s_t+b},{\Psi})$,
        where $B\in D_d(\RR^*)$ has pairwise distinct diagonal entries, and $\Phi,\Psi\in D_d(\RR_{>0})$. Define the observations at $t \leq T$ by $w_t=\Gamma s_t$, where $\Gamma\in\RR^{K\times d}$ satisfies $\mathrm{rank}(\Gamma)=d$. Assume a similar construction for $(\tilde w,\tilde s)$ with Markov parameters $\tilde B, \tilde\Phi, \tilde \Psi \in D_d(\RR^*)$ and $\tilde\Gamma$ of rank $d$.
        Then, if $w$ and $\tilde w$ have the same distribution, there exists a permutation matrix $P\in\RR^{d\times d}$ and a diagonal scaling matrix $D\in D_d(\RR^*)$ such that
        \begin{equation}
        \begin{split}
        &\tilde \Gamma = \Gamma P D^{-1} \eqsp, \quad \tilde B = DP^\top B PD^{-1}  \eqsp, \quad \tilde \Phi = DP^\top \Phi PD \eqsp, \\
        &\tilde \Psi = DP^\top \Psi PD \eqsp, \quad \tilde b = D P^\top b \eqsp, \quad \tilde m = DP^\top m \eqsp.
        \end{split}
        \end{equation}
    \end{lemma}
    concluding the proof.
\end{proof}

\paragraph{Proof of Lemma~\ref{prop:linear_mixing_gaussian_process_identifiability}.} Linear mixing identification has been widely studied in ICA and blind source identification.
In classical linear ICA models, identifiability of the mixing matrix relies on additional structure such as non-Gaussianity.
In particular, when the latent sources are Gaussian and no model structure is exploited, the mixing matrix is not identifiable beyond an arbitrary orthogonal transformation \citep{hyvarinen2000independent}.
When temporal dependence is available, second-order methods such as AMUSE \citep{tong1990amuse} and SOBI \citep{belouchrani2002blind}
leverage temporal auto-correlation structure to identify the mixing matrix up to permutation and column-wise rescaling. Their generic identifiability conditions typically require sufficiently distinct second-order statistics across components, either at a single-lag (see their Theorem 1) or across joint diagonalization (see their Theorem 2).
For completeness, we provide a self-contained derivation showing that these identifiability arguments specialize to the latent
linear-Gaussian dynamics in ARPLN-ICA, and yield the same invariance class in our setting, with weaker assumptions in the regime-homogeneous AR models given in Lemma~\ref{prop:linear_mixing_gaussian_process_identifiability}.

\begin{proof}
    Let $v$ (resp. $\tilde v$) the centered process of $s_t$ defined by
    \begin{equation}
        v_t = s_t - \EE{}{s_t} \eqsp.
    \end{equation}
    Denote the auto-correlation matrix at time $t, \ell \geq 0$ by
    \begin{equation}
        \Sigma_{t,t-\ell} = \cov(v_t, v_{t-\ell}) \eqsp.
    \end{equation}
    Since $s_t$ is component-wise independent, $v_t$ is component wise independent and thus $\Sigma_{t,t-\ell}$ is diagonal for all $t, \ell \geq 0$. Additionally, the assumption $w_t \sim \tilde w_t$ yields $\Gamma s_t \sim \tilde \Gamma \tilde s_t$, and thus $\Gamma v_t \sim \tilde \Gamma \tilde v_t$, which yields
    \begin{equation}
    \label{eq:autocorr_gamma_relationship}
        \Gamma \Sigma_{t,t-\ell} \Gamma^\top = \tilde \Gamma \tilde \Sigma_{t,t-\ell}\tilde \Gamma^\top \eqsp.
    \end{equation}
    In particular, for $t = 1, \,\ell = 0$ we get $\Gamma \Phi \Gamma^\top = \tilde \Gamma \tilde \Phi\tilde \Gamma^\top$ where $\Phi,\tilde\Phi$ are diagonal positive definite covariance matrices, and thus of rank $d$. Consequently, by Lemma~\ref{lemma:columns_swap_existing}, there exists $Q \in \RR^{d \times d}$ invertible such that
    \begin{equation}
    \label{eq:gamma_identification_up_ortho}
        \tilde \Gamma = \Gamma Q \eqsp. 
    \end{equation}
    Since $\Gamma$ is full rank, there exists a left inverse $\Gamma^\dagger \in \RR^{d \times K}$ such that $\Gamma^\dagger \Gamma = I_d$. Applying the left inverse to Eq.~\eqref{eq:autocorr_gamma_relationship} and using Eq.~\eqref{eq:gamma_identification_up_ortho} yields
    \begin{equation}
    \label{eq:autocorr_Q_relation}
        \Sigma_{t,t-\ell} = Q \tilde \Sigma_{t,t-\ell} Q^\top \eqsp.
    \end{equation}
    Let $L_1, \tilde L_1$ the Cholesky decomposition of $\Sigma_{1,1} = \Phi, \tilde \Sigma_{1,1} = \tilde \Phi$ respectively. Since both are diagonal positive definite matrices, the Cholesky decomposition are diagonal positive definite, and thus we can define $F = L_1^{-1} Q \tilde L_1$. Using that $\Sigma_{1,1} = L_1L_1^\top$ and $\tilde \Sigma_{1,1} = \tilde L_1\tilde L_1^\top$ in Eq.~\eqref{eq:autocorr_Q_relation} yields $F$ orthogonal since
   \begin{equation}
       I_d = (L_1^{-1} Q \tilde L_1) (L_1^{-1} Q \tilde L_1)^{\top} = FF^\top \eqsp.
   \end{equation}
   Focusing on the auto-correlation at $t = 2, \,\ell = 1$, using the tower property we have
   \begin{equation}
       \label{eq:autocorr2_relationship}
       \begin{split}
           \Sigma_{2,1} = \cov(v_2, v_1) = \EE{}{v_2v_1^\top} = \EE{}{\EE{}{v_2 \mid v_1}v_1^\top} = B\Phi = B L_1 L_1^\top \eqsp.
       \end{split}
   \end{equation}
   Thus using this decomposition in Eq.~\ref{eq:autocorr_Q_relation} gives $BL_1L_1 = Q\tilde B \tilde L_1 \tilde L_1 Q^\top$. Since $B,L_1,\tilde B, \tilde L_1$ are diagonal, they can be permuted in preferred order, and thus by invertibility of $L_1$ we get
   \begin{equation}
       B = (L_1^{-1} Q \tilde L_1) \tilde B (L_1^{-1} Q \tilde L_1)^\top = F \tilde B F^\top \eqsp.
   \end{equation}
   Since $B,\tilde B$ are diagonal with non-zero entries and $F$ orthogonal, by Lemma~\ref{lemma:F_PS} $F$ is the composition of a permutation $P \in \RR^{d \times d}$ with a signed diagonal matrix $S = [\pm e_i]_{1 \leq i \leq d}$ where $(e_i)_{1 \leq i \leq d}$ denotes the canonical basis, and thus $F = PS$. 
   Additionally, since $F = L_1^{-1} Q \tilde L_1$ we get that $Q = P D^{-1}$ with $D^{-1} = (P^\top L_1 P) S \tilde L_1^{-1}$ diagonal non-zero since $L_1, \tilde L_1, S$ are diagonal with non-zero entries. Consequently, since $\tilde \Gamma = \Gamma Q$ we get
   \begin{equation}
       \tilde \Gamma = \Gamma P D^{-1} \eqsp.
   \end{equation}
   Then, using that $w_1 \sim \tilde w_1$, $\Gamma^\dagger \Gamma = I_d$, and $Q^{-1} =(PD^{-1})^{-1} = DP^\top$, we have
   \begin{equation}
       \begin{split}
           \tilde m =  DP^\top m \eqsp, \quad \tilde \Phi = DP^\top \Phi P D \eqsp. 
       \end{split}
   \end{equation}
   Similarly, using that conditionally on $w_{1}, \tilde w_{1}$, $\Gamma v_2 \sim \tilde \Gamma \tilde v_2$, we have
   \begin{equation}
       \begin{split}
           \tilde \Psi = DP^\top \Psi PD \eqsp.
       \end{split}
   \end{equation}
   Then, using Eq.~\eqref{eq:autocorr2_relationship} and Eq.~\eqref{eq:autocorr_Q_relation} we get $BQ = Q\tilde B$ and thus
   \begin{equation}
       \tilde B = DP^\top B PD^{-1} \eqsp.
   \end{equation}
   Finally, the bias terms $b, \tilde b$ are recovered from the processes $w_2, \tilde w_2$ which have the same distribution conditionally on $w_1, \tilde w_1$, and thus same mean that is
   \begin{equation}
       \begin{split}
           Bm + b &= PD^{-1} \tilde B \tilde m + PD^{-1} \tilde b \eqsp.
       \end{split}
   \end{equation}
   Since $B = PD^{-1} \tilde B DP^\top$, using that $\tilde m = DP^\top m$ we get $PD^{-1} \tilde B \tilde m = PD^{-1} \tilde B DP^\top m = Bm$, which yields
   \begin{equation}
       \tilde b = DP^\top b \eqsp,
   \end{equation}
   concluding the proof.
\end{proof}

\subsection{Proof of Lemma~\ref{prop:linear_switching_mixing_gaussian_process_identifiability_appendix}: Identifiability of linear switching dynamic with linear mixing}
\label{app:identifiability_switching_linear_ICA}
We extend the result from the regime-homogeneous setting of Lemma~\ref{prop:linear_mixing_gaussian_process_identifiability} to the linear switching dynamics setting. In the regime-homogeneous case, identification can be achieved by exploiting the AR coefficient matrix $B$, but this argument does not apply when considering regime changes. We therefore adopt the classical second-order whitening approach of \citet{belouchrani2002blind} to establish identifiability of the linear mixing component in ARPLN-ICA. Importantly, the joint diagonalization assumption on lagged covariances required in \citet{belouchrani2002blind} can be relaxed in our setting thanks to the sources independence, yielding milder conditions than those required in their setting.

\begin{proof}
    The proof follows the second-order argument of Lemma~\ref{prop:linear_mixing_gaussian_process_identifiability}.
    Let $v$ define the centered process $v_t = s_t - \EE{}{s_t}$.
    Since $w \sim \tilde w$, we have $\Gamma v_t \sim \tilde \Gamma \tilde v_t$. For all $t,\ell$, consider the auto-correlation
    \begin{equation}
        \Sigma_{t,t-\ell} = \cov(v_t,v_{t-\ell}),
    \end{equation}
    and similarly $\tilde\Sigma_{t,t-\ell} = \cov(\tilde v_t,\tilde v_{t-\ell})$. Then for all $t,\ell \geq 0$, since $\Gamma v_t \sim \tilde \Gamma \tilde v_t$ we have
    \begin{equation}
        \label{eq:GammaSigmaGamma_switching}
        \Gamma \Sigma_{t,s}\Gamma^\top = \tilde\Gamma \tilde\Sigma_{t,s}\tilde\Gamma^\top \eqsp.
    \end{equation}
    By definition of the linear switching dynamic model, the latent process factorizes as $p_\btheta(s, u) = \prod_{i=1}^d p_\btheta(s^{(i)}_{1:T}, u^{(i)}_{1:T})$. Therefore, marginalizing on $u$ yields $p_\btheta(s) = \prod_{i=1}^{d} p_\btheta(s^{(i)}_{1:T})$, indicating that components are mutually independent and thus, for every $t,\ell$, $\Sigma_{t,t-\ell}$ is diagonal. In particular $\Sigma_{1,1}$ is diagonal with strictly positive entries by assumption and therefore has a diagonal Cholesky factor $L_1 \in D_d(\RR^*_+)$ such that $\Sigma_{1,1} = L_1 L_1^\top \eqsp,$
    and similarly $\tilde\Sigma_{1,1} = \tilde L_1 \tilde L_1^\top$ with $\tilde L_1 \in D_d(\RR^*_+)$.

    Taking $t=1, \ell=0$ in~\eqref{eq:GammaSigmaGamma_switching} yields
    \[
        \Gamma \Sigma_{1,1}\Gamma^\top = \tilde\Gamma \tilde\Sigma_{1,1}\tilde\Gamma^\top \eqsp.
    \]
    Since $\Sigma_{1,1}$ and $\tilde\Sigma_{1,1}$ are diagonal positive definite covariance matrices, they are full rank. Then, Lemma~\ref{lemma:columns_swap_existing} implies that there exists an invertible matrix $Q \in \RR^{d\times d}$ such that
    \begin{equation}
        \label{eq:GammaQ_switching}
        \tilde\Gamma = \Gamma Q \eqsp.
    \end{equation}
    Letting $\Gamma^\dagger$ a left-inverse of $\Gamma$ then yields for all $t,\ell \geq 0$,
    \begin{equation}
        \label{eq:Sigma_conjugacy_switching}
        \Sigma_{t,t-\ell} = Q \tilde\Sigma_{t,t-\ell} Q^\top \eqsp.
    \end{equation}
    Define $L_1, \tilde L_1$ the respective diagonal positive definite Cholesky of the covariances $\Sigma_{1,1}, \tilde \Sigma_{1,1}$, then by the same arguments as in Proposition~\ref{prop:linear_mixing_gaussian_process_identifiability}, $F = L_1^{-1} Q \tilde L_1 \eqsp$ is orthogonal.
    Let $(t_0,\ell_0)$ satisfy~\eqref{eq:distinct_whitened_lagcov}, we define $G = L_1^{-1}\, \Sigma_{t_0,t_0-\ell_0}\, L_1^{-\top}$ (resp. $\tilde G$), diagonal by construction.
    Since by Eq.~\eqref{eq:Sigma_conjugacy_switching}, we obtain
    \begin{equation}
        G = L_1^{-1} Q \tilde\Sigma_{t_0,t_0-\ell_0} Q^\top L_1^{-\top}
        = (L_1^{-1} Q \tilde L_1)\,(\tilde L_1^{-1}\tilde\Sigma_{t_0,t_0-\ell_0}\tilde L_1^{-\top})\,(L_1^{-1} Q \tilde L_1)^\top
        = F \tilde G F^\top \eqsp.
    \end{equation}
    By assumption, $G$ has pairwise distinct diagonal entries and $F$ is orthogonal. Therefore, by Lemma~\ref{lemma:F_PS}, there exist a permutation matrix $P \in \RR^{d\times d}$ and a signed diagonal matrix $S \in D_d(\{-1, 1\})$ such that $F = PS$.
    The proof is concluded by the same arguments as that of Lemma~\ref{prop:linear_mixing_gaussian_process_identifiability}.
\end{proof}

\subsection{Technical lemmas}

\begin{lemma} 
\label{lemma:columns_swap_existing}
Let $\Sigma, \tilde \Sigma \in D_d(\RR^*_+)$ positive diagonal invertible and $\Gamma,\tilde \Gamma \in \RR^{K\times d}$ of rank $d$. If $\Gamma \Sigma \Gamma^\top = \tilde \Gamma \tilde \Sigma \tilde \Gamma^\top$, then there exists $Q \in \RR^{d \times d}$ invertible such that $\tilde \Gamma = \Gamma Q$.
\end{lemma}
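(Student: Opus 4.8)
The plan is to identify the common matrix $M=\Gamma\Sigma\Gamma^\top=\tilde\Gamma\tilde\Sigma\tilde\Gamma^\top$ and to read off from its column space the column spaces of $\Gamma$ and of $\tilde\Gamma$; once these coincide, the matrix $Q$ is obtained coordinatewise by expanding the columns of $\tilde\Gamma$ in those of $\Gamma$.

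First I would use that $\Sigma\in D_d(\RR^*_+)$ admits a positive diagonal square root $\Sigma^{1/2}$ (here diagonal \emph{positive}-definiteness, not just symmetry, is what matters, since it makes $\Sigma^{1/2}$ invertible) to factor $M=(\Gamma\Sigma^{1/2})(\Gamma\Sigma^{1/2})^\top$. Right-multiplication by the invertible matrix $\Sigma^{1/2}$ leaves both the rank and the column space of $\Gamma$ unchanged, so $\Gamma\Sigma^{1/2}$ has rank $d$ and $\mathrm{range}(\Gamma\Sigma^{1/2})=\mathrm{range}(\Gamma)$; combined with the standard identity $\mathrm{range}(AA^\top)=\mathrm{range}(A)$, this gives $\mathrm{range}(M)=\mathrm{range}(\Gamma)$. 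Applying the same reasoning to the analogous factorisation through $\tilde\Gamma$ gives $\mathrm{range}(M)=\mathrm{range}(\tilde\Gamma)$, hence $\mathrm{range}(\Gamma)=\mathrm{range}(\tilde\Gamma)=:V$, a $d$-dimensional subspace of $\RR^K$.

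Next, each column of $\tilde\Gamma$ belongs to $V=\mathrm{range}(\Gamma)$, so the $j$-th column equals $\Gamma q_j$ for some $q_j\in\RR^d$, for $j=1,\dots,d$; stacking these into $Q=[q_1\ \cdots\ q_d]\in\RR^{d\times d}$ yields $\tilde\Gamma=\Gamma Q$. Finally I would check invertibility of $Q$ by injectivity: if $Qx=0$ for some $x\neq 0$, then $\tilde\Gamma x=\Gamma Qx=0$, contradicting $\rank(\tilde\Gamma)=d$; hence $\ker Q=\{0\}$ and $Q$ is invertible.

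I do not expect a genuine obstacle here: the argument is linear-algebraic and self-contained. The only point requiring a little care is keeping track of the rank bookkeeping — ensuring that $M$ has rank \emph{exactly} $d$ and column space \emph{exactly} $\mathrm{range}(\Gamma)$ (rather than a proper subspace), which is precisely what the invertibility of $\Sigma$ and $\tilde\Sigma$ and the full column rank of $\Gamma$, $\tilde\Gamma$ guarantee.
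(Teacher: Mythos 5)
Your proposal is correct and follows essentially the same route as the paper's proof: both factor the common matrix through $\Sigma^{1/2}$ to show $\mathrm{range}(\Gamma\Sigma\Gamma^\top)=\mathrm{range}(\Gamma)=\mathrm{range}(\tilde\Gamma)$, deduce $\tilde\Gamma=\Gamma Q$, and obtain invertibility of $Q$ from the full column rank of $\tilde\Gamma$. The only cosmetic difference is that you invoke $\mathrm{range}(AA^\top)=\mathrm{range}(A)$ and a kernel argument where the paper uses rank counting, which is an equivalent bookkeeping choice.
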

\begin{proof}
    Let $R = \Gamma \Sigma \Gamma^\top = \tilde \Gamma \tilde \Sigma \tilde \Gamma^\top$, then by definition $\mathrm{Im}(R) \subseteq \mathrm{Im}(\Gamma)$ and $\mathrm{Im}(R) \subseteq \mathrm{Im}(\tilde \Gamma)$.

    Since $\Sigma$ is positive diagonal, we define its Cholesky $L = \Sigma^{1/2} = L^\top$. Then, since $\Gamma L$ is a real-valued matrix $\rank(R) = \rank(\Gamma \Sigma \Gamma^\top) = \rank(\Gamma L)$. Additionally, $L$ is of rank $d$ since $\Sigma$ is of rank $d$, thus $\rank(R) = \rank(\Gamma) = d$ which yields $\mathrm{Im} (R) = \mathrm{Im}(\Gamma) = \mathrm{Im}(\tilde \Gamma)$.

    Thus $\Gamma, \tilde \Gamma$ have the same column space, that is for some $Q \in \RR^{d \times d}, \tilde \Gamma = \Gamma Q$. Thus, $\rank (\tilde \Gamma) = \rank (\Gamma Q) \leq \min \{\rank(\Gamma), \rank(Q) \} \leq \rank(Q)$, and since $\rank (\Gamma) = \rank (\tilde \Gamma) = d$ this forces $\rank (Q) = d$, concluding the proof.
\end{proof}

\begin{lemma}
\label{lemma:F_PS}
    Let $F \in \RR^{d \times d}$ an orthogonal matrix, $A,B \in D_{d}(\RR^{*})$ two diagonal matrices with non-zero diagonal coefficients. Then, if $A = F B F^\top$, there exists a permutation matrix $P \in \RR^{d \times d}$ and a signed identity matrix $S = \diag([\pm 1, \dots, \pm 1]) \in \RR^{d \times d}$ such that $F = PS$.
\end{lemma}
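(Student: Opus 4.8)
The plan is to read off the sparsity pattern of $F$ directly from the intertwining relation that $A=FBF^\top$ imposes between two diagonal matrices. First I would rewrite the hypothesis as a commutation identity: right-multiplying $A=FBF^\top$ on the right by $F$ and using $F^\top F=\identity$ gives $AF=FB$. Writing this equality entrywise, and using that $A$ and $B$ are diagonal, yields $A_{ii}F_{ij}=F_{ij}B_{jj}$, hence $F_{ij}\,(A_{ii}-B_{jj})=0$ for all $i,j$. Consequently $F_{ij}=0$ whenever $A_{ii}\neq B_{jj}$, so the support of $F$ is contained in the set of index pairs where the diagonal entries of $A$ and $B$ coincide.

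Next I would promote this zero pattern to a permutation structure. Since $A=FBF^\top$ is an orthogonal conjugation, $A$ and $B$ are similar and therefore share the same multiset of eigenvalues, which for diagonal matrices are exactly the diagonal entries; thus the diagonal entries of $A$ agree with those of $B$ up to reordering. Because these entries are pairwise distinct, the assignment sending each column index $j$ to the unique row index $\sigma(j)$ with $A_{\sigma(j)\sigma(j)}=B_{jj}$ is a well-defined bijection of $\{1,\dots,d\}$. By the previous step, column $j$ of $F$ can be nonzero only in row $\sigma(j)$, so $F$ has at most one nonzero entry per column; since $F$ is orthogonal and hence invertible, no column can vanish, so each column carries exactly one nonzero entry and the support of $F$ is precisely the pattern of the permutation matrix $P$ associated with $\sigma$.

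Finally I would fix the magnitudes using orthonormality of the columns. Each column of $F$ has unit Euclidean norm and a single nonzero entry, so that entry equals $\pm 1$; collecting these signs into $S=\diag([\pm 1,\dots,\pm 1])$ gives $F=PS$, the claimed decomposition. I expect the middle step to be the main obstacle: turning the pointwise relation $F_{ij}(A_{ii}-B_{jj})=0$ into a genuine one-nonzero-per-row-and-column pattern is exactly where the distinctness of the diagonal entries does the work, since it makes the index matching $\sigma$ bijective and prevents several columns from being supported on a shared row. Once the support is known to be a permutation pattern, orthonormality immediately supplies the signs and closes the argument.
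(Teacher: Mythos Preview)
Your argument is correct and follows the same route as the paper's: both derive $AF=FB$, conclude that each column of $F$ lives in a single coordinate direction, and then fix the $\pm 1$ entries via orthonormality. The paper phrases the middle step in eigenvector language (each column $f_j$ is an eigenvector of the diagonal matrix $A$ with eigenvalue $b_j$, hence a scalar multiple of some standard basis vector $e_k$), while you work entrywise via $F_{ij}(A_{ii}-B_{jj})=0$; these are the same observation.

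One remark: you invoke pairwise distinctness of the diagonal entries, which is not part of the stated hypothesis. You are right to do so, since without it the lemma is false (take $A=B=\identity$ and $F$ any rotation that is not a signed permutation). The paper's own proof relies on the same unstated assumption at the step ``for $j$ there exists $k$ such that $a_k e_k=b_j f_j$'', which requires the eigenspaces of $A$ to be one-dimensional. In both applications of the lemma in the paper the relevant diagonal matrix is explicitly assumed to have pairwise distinct entries, so the omission from the lemma statement is harmless in context but should strictly be added as a hypothesis.
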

\begin{proof}
    Since $F^{-1} = F^\top$ and $A = F B F^\top$, $F$ is a change-of-basis matrix. Thus if $\set{a_1}{a_d}$ denotes the specter of $A$ and $\set{b_1}{b_d}$ the specter of $B$ we have $\set{a_1}{a_d} = \set{b_1}{b_d}$.

    Besides, $A = F B F^\top$ yields $AF = FB$ and thus denoting by $[f_j]_{1 \leq j \leq d}$ the columns of $F$, for all $j \leq d$
    \begin{equation}
        A f_j = b_jf_j \eqsp,
    \end{equation}
    that is $(f_j)_{1 \leq j \leq d}$ are eigen-vectors of $A$ with eigen-values $(b_j)_{1 \leq j \leq d}$. Since $A$ is diagonal, denoting by $(e_k)_{1 \leq k \leq d}$ the canonical basis of $\RR^d$ we get for all $k \leq d$,
    \begin{equation}
        Ae_k = a_k e_k \eqsp.
    \end{equation}
    Thus for $j \leq d$, there exists $k \leq d$ such that
    \begin{equation}
        a_ke_k = b_jf_j \eqsp.
    \end{equation}
    Since $a_k,b_j$ are non-zero, we can denote $\alpha_{kj} = a_k / b_j \neq 0$, and thus for all $f_j = \alpha_{kj} e_k$. Since $F$ is orthogonal, for all $j \leq d, \norm{f_j}_2^2 = 1$ and thus $\alpha_{kj} = \pm 1$, concluding the proof.
\end{proof}

\section{Evidence Lower Bound derivation}
\label{app:elbo_plnica}
Given temporal count data $x = x_{1:T}^{1:K} \in \NN^{T \times K}$, the ARPLN-ICA model induces an intractable observed likelihood $p_\btheta(x)$, preventing regular maximum likelihood estimation. Similarly, the posterior $p_\btheta(s,u \mid x)$ is intractable, preventing the use of the Expectation-Maximization estimation algorithm. Consequently, we perform parameter inference using a variational proxy (Section~\ref{sec:variational_inference}). For the variational family specified in Eq.~\eqref{eq:variational_joint_law} and Eq.~\eqref{eq:variational_sources_law}, parameter learning amounts to maximizing the evidence lower bound (ELBO) over $(\btheta,\bphi)$ given by
\begin{equation}
    \ELBO(x; \btheta, \bphi) = \mathds{E}_{q_{\bphi}(\cdot \mid x)}[\log p_\btheta(x, s, u) - \log q_\bphi(s, u \mid x)] \eqsp.
\end{equation}
Alternatively, the ELBO can be expressed as 
\begin{equation}
    \ELBO(x; \btheta, \bphi) = \mathds{E}_{q_{\bphi}(\cdot \mid x)}[\log p_\btheta(x\mid s, u)] - \EE{q_\bphi(\cdot \mid x)}{\log \frac{q_\bphi(s,u \mid x)}{p_\btheta(s,u)}} \eqsp,
\end{equation}
which then yields the particular decomposition for ARPLN-ICA models given by
\begin{equation}
    \begin{split}
        \ELBO(x; \btheta, \bphi) &= \EE{q_\bphi(\cdot \mid x)}{\log p_\btheta(x_{1:T}^{(1:K)} \mid s_{1:T}^{(1:d)})} - \EE{q_\bphi(\cdot \mid x)}{\log \frac{q_\bphi({s_{1:T}^{(1:d)}, u_{1:T}^{(1:d)} \mid x)}}{p_\btheta(s_{1:T}^{(1:d)} \mid u_{1:T}^{(1:d)})p_\btheta(u_{1:T}^{(1:d)})}} \\
        &= \sum_{t=1}^T \EE{q_\bphi(\cdot \mid x)}{\log p_\btheta(x_t \mid s_t)} - \sum_{i=1}^d \bigg[\KL{q_\bphi(u_{1:T}^{(i)} \mid x)}{p_\btheta(u_{1:T}^{(i)})} - \entropy\left[q_\bphi(s_{1:T}^{(i)} \mid x)\right] \\& \hspace{3.5cm} - \EE{q_\bphi(\cdot \mid x)}{\log p_\btheta(s_1^{(i)} \mid u_1^{(i)})} - \sum_{t=1}^{T-1} \EE{q_\bphi(\cdot \mid x)}{\log p_\btheta(s_{t+1}^{(i)} \mid u_{t+1}^{(i)}, s_t^{(i)})} \bigg] \eqsp.
    \end{split} 
\end{equation}
Each term can be derived in closed-form, starting with the variational PLN emission term given by
\begin{equation}
    \begin{split}
        \EE{q_\bphi(\cdot \mid x)}{\log p_\btheta(x_t\mid s_t)} &= \sum_{i=1}^K \EE{q_\bphi(\cdot \mid x)}{-\exp(\Gamma_i^\top s_t) + x_t^{(i)}\Gamma_i^\top s_t - \log x_t^{(i)}!} \\
        &= \sum_{i=1}^K x_t^{(i)}\Gamma_i^\top \mu_t(x) - \exp\left(\Gamma_i^\top \mu_t(x) + \frac{1}{2}\Gamma_i^\top \diag(\Sigma_t(x)) \Gamma_i\right) - \log (x_t^{(i)}!) \eqsp, \\
        &= 1_K^{\top}\left(x_t \odot \big(\Gamma\mu_t(x)\big) - \exp\left(\Gamma\mu_t(x) + \frac{1}{2} \diag(\Gamma\Sigma_t(x)\Gamma^\top)\right) - \log (x_t!)\right) \eqsp,
    \end{split}
\end{equation}
where $\mu_t(x) = \EE{q_\bphi(\cdot \mid x)}{s_t}$ and $\Sigma_t(x) = \EE{q_\bphi(\cdot \mid x)}{(s_t - \mu_t)(s_t - \mu_t)^\top}$ are given by tower property recursions from the variational sources law, such that
\begin{equation}
    \label{eq:forward_recursions_of_sources}
    \begin{split}
        &\mu_1(x) = m(x), \quad \mu_t(x) = \tilde B_t(x) \mu_{t-1}(x) + \tilde b_t(x)\\
        &\Sigma_1(x) = \tilde \psi_1(x), \quad \Sigma_t(x) = \tilde\psi_t(x) + \tilde B_t(x)^2 \Sigma_{t-1}(x) \eqsp,
    \end{split}
\end{equation}
where $m(\cdot), \tilde B_t(\cdot), \tilde b_t(\cdot), \tilde \psi_t(\cdot)$ are the variational parameters in Eq.~\eqref{eq:variational_sources_law}, and where  $m(\cdot), \tilde b_t(\cdot) \in \RR^{d}$, $\tilde B_t(\cdot) \in \RR^{d \times d}$ are diagonal and $\tilde \psi_t(\cdot)$ is diagonal positive definite.
Then, denoting the variational switching states Markov kernels by 
\begin{equation}
    \label{eq:variational_kernels_switching}
    \nu_k^{(i)}(x) = q_\bphi(u_1^{(i)} = k \mid x) \eqsp, \quad \tau_{t+1, k\ell}^{(i)}(x) = q_\bphi(u_{t+1}^{(i)} = k \mid u_{t}^{(i)} = \ell, x) \eqsp,
\end{equation}
the marginal variational distribution of switching states can be recursively expressed as
\begin{equation}
    \label{eq:marginal_variational_switching}
    \begin{split}
        &\alpha^{(i)}_{1,k}(x) = q_\bphi(u_1^{(i)} = k \mid x) = \nu_k^{(i)}(x) \eqsp, \\
        &\alpha^{(i)}_{t,k}(x) = q_\bphi(u_{t}^{(i)} = k \mid x) = \sum_{\ell=1}^C \tau_{t,k \ell}^{(i)} \alpha^{(i)}_{t-1,\ell}(x) \eqsp.
    \end{split}    
\end{equation}
Combining Eq.~\eqref{eq:variational_kernels_switching}-\eqref{eq:marginal_variational_switching}, the Kullback-Leibler divergence between the discrete switching states distributions is given by
\begin{equation}
    \begin{split}
        \KL{q_\bphi(u_{1:T}^{(i)} \mid x)}{p_\btheta(u_{1:T}^{(i)})} &= \sum_{u^{(i)} \in \{1, \dots, C\}^T} q_\bphi(u^{(i)} \mid x) \log \frac{q_\bphi(u^{(i)} \mid x)}{p_\btheta(u^{(i)})} \\
        &= \sum_{u^{(i)} \in \{1, \dots, C\}^T} \Bigg[q_\bphi(u_1^{(i)} \mid x) \log\frac{q_\bphi(u_1^{(i)} \mid x)}{p(u_1^{(i)})} \\& \hspace{3cm}+ \sum_{t=1}^{T-1} q_\bphi(u_{t+1}^{(i)}\mid u_t^{(i)} , x) q_\bphi(u_t^{(i)} \mid x) \log \frac{q_\bphi(u_{t+1}^{(i)} \mid u_t^{(i)}, x)}{p_\btheta(u_{t+1}^{(i)} \mid u_t^{(i)})}\Bigg] \\
        &= \sum_{k=1}^C \nu_k^{(i)}(x) \log \frac{\nu_k^{(i)}(x)}{\pi_k^{(i)}} + \sum_{t=1}^{T-1} \sum_{k=1}^C \sum_{\ell=1}^C \tau_{t+1,k \ell}^{(i)}(x)\alpha^{(i)}_{t,\ell}(x) \log \frac{\tau_{t+1,k \ell}^{(i)}(x)}{a_{k\ell}^{(i)}} \eqsp.
    \end{split}
\end{equation}
Then, the entropy of the variational distribution of the sources $q_\bphi(s_{1:T}^{(i)} \mid x)$ is explicable as
\begin{equation}
    \begin{split}
        \entropy\left[q_\bphi(s_{1:T}^{(i)} \mid x)\right] &= \entropy\left[q_\bphi(s_{1}^{(i)} \mid x)\right] + \sum_{t=1}^{T-1}\EE{q_\bphi(\cdot \mid x)}{\entropy\left[q_\bphi(s_{t+1}^{(i)} | s_t^{(i)}, x)\right]} \\
        &= \frac{T}{2}\log 2\pi e + \frac{1}{2} \left(\log \tilde\psi_1^{(i)}(x) + \sum_{t=1}^{T-1} \log \tilde \psi_{t+1}^{(i)}(x)\right) \eqsp,
    \end{split}
\end{equation}
and thus, summing over all components yields
\begin{equation}
    \sum_{i=1}^d \entropy\left[q_\bphi(s_{1:T}^{(i)} \mid x)\right] = \frac{Td}{2}\log 2\pi\rme + \frac{1}{2}\sum_{t=1}^T\log \det\tilde\psi_t(x) \eqsp.
\end{equation}
Finally, the expectation of the latent conditional likelihood can be written as
\begin{equation}
    \begin{split}
        \EE{q_\bphi(\cdot \mid x)}{\log p_\btheta(s_{t+1}^{(i)} | u_{t+1}^{(i)}, s_t^{(i)})} &= -\frac{1}{2} \EE{q_\bphi(\cdot \mid x)}{\log 2\pi \psi_{u_{t+1}^{(i)}}^{(i)} + (\psi^{(i)}_{u_{t+1}^{(i)}})^{-1} \left(s_{t+1}^{(i)} - B_{u_{t+1}^{(i)}}^{(i)} s_t^{(i)} - b_{u_{t+1}^{(i)}}^{(i)}\right)^2 } \\
        &= -\frac{1}{2} \sum_{k=1}^C \alpha_{t+1,k}^{(i)}(x) \left(\log 2\pi \psi_k^{(i)} + (\psi^{(i)}_k)^{-1} \EE{q_\bphi(\cdot \mid x)}{\left(s_{t+1}^{(i)} - B_k^{(i)} s_t^{(i)} - b_k^{(i)}\right)^2}\right) \\
        &= -\frac{1}{2} \sum_{k=1}^C \alpha_{t+1,k}^{(i)}(x) \bigg(\log 2\pi \psi_k^{(i)} + (\psi^{(i)}_k)^{-1} \Big((\mu_{t+1}^{(i)}(x) - B_k^{(i)} \mu_t^{(i)}(x) - b_k^{(i)})^2 \\ &\hspace{3cm}+ \Sigma_{t+1}^{(i,i)}(x) + (B_k^{(i)})^2 \Sigma_t^{(i,i)}(x) - 2 B_k^{(i)} \Sigma_{t,t+1}^{(i,i)}(x)\Big) \bigg)\eqsp, 
    \end{split}
\end{equation}
where the lag-one auto-correlation term is computed using the tower property and mean recursion of Eq.~\eqref{eq:forward_recursions_of_sources} as
\begin{equation}
\label{eq:autocorr_sources_elbo}
    \begin{split}
        \Sigma_{t,t+1}^{(i,i)} &= \EE{q_\bphi(\cdot \mid x)}{\big(s_{t+1}^{(i)} - \mu_{t+1}^{(i)}(x)\big)\big(s_t^{(i)} - \mu_t^{(i)}(x)\big)} \\
        &= \EE{q_\bphi(\cdot \mid x)}{\EE{q_\bphi(\cdot \mid x)}{s_{t+1}^{(i)} - \mu_{t+1}^{(i)}(x) \mid s_t^{(i)}} \big(s_t^{(i)} - \mu_t^{(i)}(x)\big)} \\
        &= \EE{q_\bphi(\cdot \mid x)}{\tilde B_{t+1}^{(i)}(x) \big(s_t^{(i)} - \mu_t^{(i)}(x)\big)\big(s_t^{(i)} - \mu_t^{(i)}(x)\big)} \\
        &= \tilde B_{t+1}^{(i)}(x) \Sigma_{t}^{(i,i)}(x) \eqsp.
    \end{split}
\end{equation}
Similarly, the initial expectation can be computed as
\begin{equation}
    \begin{split}
        \EE{q_\bphi(\cdot \mid x)}{\log p_\btheta(s_1^{(i)} \mid u_1^{(i)})} &= \sum_{k=1}^C \alpha_{1,k}^{(i)}(x) \left(-\frac{1}{2}\log2\pi\bar{\psi}_k^{(i)} - \frac{1}{2} (\bar{\psi}_k^{(i)})^{-1} \EE{q_\bphi(\cdot \mid x)}{(s_1^{(i)} - \bar{b}_k^{(i)})^2}\right) \\
        &= -\frac{1}{2}\sum_{k=1}^C \alpha_{1,k}^{(i)}(x)\left(\log2\pi\bar{\psi}_k^{(i)} + (\bar{\psi}_k^{(i)})^{-1}\left(\Sigma_1^{(i,i)}(x) + (m^{(i)}(x) - \bar{b}_k^{(i)})^2\right)\right) \eqsp.
    \end{split}
\end{equation}

Grouping previous results together yields the ELBO explicit form
\begin{equation}
    \begin{split}
        \ELBO(x; \btheta, \bphi) = \sum_{t=1}^T &1_K^{\top}\Big(x_t \odot \big(\Gamma\mu_t(x)\big) - \exp\Big(\Gamma\mu_t(x) + \frac{1}{2} \diag\left(\Gamma\Sigma_t(x)\Gamma^\top\right)\Big) - \log (x_t!)\Big) \\ &- \sum_{i=1}^d \left(\sum_{k=1}^C \nu_k^{(i)}(x) \log \frac{\nu^{(i)}_k(x)}{\pi_k^{(i)}} + \sum_{t=1}^{T-1} \sum_{k=1}^C \sum_{\ell=1}^C \tau_{t+1,k \ell}^{(i)}(x)\alpha^{(i)}_{t,\ell}(x) \log \frac{\tau_{t+1,k \ell}^{(i)}(x)}{a_{k\ell}^{(i)}}\right) \\
        &+ \frac{Td}{2}\log 2\pi\rme + \frac{1}{2}\sum_{t=1}^T \log \det \tilde\psi_t(x) \\
        &-\frac{1}{2}\sum_{i=1}^d \sum_{k=1}^C \alpha_{1,k}^{(i)}(x)\left(\log2\pi\bar{\psi}_k^{(i)} + (\bar{\psi}_k^{(i)})^{-1}\left(\Sigma_1^{(i,i)}(x) + (m^{(i)}(x) - \bar{b}_k^{(i)})^2\right)\right) \\
        &- \frac{1}{2} \sum_{i=1}^d \sum_{t=1}^{T-1} \sum_{k=1}^C \alpha_{t+1,k}^{(i)}(x) \bigg(\log 2\pi \psi_k^{(i)} + (\psi^{(i)}_k)^{-1} \Big((\mu_{t+1}^{(i)}(x) - B_k^{(i)} \mu_t^{(i)}(x) - b_k^{(i)})^2 \\ &\hspace{3cm}+ \Sigma_{t+1}^{(i,i)}(x) + B_k^{(i)}(B_k^{(i)} - 2 \tilde B_{t+1}^{(i)}) \Sigma_{t}^{(i,i)}(x)\Big) \bigg) \eqsp.
    \end{split}
\end{equation}

For computational efficiency, we report a tensorized form of the ELBO equivalent to the above which is given by
\begin{equation}
\label{eq:vectorized_elbo}
    \begin{split}
        \ELBO(x; \btheta, \bphi) &= \sum_{t=1}^T 1_K^{\top}\Big(x_t \odot \big(\Gamma\mu_t(x)\big) - \exp\Big(\Gamma\mu_t(x) + \frac{1}{2} \diag\left(\Gamma\Sigma_t(x)\Gamma^\top\right)\Big) - \log (x_t!)\Big)\\ 
        &- 1_d^\top \left(\nu(x) \odot \log(\nu(x) \reciprocal \pi)\right) 1_C + \sum_{t=1}^{T-1} 1_d^\top(\tau_{t+1}(x) \odot \alpha^*_t(x) \odot \log (\tau_{t+1}(x) \reciprocal A)) 1_{C^2}\\
        &+ \frac{Td}{2}\log 2\pi\rme + \frac{1}{2}\sum_{t=1}^T \log \det \tilde\psi_t(x) - \frac{1}{2}  1_d^\top\left(\alpha_1(x) \odot \left(\log 2\pi\bar\psi + (\sigma_1^*(x) + (m^*(x) - \bar b)^2\right)\reciprocal \bar\psi)\right) 1_C \\
        &-\frac{1}{2} \sum_{t=1}^{T-1} 1_d^\top \bigg(\alpha_{t+1}(x) \odot \bigg(\log 2\pi \psi + \Big((\mu_{t+1}^*(x) - B \odot \mu_t^*(x) - b)^2 \\ &\hspace{5cm}+ \sigma_{t+1}^*(x) + B \odot (B - 2 \tilde B_{t+1}^*) \odot \sigma_t^*(x) \Big) \reciprocal \psi\bigg)\bigg) 1_C \eqsp,
    \end{split}
\end{equation}
where $\sigma_t(x) \in \RR^{d}$ is the diagonal of $\Sigma_t(x)$.

\section{CAVI update for the variational latent switching states}
\label{proof:cavi_update_labels}
While joint parametric inference is most employed in variational frameworks, the choice of the parametric family is often arbitrary simple and can miss on structural modeling such as model conjugacy. Besides, the parametric forms often come with deep neural parameterization which entail significant engineering challenges in practice. In the context of mean-field variational approximations defined by 
\begin{equation}
    q_{\bphi}(z \mid x) = \prod_{i=1}^d q_{\bphi}^{(i)}(z^{(i)} \mid x) \eqsp,
\end{equation}
coordinate ascent variational inference (CAVI) offers a grounded framework to alleviate these issues, relying on maximizing a coordinate-wise ELBO, such that for coordinate $i$ at iteration $h$:
\begin{equation}
    \begin{split}
        \ELBO_h(x ; \btheta, \bphi) &= \EE{q_{\bphi}^{(i)}(\cdot \mid x)}{\EE{q_{\bphi^{(h)}}^{(-i)}(\cdot \mid x)}{\log p_{\btheta}(x,z)}} + \EE{q_{\bphi}^{(i)}(\cdot \mid x)}{\log q_{\bphi}(z^{(i)} \mid x)} \eqsp,
    \end{split}
\end{equation}
where $\bphi^{(h)}$ are the parameters of the previous iterate of the coordinate-ascent, and $q_{\bphi}^{(-i)}(\cdot \mid x) = \prod_{j \neq i}^d q_{\bphi}^{(j)}(\cdot \mid x)$. When considering variational approximations from the exponential family, \citet{johnson2016composing} provides optimal CAVI updates for the variational proxy given by
\begin{equation}
    \begin{split}
        q_{\bphi}^{(i)}(z^{(i)} \mid x) &\prop \exp\left\{\EE{q_{\bphi^{(h)}}^{(-i)}(\cdot \mid x)}{\log p_{\btheta}(x,z)}\right\} \eqsp,
    \end{split}
\end{equation}
which can be explicated under conjugacy conditions between the exponential family and prior terms. In the context of ARPLN-ICA inference, we consider a mean field decomposition across components $i$, and between latent switching states and latent sources in Eq.~\eqref{eq:variational_joint_law}. Additionally, since $u^{(i)}_{1:T}$ only affects $s^{(i)}_{1:T}$ under the model prior, conjugacy is satisfied by definition of $p_\btheta(u^{(i)}_{1:T})$ and $p_\btheta(s^{(i)}_{1:T} \mid u^{(i)}_{1:T})$, and the CAVI update depends solely on $q_\bphi(s^{(i)}_{1:T} \mid x)$.
Thus, denoting by $q_{\bphi,s}^{(i)}(s^{(i)}_{1:T} \mid x) = q_\bphi(s^{(i)}_{1:T} \mid x)$, the CAVI update for the variational approximation $q_{\bphi}(u^{(i)}_{1:T} \mid x)$ can be explicated as
    \begin{equation}
        \begin{split}
            q_\bphi(u^{(i)}_{1:T} \mid x) &\prop \exp{\EE{q_{\bphi,s}^{(i)}(\cdot \mid x)}{\log p_\btheta(s_1^{(i)}\mid u_1^{(i)}) + \log p_\btheta(u_1^{(i)}) + \sum_{t=1}^{T-1} \log p_\btheta(s_{t+1}^{(i)} \mid s_t^{(i)}, u_{t+1}^{(i)}) + \log p_\btheta(u_{t+1}^{(i)}\mid u_t^{(i)}) }} \\
            & \prop \pi_{u_1^{(i)}}^{(i)}e_1^{(i)}(u_1^{(i)}) \;.\prod_{t=1}^{T-1}a^{(i)}_{u_t^{(i)},u_{t+1}^{(i)}}e^{(i)}_{t+1}( u_{t+1}^{(i)}) \eqsp,
        \end{split}
    \end{equation}
where the evidence terms are computed using the marginal distribution of the variational latent sources described by Eq.~\eqref{eq:forward_recursions_of_sources} and the auto-correlation of Eq.~\eqref{eq:autocorr_sources_elbo}, yielding
\begin{equation}
   \begin{split}
        e_1^{(i)}(k) &= (\bar{\psi}_k^{(i)})^{-1/2} \exp\left(-\frac{\Sigma_1^{(i,i)}(x) + (\mu_1^{(i)}(x) - \bar{b}_k^{(i)})^2}{2\bar{\psi}^{(i)}_k}\right) \eqsp, \\
    e_{t+1}^{(i)}(k) &= ({\psi}_k^{(i)})^{-1/2} \exp\left(-\frac{\Sigma_{t+1}^{(i,i)}(x) + (B_k^{(i)})^2\Sigma_t^{(i,i)}(x) + (\mu_{t+1}^{(i)}(x) - B_k^{(i)} \mu_t^{(i)}(x) - b_k^{(i)})^2 - 2B_k^{(i)}\Sigma_{t,t+1}^{(i,i)}(x)}{2{\psi}^{(i)}_k} \right)\eqsp.
   \end{split}
\end{equation}
As such, the latent regimes distribution is component-wise independent Markov chains which are optimally derived in closed-form in one CAVI step from the sources variational proxy $\big(q_\bphi(s^{(i)} \mid x)\big)_{1 \leq i \leq d}$\eqsp.

\section{Closed-form updates of prior parameters}
\label{app:closed_form_update_theta}
Inference in ARPLN-ICA requires estimating both the model parameters $\btheta$ (prior and generative parameters) and the variational parameters $\bphi$. The variational distribution over sources is fitted via stochastic optimization due to the deep amortized parameterization of the dynamics. In contrast, Appendix~\ref{proof:cavi_update_labels} shows that, conditional on the source variational factors, the variational distribution over the latent switching labels admits an optimal update within the exponential family. Accordingly, within a coordinate-ascent view of ELBO maximization, we can treat $\bphi$ as fixed and update $\btheta$ by maximizing the ELBO conditional on the current variational approximation. With the exception of the mixing function, all ARPLN-ICA model parameters admit closed-form updates. To decline the different expressions, we use the variational marginal kernel parameters of the sources in Eq.~\eqref{eq:forward_recursions_of_sources} and the variational marginal of the latent switching states in Eq.~\eqref{eq:marginal_variational_switching}.
\begin{itemize}
    \item Prior latent switching parameters:
    \begin{equation}
        \begin{split}
            (\pi_k^{(i)})^* &= \frac{1}{|\mathcal{D}|} \sum_{x \in \mathcal{D}} \nu_k^{(i)}(x) \eqsp, \\
            (a_{k\ell}^{(i)})^* &= \frac{\sum_{x \in \mathcal{D}}\sum_{t=1}^{T-1} \tau_{t+1,k\ell}^{(i)}(x)\alpha_{t,\ell}^{(i)}(x)}{\sum_{x\in \mathcal{D}}\sum_{t=1}^{T-1} \alpha_{t,\ell}^{(i)}(x)}  \eqsp.
        \end{split}
    \end{equation}
    \item Initialisation of the latent sources prior parameters:
    \begin{equation}
        \begin{split}
            (\bar{b}_k^{(i)})^* &= \frac{\sum_{x\in \mathcal{D}} \alpha_{1,k}^{(i)}(x) m^{(i)}(x)}{\sum_{x\in \mathcal{D}} \alpha_{1,k}^{(i)}(x)} \eqsp, \\
            (\bar{\psi}_k^{(i)})^* &= \frac{\sum_{x\in \mathcal{D}} \alpha_{1,k}^{(i)}(x) \big(\Sigma_1^{(i,i)}(x) + (m^{(i)}(x)-\bar{b}_k^{(i)})^2\big)}{\sum_{x \in \mathcal{D}} \alpha_{1,k}^{(i)}(x)} \eqsp.
        \end{split}
    \end{equation}
    \item Auto-regressive sources prior dynamic parameters:
    \begin{equation}
        \begin{split}
            (B_k^{(i)})^* &= \frac{\sum_{x\in\mathcal{D}}\sum_{t=1}^{T-1} \alpha_{t+1,k}^{(i)}(x) \left(\tilde B_{t+1}^{(i)}(x) \Sigma_t^{(i,i)}(x)+ \mu_t^{(i)}(x)(\mu_{t+1}^{(i)}(x) - b_k^{(i)})\right)}{\sum_{x \in \mathcal{D}} \sum_{t=1}^{T-1} \alpha_{t+1,k}^{(i)}(x)\left((\mu_t^{(i)}(x))^2 + \Sigma_t^{(i,i)}(x)\right)} \eqsp, \\
            (b_k^{(i)})^* &= \frac{\sum_{x\in\mathcal{D}} \sum_{t=1}^{T-1} \alpha_{t+1,k}^{(i)}(x)(\mu_{t+1}^{(i)}(x) - B_k^{(i)}\mu_t^{(i)}(x))}{\sum_{x\in\mathcal{D}}\sum_{t=1}^{T-1}\alpha_{t+1,k}^{(i)}(x)} \eqsp, \\
            (\psi_k^{(i)})^* &= \Big[\sum_{x\in\mathcal{D}}\sum_{t=1}^{T-1} \alpha_{t+1,k}^{(i)}(x)\big((\mu_{t+1}^{(i)}(x) - B_k^{(i)}\mu_t^{(i)}(x) - b_k^{(i)})^2 + \Sigma_{t+1}^{(i,i)}(x) \\&\hspace{3cm}+ B_k^{(i)}(B_k^{(i)} - 2 \tilde B_{t+1}^{(i)}(x))\Sigma_t^{(i,i)}(x)\big)\Big] \Big/ {\sum_{x\in\mathcal{D}}\sum_{t=1}^{T-1} \alpha_{t+1,k}^{(i)}(x)} \eqsp.
        \end{split}
    \end{equation}
\end{itemize}
Note that the updates above can be fully tensorized, ensuring efficient implementations. While $\Gamma$ does not admit a closed-form update, the ELBO is convex with respect to the columns of $\Gamma$, which facilitates stochastic optimization of this parameter during our training procedure.

\section{Simulated studies material}
\label{app:simulated_study_material}

\subsection{Variational inference with mean field approximation}
\label{app:mean_field}
In ARPLN-ICA inference procedure, we propose a variational approximation which captures part of the model's structure in Eq.~\eqref{eq:variational_joint_law}. Specifically, we exploit the Markov structure of the true posterior to parameterize the variational distribution over latent sources via a filtering factorization in Eq.~\eqref{eq:variational_sources_law}, and leverage conditional conjugacy to derive an optimal exponential-family proxy for the latent switching states \cite{johnson2016composing}. A common alternative is the variational mean field approximation \cite{blei2017variational}, which ignores posterior dependencies by assuming coordinate-wise independence. Applied to the latent sources, the mean field approximation yields a time-wise and component-wise factorized proxy,
\begin{equation}
q^{\mathrm{MF}}_\bphi(s_{1:T} \mid x) = \prod_{t=1}^T \prod_{i=1}^d q_\bphi^{\mathrm{MF}}(s_t^{(i)} \mid x) \eqsp.
\end{equation}
Despite its limited structure, the mean-field approximation can exhibit beneficial regularization effects which can help mitigating amortization shortcuts and mode collapse relative to richer variational families \cite{shu2018amortized}.
Importantly, we do not impose a mean-field factorization on the switching states, as conditional conjugacy still enables CAVI updates for the latent switches, and further factorization would only degrade the joint approximation. In our parameterization, applying mean-field to the sources is equivalent to setting the auto-regressive coefficient in Eq.~\eqref{eq:variational_sources_law} to zero. Accordingly, when comparing the proposed structured AR proxy of Eq.~\eqref{eq:variational_sources_law} to the mean-field approximation, we use the same neural parameterization and implement a targeted ablation that removes only the auto-regressive term, ensuring a controlled comparison between the two approximations.

\subsection{Linear ICA baselines}
\label{app:linear_ICA_baselines}
Blind source separation and ICA are commonly formulated through an instantaneous linear mixing model
\begin{equation}
    x_t = A s_t + \varepsilon_t \eqsp,
\end{equation}
where $A \in \mathbb{R}^{K \times d}$ is unknown, $s_t \in \mathbb{R}^d$ are latent sources, and $\varepsilon_t$ captures noise. Methods differ mainly by the structural assumptions used to identify $A$: second-order source separation exploits covariance and auto-correlation structure in multivariate time series \citep{pan2022review}, while likelihood and contrast-based ICA rely on non-Gaussianity and independence of the components \citep{hyvarinen2000independent}.

\textsc{UwedgeICA} is a second-order estimator implemented in \textsc{coroICA} \citep{pfister2019robustifying}, based on approximate joint diagonalization via the \textsc{uwedge} routine \citep{tichavsky2008fast}. In our multi-trajectory setting, temporal data are globally centered and PCA-whitened to the target dimension $d$ (see \citet{hyvarinen2000independent}). We then run \textsc{UwedgeICA} with non-overlapping windows of length $10$ and time lags $\tau \in \{1,\dots,9\}$. The method returns an unmixing estimator ${A^\dagger}$, from which we recover a mixing estimate ${A}$ by pseudo-inversion.

\textsc{Picard} is a modern maximum-likelihood ICA solver designed for fast, robust optimization under model mispecification \citep{ablin2018faster}. Contrary to second-order temporal methods, it does not leverage temporal dependence and is therefore fit on time-flattened observations, using default parameters from the \texttt{python-picard} package of the original authors.

In the simulation setting, both baselines are applied to log-transformed counts
$y_t=\log(x_t+\varepsilon)$ with $\varepsilon = 0.5$.
Under our ARPLN-ICA model, $x_t \mid s_t \sim \mathcal{P}(\exp(\Gamma s_t))$, so that
$\EE{}{x_t \mid s_t} = \exp(\Gamma s_t)$ and hence $\log \EE{}{x_t \mid s_t} = \Gamma s_t$.
Therefore $y_t$ can be viewed as a noisy, approximately linear instantaneous mixture of the form
$y_t \approx \Gamma s_t + \eta_t$, where $\eta_t$ is heteroskedastic and non-Gaussian due to
Poisson sampling and the log transform.
Consequently, provided this approximation is sufficiently accurate, we expect linear ICA fitted on $y_t$
to recover a mixing approximately aligned with $\Gamma$.

\subsection{Effect of $\Gamma$ on recovery: study design}
\label{app:gamma_effect}
\paragraph{Simulation setting.} To assess finite-sample recovery of $\Gamma$ under our inference procedure, we consider three simulation scenarios that vary the structure of the mixing matrix. Across all scenarios, we fix $K=12$, $T=20$, and $d=5$, and sample $n=150$ trajectories from an oracle model with parameters and training set drawn once at random with fixed seed.
The three regimes differ only through the construction of $\Gamma$, obtained from Gaussian initializations followed by scenario-specific offsets and rescalings that control column colinearity and component expressivity (see \texttt{simulation\_study} notebook on our provided GitHub repository).

\paragraph{Scenarios' description.} The first two regimes study the effect of column colinearity since highly aligned columns are expected to be more difficult to disentangle and may hinder identification.
We quantify column coherence through the absolute Gram matrix $|\Gamma^\top \Gamma| \in \RR^{d \times d}$, where entry $(i,j)$ equals $|\langle \gamma_i, \gamma_j\rangle|$ with $\Gamma = [\gamma_i]_{1 \leq i \leq d}$, as illustrated on Figure~\ref{fig:coherence_gram} across scenarios. In the \emph{moderate-coherence} and \emph{low-excitation} regimes, the maximum pairwise coherence is $0.65$ and $0.51$, respectively, whereas the \emph{high-coherence} regime exhibits substantially stronger alignment (maximum $0.87$), with each column displaying moderate to high coherence with at least one other component.
\begin{figure}[H]
    \centering
    \includegraphics[width=0.56\linewidth]{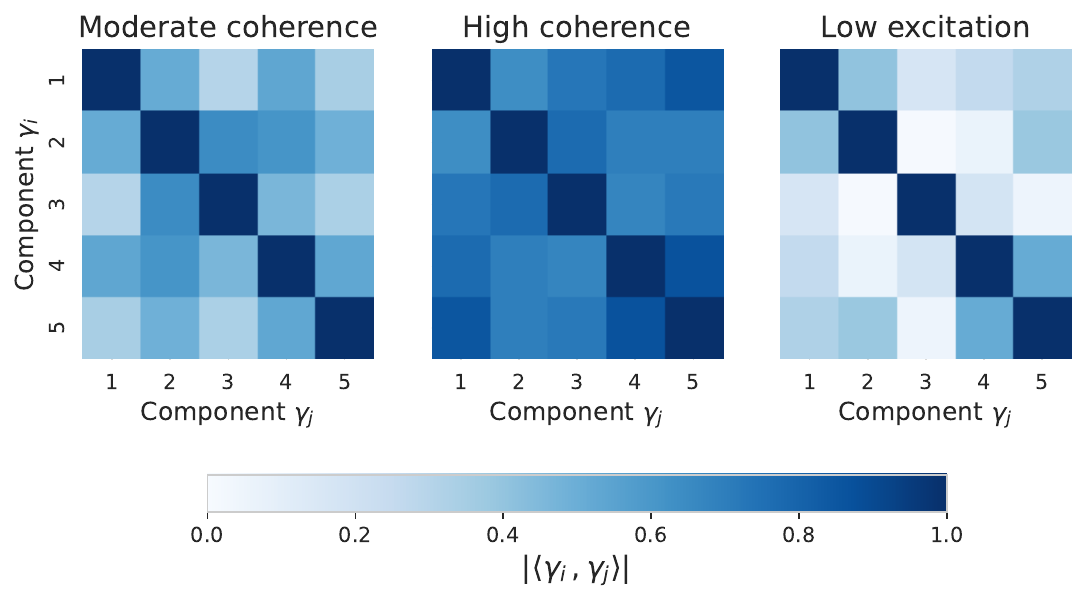}
    \caption{Absolute Gram matrix per scenario. Off-diagonal entries quantify pairwise column coherence: the moderate-coherence regime reaches $0.65$, the high-coherence regime $0.87$, and the low-excitation regime $0.51$.}
    \label{fig:coherence_gram}
\end{figure}

Beyond coherence, we vary the \emph{excitation} of the latent sources, that is, the extent to which components contribute to the Poisson emission through the log-intensities $z_t^{(k)} = \sum_{j=1}^d \Gamma_j^{(k)} s_t^{(j)}$. In the moderate- and high-coherence regimes, most components are expressed in the observations, with generally positive contributions in log-intensity
In contrast, the \emph{low-excitation} regime is designed so that many components induce only weak log-intensity signals. Figure~\ref{fig:log_intensity_scenarios} reports time-averaged log-intensities (with standard deviation) and shows that low excitation translates into substantially sparser observations, with $\approx 31\%$ zeros on average over time, compared less than $0.1\%$ in the other regimes.
\begin{figure}[H]
    \centering
    \includegraphics[width=0.7\linewidth]{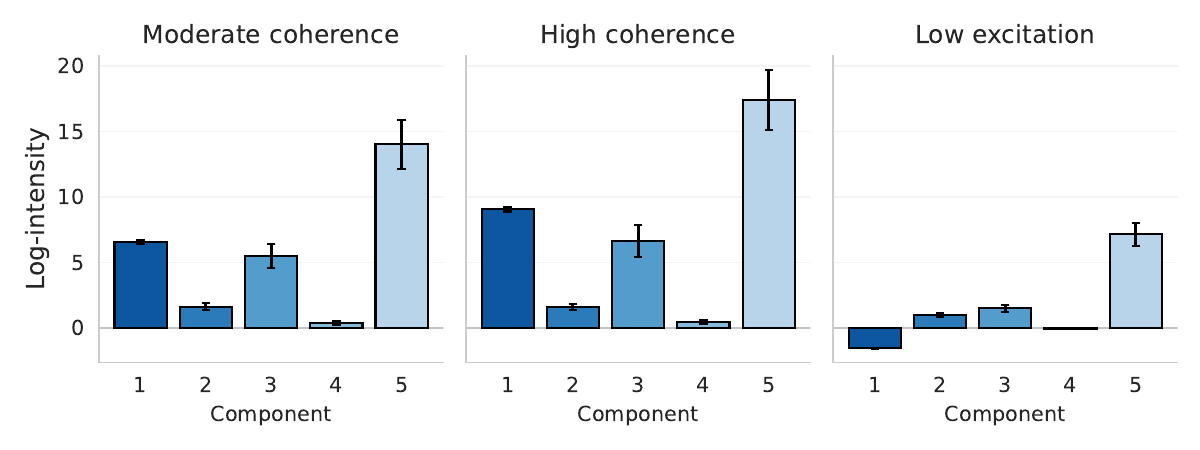}
    \caption{Time-averaged Poisson log-intensity per component, with standard deviation. The low-excitation regime yields smaller log-intensities and a substantially higher fraction of zeros in the observations ($\approx 31\%$).}
    \label{fig:log_intensity_scenarios}
\end{figure}

\paragraph{Training procedure.} For each scenario, we train ARPLN-ICA models using the same hyperparameters as the oracle, but with randomly initialized parameters. When offsets are used to generate the oracle data, we provide the corresponding offset specification to the fitted models to ensure a matched recovery setting. To quantify variability due to optimization and initialization, we repeat training $10$ times with different random seeds.
For the variational proxy, we adopt the deep amortized parameterization shown in Figure~\ref{fig:architecture_encoder}. We encode each sequence into a latent representation $\tilde x_t$ of dimension $4$ using a two-layer GRU, followed by a two-layer feed-forward network. The recognition network is a two-layer ReLU network with hidden dimensions $[16, 8]$, and each parametric head is implemented as a two-layer ReLU network with hidden width $8$. No baseline fixed-effect is considered in these simulation settings. Model optimization is conducted using AdamW in PyTorch's framework \cite{paszke2019pytorch} for $800$ epochs with learning rate $10^{-3}$, weight decay of $10^{-4}$, gradient clipping at norm $5$, and cosine annealing scheduling at $T_{\mathrm{max}} = 800$. Convergence during training is checked with relative tolerance of $10^{-4}$ over the $10$ last values of the ELBO.
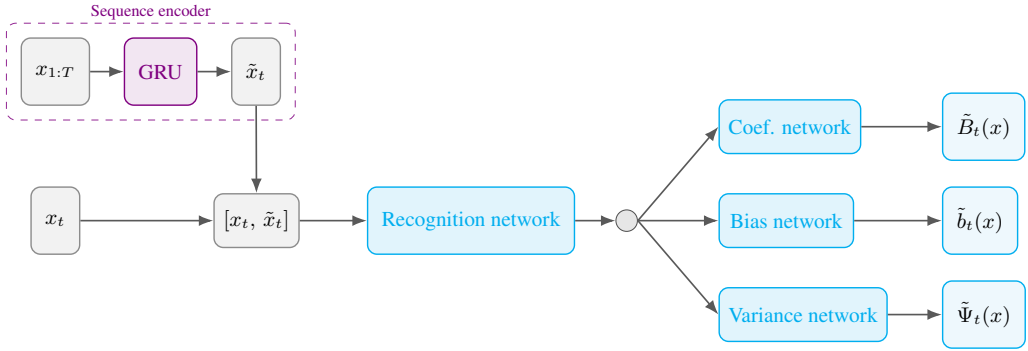
\begin{figure}[H]
    \centering
    \resizebox{.8\textwidth}{!}{
        \begin{tikzpicture}[
            font=\small,
            node distance=8mm and 12mm,
            >=Latex,
            every node/.style={text=black},
            box/.style={
                draw=black!45, line width=0.6pt, rounded corners,
                align=center, inner sep=6pt, minimum height=10mm,
                fill=blue!8
            },
            smallbox/.style={
                draw=black!45, line width=0.6pt, rounded corners,
                align=center, inner sep=4pt, minimum height=8mm,
                fill=purple!8
            },
            head/.style={
                draw=black!45, line width=0.6pt, rounded corners,
                align=center, inner sep=4pt, minimum height=8mm, minimum width=20mm,
                fill=black!5
            },
            arrow/.style={->, line width=0.7pt, draw=black!65},
            dashedbox/.style={draw=violet!85, dashed, rounded corners, inner sep=6pt},
            lab/.style={font=\scriptsize, inner sep=1pt, align=center, text=black!75},
            split/.style={circle, draw=black!70, fill=black!15, inner sep=0pt, minimum size=3.2mm}
        ]
        
        \node[box, fill=black!5] (xseq) {$x_{1:T}$};
        
        \node[box, fill=violet!10, draw=violet, right=5mm of xseq] (gru) {\textcolor{violet}{GRU}};
        \node[box, fill=black!5, right=5mm of gru] (xemb) {$\tilde{x}_t$};
        
        \draw[arrow] (xseq) -- (gru);
        \draw[arrow] (gru) -- (xemb);
        
        \node[box, fill=black!5, below=12mm of xseq] (xt) {$x_t$};
        
        \node[smallbox, fill=black!5, right=19.8mm of xt] (concat) {$[x_t, \,\tilde{x}_t]$};
        
        \draw[arrow] (xt) -- (concat);
        \draw[arrow] (xemb) -- (concat);
        
        \node[box, fill=cyan!10, draw=cyan, right=10mm of concat, minimum width=22mm] (trunk) {\textcolor{cyan}{Recognition network}};
        \draw[arrow] (concat) -- (trunk);
        
        \node[split, right=6mm of trunk, fill=black!10] (br) {};
        \draw[arrow] (trunk) -- (br);
        
        \node[head, fill=cyan!10, draw=cyan, right=12mm of br, yshift=14mm] (headB) {\textcolor{cyan}{Coef. network}};
        \node[head, fill=cyan!10, draw=cyan, right=12mm of br] (headb) {\textcolor{cyan}{Bias network}};
        \node[head, fill=cyan!10, draw=cyan, right=12mm of br, yshift=-14mm] (headP) {\textcolor{cyan}{Variance network}};
        
        \draw[arrow] (br.east) -- (headB.west);
        \draw[arrow] (br.east) -- (headb.west);
        \draw[arrow] (br.east) -- (headP.west);
        
        \node[box, fill=cyan!6, draw=cyan, right=12mm of headB] (outB)
          {${\tilde{{B}}_{t}}(x)$};
        \node[box, fill=cyan!6, draw=cyan, anchor=west] (outb) at ($(outB.west |- headb)$)
          {${\tilde{{b}}_{t}}(x)$};
        \node[box, fill=cyan!6, draw=cyan, anchor=west] (outP) at ($(outB.west |- headP)$)
          {${\tilde{\Psi}_{t}}(x)$};
        
        \draw[arrow] (headB) -- (outB);
        \draw[arrow] (headb) -- (outb);
        \draw[arrow] (headP) -- (outP);
        
        \node[dashedbox, fit=(xseq)(gru)(xemb), label={[lab]above:{\textcolor{violet}{Sequence encoder}}}] (encbox) {};
        
        \end{tikzpicture}
    }
    \caption{Auto-regressive variational approximation with neural parameterization. The full sequence $x_{1:T}$ is first processed by a gated recurrent unit (GRU), producing a fixed-dimensional embedding $\tilde x_t$ at each time step. This embedding is concatenated with the current observation $x_t$ and passed to a shared recognition network (feature extractor) implemented as a fully connected ReLU network. The resulting features are then routed to three head networks, each one outputting a variational parameter for time steps $t \geq 1$. At $t=0$, we use the same architecture but remove the coefficient head $\tilde B$. All heads share the same architecture, and all networks use ReLU activations. To stabilize training, counts are normalized to proportions. When provided, offset information are concatenated to the input $x_t$.}
    \label{fig:architecture_encoder}
\end{figure}

\subsection{Quantitative comparison of inferred $\Gamma$}
\label{app:mixing_matching}
\paragraph{Mixing matching.} Based on Corollary~\ref{corollary:identifiability_ARPLNICA}, the mixing matrix $\Gamma$ is identifiable only up to a permutation of its columns and component-wise signed scaling. During inference, we constrain each column of $\Gamma$ to lie in the unit $\ell_2$-ball, which removes scale indeterminacy and leaves only column permutations and sign flips as residual invariances. Consequently, to compare an estimated $\Gamma$ to an oracle column-normalized matrix $\Gamma^*$, we first align the columns and signs of $\Gamma$ to those of $\Gamma^*$ prior to computing similarity scores.
To this end, we implement a brute-force alignment procedure that enumerates all signed permutations and selects the permutation maximizing colinearity with the columns $\Gamma^*$: the best permutation is obtained as the maximizer of the sum of the absolute scalar products between the permuted columns of $\Gamma$ and the fixed columns of $\Gamma^*$. 

\paragraph{Mixing similarity.} We quantify similarity between two aligned mixing matrices by measuring column-wise similarity under the remaining sign indeterminacy. For vectors $v,w \in \RR^d$, a matching metric is given by the cosine similarity as
\begin{equation}
    \mathrm{S}_\mathrm{cos}(v, w) = \frac{\langle v \,, w \rangle}{\norm{v}_2 \norm{w}_2} \eqsp.
\end{equation}
Since mixing columns are $\ell_2$-normalized during inference, the cosine similarity exactly coincides with the inner product. In particular, the absolute cosine similarity assesses the components similarity under sign invariance, such that $1$ reflects equivalent components while $0$ reflects orthogonal components, making it a suitable metric to assess mixing similarity.

\subsection{Scalability to increasing effective training set size}
\label{app:scalability_timings}
When fitting state-space models, computational cost notably depends on both the number of independent trajectories $n$ and the sequence length $T$. In this experiment, we study runtime scaling with $n$ in the moderate-coherence simulation setting for three time series lengths $T \in \{10, 20, 40\}$. Figure~\ref{fig:timings_moderate_coherence_case} reports training time (seconds per epoch) under full-batch optimization with same hyperparameterization. Across all values of $T$, the per-epoch runtime is approximately flat for small to moderate $n$, before sharply increasing beyond a threshold (around $n \approx 300$ in this implementation), with longer sequences involving a higher cost at fixed $n$.
For the configuration used in Section~\ref{sec:simulations} ($T=20$) and $800$ epochs of training, this corresponds to roughly $2$ minutes of wall-clock time for $n \leq 500$ and up to $6$ minutes for $n=3,\!000$ on an RTX~4080S (16\,GB). In the same setting, we typically observe diminishing returns in recovery metrics beyond $n \approx 100$ (see Figure~\ref{fig:identifiability_varying_n_easy_case}), suggesting that, for this problem size, substantially larger $n$ may increase compute without measurable gains.

\begin{figure}[H]
    \centering
    \includegraphics[width=0.5\linewidth]{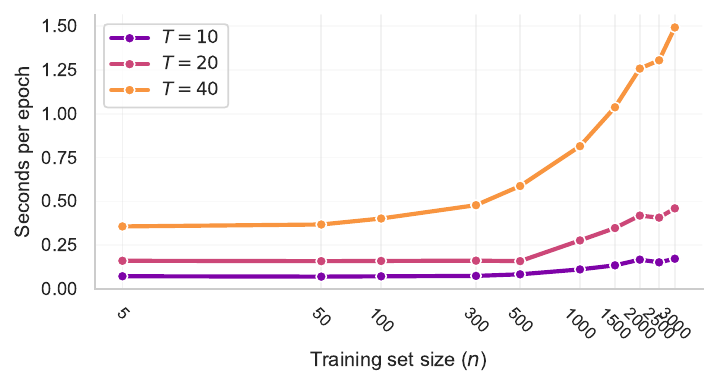}
    \caption{Training speed (seconds per epoch) as a function of training size $n$ in the moderate-coherence scenario with varying time series length $T$ ($K=12$, $d=5$). Experiments run on CUDA (RTX 4080S 16\,GB) using full-batch optimization and the training setup of Appendix~\ref{app:gamma_effect}. All models are trained with the same architecture and optimization procedure, starting with the same random state.}
    \label{fig:timings_moderate_coherence_case}
\end{figure}

\section{Clostridium Difficile ICA numerical analysis}
\label{app:numerical_analysis_gnotobiotic_big_header}
\subsection{Design of the analysis}
\label{app:design_analysis_gnotobiotic}
The gnotobiotic mice dataset of \cite{bucci2016mdsine} comprises $n=5$ longitudinal trajectories, each with $T=26$ time points over $K=13$ microbial taxa. Although this yields $n \times T=130$ effective observations, learning the underlying dynamics remains challenging due to the small number of independent trajectories. In this context, we adopt a compact and regularized parameterization throughout the experiments.
Fixing $d=4$, $C = 2$ for the prior model parameters, we use the encoder architecture of Figure~\ref{fig:architecture_encoder} with an sequence embedding dimension of $10$ obtain through a $3$-layer GRU, followed by a single linear layer before concatenation with the current observation. The concatenated features are passed into a $2$-layer recognition network (fully connected) with hidden sizes $[32,16]$, and each variational head is a $2$-layer network with hidden size $16$. Additionally, we account for offset noise and fixed-effects as in Section~\ref{sec:model_extensions}, using the logsum offsets described in Appendix~\ref{app:offset}.
We perform the optimization using AdamW with learning rate $10^{-3}$ and weight decay $10^{-3}$ for $800$ epochs, applying gradient clipping at $5$, and using cosine annealing with $T_{\mathrm{max}}=720$. Convergence is monitored through out $10$ epochs at relative tolerance $10^{-3}$. Figure~\ref{fig:MDSINE_ELBO_reconstruction} reports ELBO optimization trajectories in log-scale across the leave-one-out cross-validation folds, indicating stable convergence.
\begin{figure}[H]
    \centering
    \includegraphics[width=0.56\linewidth]{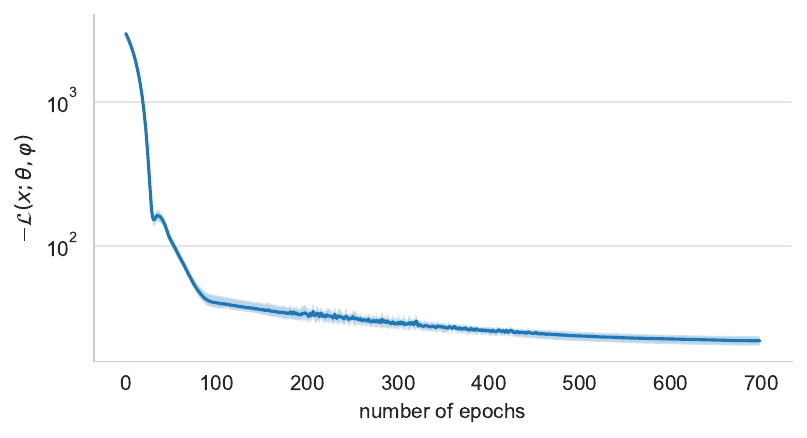}
    \caption{Mean negative ELBO over training epochs in log scale, with standard deviation across LOO-CV folds.}
    \label{fig:MDSINE_ELBO_reconstruction}
\end{figure}

\subsection{Empirical switching distribution for dataset-level perturbations}
\label{app:empirical_switching}
To summarize perturbation dynamics at the dataset level, we define an \emph{empirical switching distribution} by uniformly averaging the variational posteriors across samples, in the spirit of a VAMP distribution \cite{tomczak2018vae},
\begin{equation}
\label{eq:vamp_switching_labels}
    \bar q_\bphi(u_{1:T}) = \frac{1}{|\mathcal{D}|} \sum_{x \in \mathcal{D}} q_\bphi(u_{1:T} \mid x) \eqsp.
\end{equation}
Each term in the mixture is a discrete Markov chain, which allows us to compute mixture marginals and transition kernels in closed form as
\begin{equation}
    \begin{split}
        \bar{q}_\bphi(u_t^{(i)} = k) = &\;\bar{\alpha}_t^{(i)} (k)
        = \frac{1}{|\mathcal{D}|} \sum_{x \in \mathcal{D}} \alpha_t^{(i)}(k, x) \eqsp, \\ 
        \bar{q}_\bphi(u_{t+1}^{(i)} = k \mid u_t^{(i)} = \ell)
        = &\;\bar{\tau}_{t+1,k\ell}^{(i)}
        = \frac{\frac{1}{|\mathcal{D}|} \sum_{x\in \mathcal{D}} \tau_{t+1,k\ell}^{(i)}(x)\;\alpha_t^{(i)} (\ell, x)}{
        \bar\alpha_t^{(i)}(\ell)} \eqsp,
    \end{split}
\end{equation}
where $\alpha_{t}^{(i)}(k, x) = q_\bphi(u_t^{(i)} = k \mid x)$ and $\tau_{t+1,k\ell}^{(i)}(x) = q_\bphi(u_{t+1}^{(i)} = k \mid u_t^{(i)} = \ell, x)$. Figure~\ref{fig:MDSINE_perturbation_components} reports the state~$0$ marginal of $\bar q_\bphi$ over time, providing an aggregated view of regime switching across the dataset. For per-mouse trajectories, Figure~\ref{fig:MDSINE_perturbation_components_all_mice} shows the marginal probability of state~$0$ under each mixture component. Although individual mice exhibit trajectory-specific variations, the dataset-level mean marginal provides a representative summary of regime occupancy.
\begin{figure}[H]
    \centering
    \includegraphics[width=0.6\linewidth]{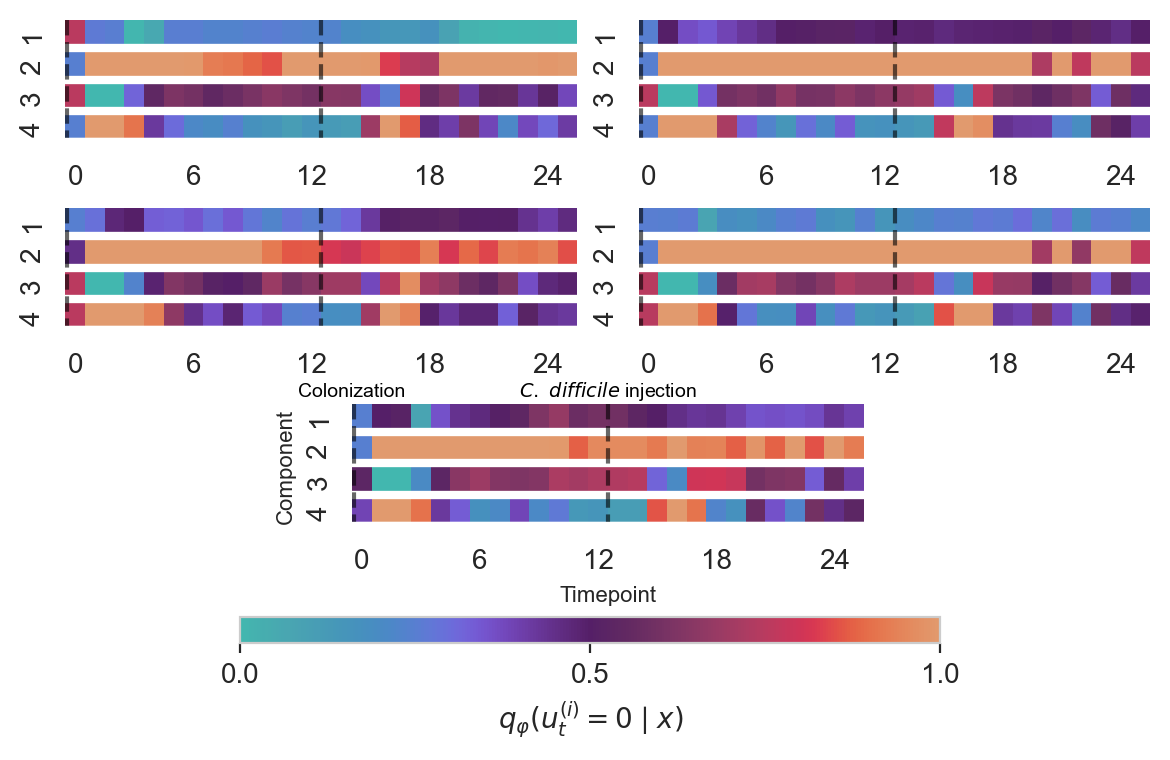}
    \caption{Marginal probability of being in latent state $0$ for each component under the variational approximation, computed on test samples across LOO-CV. Switching states are aligned component-wise with the first LOO mice to mitigate invariance between LOO.}
    \label{fig:MDSINE_perturbation_components_all_mice}
\end{figure}

\subsection{Components stability and medoid mixing.}
\label{app:stability_and_medoid}
\paragraph{Inference stability.} We assess the stability of ARPLN-ICA on the gnotobiotic mice dataset using leave-one-out cross-validation (LOO-CV), which yields $5$ estimated mixing matrices $\Gamma$ (one per split). Figure~\ref{fig:MDSINE_network_stability} reports pairwise absolute cosine similarities, averaged across columns, together with per-column average similarities and $95\%$ confidence intervals. The inferred mixings are highly consistent across folds, with mean similarity exceeding $98\%$, supporting fold-wise comparability of the recovered components.
\begin{figure}[H]
    \centering
    \includegraphics[width=.7\linewidth]{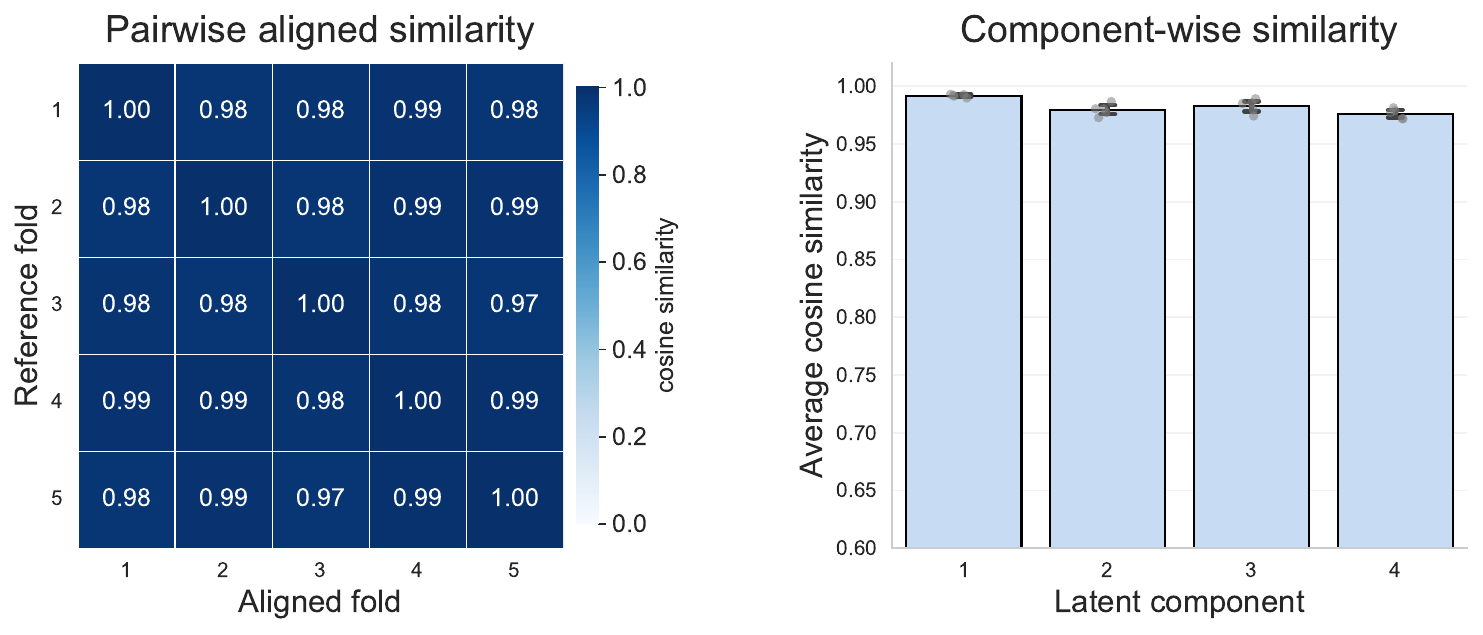}
    \caption{Stability of the learned mixing matrix $\Gamma$ across leave-one-out folds. (Left) Pairwise cosine similarity between $\Gamma$ estimates after aligning fold $j$ to fold $i$ (permutation/sign invariances). (Right) Component-wise stability, computed as the mean absolute cosine similarity of each aligned column, aggregated within each reference fold and summarized across folds, with $95\%$ confidence interval and similarity per fold indicated by grey dots.}
    \label{fig:MDSINE_network_stability}
\end{figure}
\paragraph{Medoid mixing.} To enable dataset-level interpretation of the learned mixing across LOO CV, we select a single reference matrix to which all other estimates can be aligned. We define this \emph{medoid mixing} $\bar\Gamma$ as the mixing matrix whose average similarity to the other fold-specific estimates is maximal. In Figure~\ref{fig:MDSINE_network_stability}, this corresponds to the fold that maximizes the column-averaged pairwise similarities, which occurs for the $4$-th split. We use $\bar\Gamma$ as an anchor to align the other inferred mixings across folds, yielding a consistent dataset-level representation of components.
To quantify variability around the medoid on a per-component basis, we consider the empirical variance estimator
\begin{equation}
    \hat{\mathds{V}}(\bar\gamma_i) = \frac{1}{n_{\mathrm{CV}} - 1} \sum_{k=1}^{n_{\mathrm{CV}}} \big(\hat{\gamma}_{k,i} - \bar\gamma_i\big)^2 \eqsp,
\end{equation}
where $n_{\mathrm{CV}}$ is the number of splits, $\bar\gamma_i$ denotes the $i$-th column of the medoid mixing, and $\hat{\gamma}_{k,i}$ is the corresponding column inferred on split $k$ after alignment.

\subsection{Effect of the number of components in C. difficile analysis}
\label{app:selection_of_d_analysis}
\paragraph{Model selection procedure.} Choosing the number of latent components $d$ is a key design choice in ICA. In practice, there is no universally optimal criterion, as $d$ is typically selected based on task-dependent evaluations using either reconstruction or predictive performance under cross-validation, likelihood-based criteria, or stability-based diagnostics (see \citet{himberg2004validating, hu2020snowball, yi2024cw_ica} and the references therein). Although likelihood-based criteria would be natural for our generative formulation, the variational inference procedure introduces complexities: the marginal likelihood is intractable and the amortized variational parameterization blurs the notion of effective model complexity, complicating AIC/BIC-style estimators definition.
Therefore, we propose an optimal $d$ based on out-of-sample quantitative metrics (see Appendix~\ref{app:metrics_trajectories}) computed from reconstructed trajectories of held-out mice in the LOO-CV. Figure~\ref{fig:MDSINE_reconstruction_benchmark_dimension} reports these metrics as a function of $d$ and suggests a performance optimum around $d=7$. Although $d = 4$ is not optimal, Figure~\ref{fig:reconstruction_with_d} shows that it still generally recovers the microbial pattern through reconstruction on the test set.
\begin{figure}[H]
    \centering
    \includegraphics[width=0.6\linewidth]{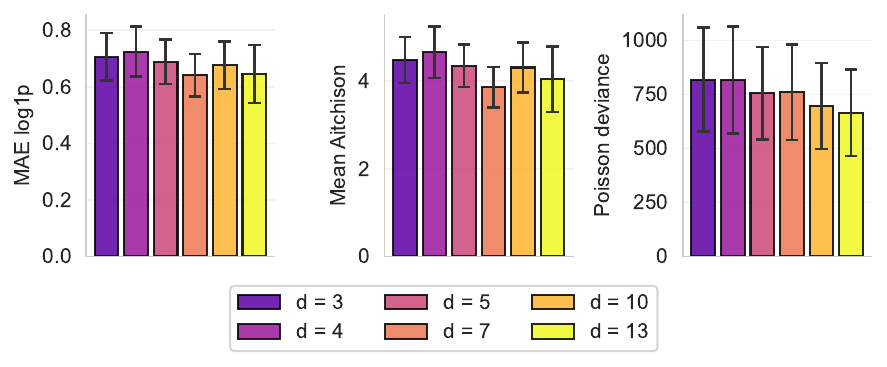}
    \caption{Average reconstruction performance on test mice from LOO CV, with increasing dimensionality of the latent sources. Barplots show the mean and standard deviation across LOO splits of the scoring value averaged along the test mice trajectory. All methods are evaluated with the same parameterization, with varying $d$.}
    \label{fig:MDSINE_reconstruction_benchmark_dimension}
\end{figure}
\begin{figure}[H]
    \centering
    \includegraphics[width=1.\linewidth]{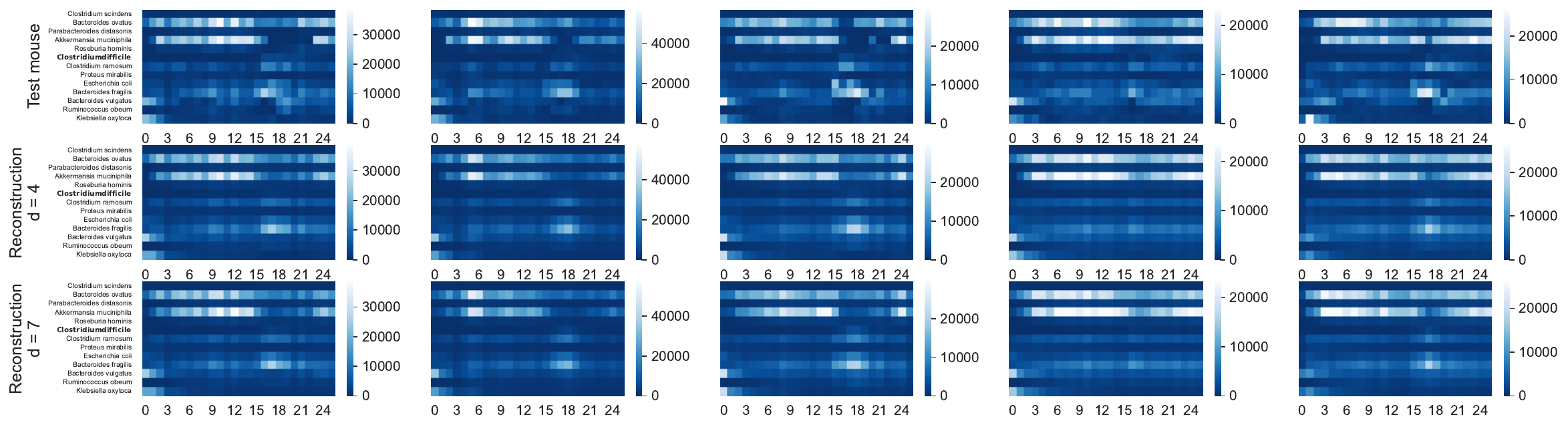}
    \caption{Test mice and reconstruction with ARPLN-ICA at $d \in [4, 7]$ from LOO CV. Reconstructed trajectories generally capture the patterns in microbial taxa abundance in the test set, with sharper reconstruction for $d = 7$, consistent with the quantitative benchmark.}
    \label{fig:reconstruction_with_d}
\end{figure}

\paragraph{Biological interpretation at $d = 7$.} For $d=4$, Figure~\ref{fig:interactions_MDSINE} identifies component~4 as strongly responsive to the perturbation, with a co-variation pattern that contrasts opportunistic pathobionts against known protective taxa. At $d=7$, we recover an analogous signal: Figure~\ref{fig:sources_along_time_d_7} highlights component~5 as exhibiting a sharp perturbation response, while Figure~\ref{fig:wise_interactions_MDSINE_d_7} shows that the loadings of \textit{C. difficile} aligns with the same opportunistic pathobionts as with $d = 4$: \textit{E. coli}, \textit{B. fragilis}, and \textit{C. ramosum}. In contrast, \textit{A. muciniphila} loads in the opposite direction alongside \textit{P. distasonis} and \textit{R. hominis}, which are commonly regarded as commensal, healthy-gut-associated species. Overall, although $d$ is optimized using out-of-sample reconstruction metrics, the biological interpretation is stable across the tested dimensionalities, with further complexities illustrated as the dimension increases.
\begin{figure}[H]
    \centering
    \includegraphics[width=0.45\linewidth]{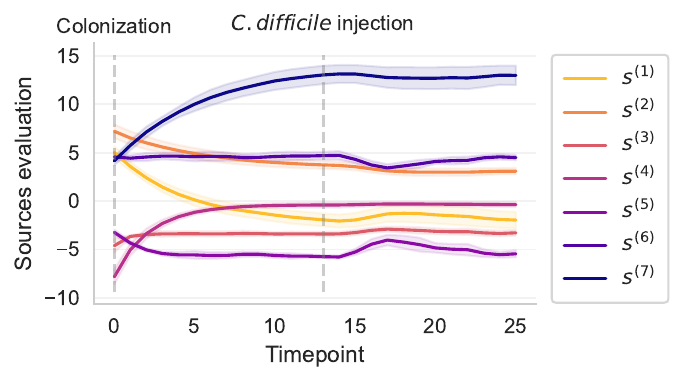}
    \caption{Average encoded latent sources with $d = 7$ on the gnotobiotic mice dataset. The evolution of the latent sources through time is computed for each test mice in the LOO, aligned with the medoid mixing, with standard deviation.}
    \label{fig:sources_along_time_d_7}
\end{figure}
\begin{figure}[H]
    \centering
    \includegraphics[width=1.\linewidth]{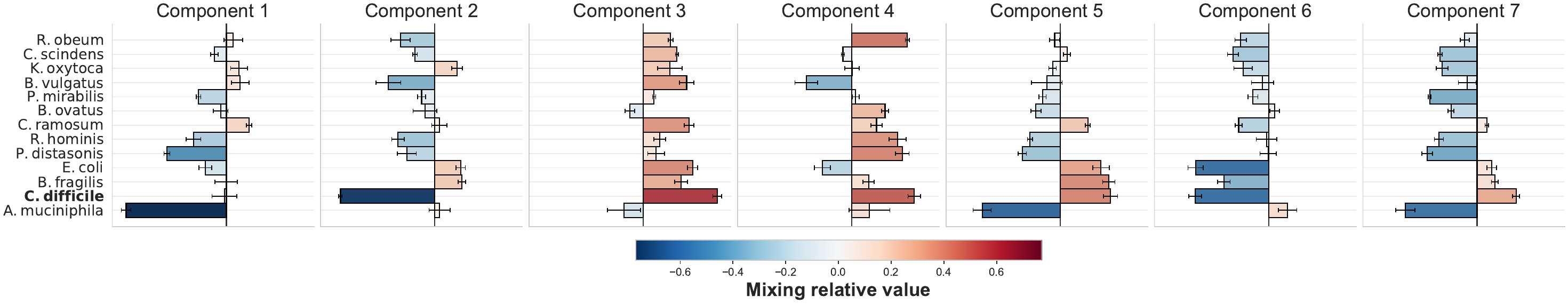}
    \caption{Mixing loading per taxa and sources with $d = 7$ on the gnotobiotic mice dataset. Taxa loading per component (column) is computed from the medoid mixing across LOO CV, with standard deviation to the medoid. The $5$-th component displays a similar role as the $4$-th component of the case $d=4$ in Figure~\ref{fig:interactions_MDSINE}.}
    \label{fig:wise_interactions_MDSINE_d_7}
\end{figure}

\section{Forecasting with empirical inhomogeneous switching}
\label{app:forecasting}

\subsection{Online forecasting variational proxy}
\label{app:online_variational_proxy}
Given count observations $x_{1:T}$, a natural practical goal is forecasting, that is, characterizing the filtering distribution $p_\btheta(x_{T+1}\!\mid\!x_{1:T})$. In a typical hidden Markov model with latent variables $z_{1:T}$, the filtering distribution satisfies
\begin{equation}
\label{eq:forecasting}
    \begin{split}
        p_\btheta(x_{T+1}\!\mid\! x_{1:T})
        = \int & p_\btheta(x_{T+1}\!\mid\! z_{T+1})\, p_\btheta(z_{T+1}\!\mid\! z_T)\, p_\btheta(z_T\!\mid\! x_{1:T})\, dz_{T:T+1}\eqsp.
    \end{split}
\end{equation}
In the context of ARPLN-ICA, we can approximate this distribution using the variational posterior. In particular, by adopting a filtering variational family by restricting the conditioning at time $t$ to $x_{1:t}$ in Eq.~\eqref{eq:variational_joint_law}, we obtain a fully online procedure that supports sequential state estimation and counts prediction for arbitrary horizons. This variational proxy trades off information from future observations for scalability and streaming compatibility, as the resulting predictive distributions rely on an approximate filtering posterior rather than the exact smoothing distribution, which may be less accurate when future observations would substantially inform latent state estimates.

\subsection{Inhomogeneous forecasting distribution}
In its presented form, ARPLN-ICA assumes a time-homogeneous switching prior over the latent regimes, which is mispecified for forecasting settings where an exogenous perturbation occurs at a fixed time, as in the gnotobiotic mice dataset.
To mitigate this mismatch post-training, we replace the homogeneous regime prior with the empirical time-inhomogeneous Markov model of Eq.~\eqref{eq:vamp_switching_labels} obtained by aggregating filtering variational posteriors over the training trajectories, effectively incorporating time-inhomogeneous transitions in an online framework. The derivation is given in Appendix~\ref{app:empirical_switching} and yields an explicit parameterization which does not require prior knowledge on perturbation times. Combining the filtering variational approximation from Section~\ref{sec:model_extensions} with the proposed post-hoc approximation for inhomogeneous switching, the one-step-ahead forecasting distribution $f_{1}(x_{1:T}) = p_\btheta(x_{T+1}\!\mid\! x_{1:T})$ can be approximated by
\begin{equation}
    \begin{split}
        f_{1}(x_{T+1}) &\approx \int p_\btheta(x_{T+1} \!\mid\! s_{T+1}) \;p_\btheta(s_{T+1} \!\mid\! s_T, u_{T+1}) \,\bar q_\bphi(u_{T+1} \!\mid\! u_T) \;q_\bphi(s_T \!\mid\! x_{1:T})\, q_\bphi(u_T \!\mid\! x_{1:T}) \; \rmd s_{T:T+1}\rmd u_{T:T+1} \eqsp.
    \end{split}
\end{equation}
While for a given $h > 0$, $f_h$ is not available in closed form, we compute a deterministic forecast together with time-resolved uncertainty by recursively propagating regime-conditioned moments and performing moment matching under the induced regime mixture derived in the following Appendix~\ref{app:deterministic_forecast}.

\subsection{Deterministic forecast via moment-matched mixture}
\label{app:deterministic_forecast}
For a given observation $x_{1:T}$, label switching at the last observed times has distribution $\alpha_{T,k}(x) = q_\bphi(u_T = k \mid x_{1:T})$. Up to the horizon $h > 0$, we can compute the distribution of the switching states from the empirical inhomogeneous switching approximation as
\begin{equation}
    \hat\alpha_{T+1}(x) = \bar\tau_{T+1} \;\alpha_T(x) \eqsp, \quad \hat\alpha_{T+h}(x) = \bar\tau_{T+h} \;\hat\alpha_{T+h-1}(x) \eqsp.
\end{equation}
Then, to produce a mean deterministic forecast of the latent states $(s_t)_{T+h\geq t \geq T}$, we consider the mean trajectory given by $\hat\mu_{T+1:T+h}(x) = \mathds{E}[s_{T+1:T+h} \mid x_{1:T}]$, which is recursively computed using the tower property. For the $i$-th component, using the Markov structure of $(s_t, u_t)_{t \leq T+h}$, the variational proxy and the empirical inhomogeneous approximation yields
\begin{equation}
    \begin{split}
        \hat\mu_{T+1}^{(i)}(x) &= \EE{}{\EE{}{s_{T+1}^{(i)} \mid x_{1:T}, u_{T+1}^{(i)}, s_T^{(i)}, u_T^{(i)}}} \\
        &\approx \int \left(\EE{p_\btheta}{s_{T+1}^{(i)} \mid s_T^{(i)}, u_{T+1}^{(i)}}  q_\bphi(s_T^{(i)} \mid x_{1:T}) \rmd s_T \right) \;\bar q_\bphi(u_{T+1}^{(i)} \mid u_T^{(i)}) \; q_\bphi(u_{T}^{(i)} \mid x_{1:T}) \rmd u_{T:T+1}^{(i)} \\
        &= \sum_{k = 1}^C \hat \alpha_{T+1,k}^{(i)}(x_{1:T}) \left(B_k^{(i)} \mu_T^{(i)}(x_{1:T}) + b_k^{(i)}\right) \eqsp,
    \end{split}
\end{equation}
where $\mu_T^{(i)}(x) = \EE{q_\bphi}{s_T \mid x_{1:T}}$. Using the same arguments, we derive a recursive formula to build the mean trajectory at step $t \geq 1$ given by
\begin{equation}
    \begin{split}
        \hat\mu_{T+t}^{(i)}(x) 
        = \sum_{k = 1}^C \hat \alpha_{T+t,k}^{(i)}(x_{1:T}) \left(B_k^{(i)} \hat\mu_{T+t-1}^{(i)}(x_{1:T}) + b_k^{(i)}\right) \eqsp.
    \end{split}
\end{equation}
Similarly, we derive the variance of the forecasted latent trajectory at time $t \geq 1$ denoted $\hat{\psi}_{T+t}(x) = \Var[s_{T+t} \mid x_{1:T}]$,
\begin{equation}
    \begin{split}
        \hat{\psi}_{T+t}^{(i)}(x) \approx \sum_{k=1}^C \left[\hat \alpha_{T+t,k}^{(i)}(x) \left(\psi_k^{(i)} + (B_k^{(i)})^2 \hat \psi_{T+t-1}^{(i)}(x) + \left(B_k\hat\mu_{T+t-1}^{(i)}(x) + b_k^{(i)}\right)^2\right) \right] - (\hat\mu_{T+t}^{(i)})^2 \eqsp,
    \end{split}
\end{equation}
where $\hat{\psi}_{T+1}^{(i)}(x) = \sum_{k=1}^C \left[\hat \alpha_{T+1,k}^{(i)}(x) \left(\psi_k^{(i)} + (B_k^{(i)})^2 \tilde \psi_{T}^{(i)}(x) + \left(B_k\mu_{T}^{(i)}(x) + b_k^{(i)}\right)^2\right) \right] - (\hat\mu_{T+1}^{(i)})^2$.

To prevent excessive smoothing across regimes, we apply Viterbi algorithm to the Markov distribution $\hat{p}(u_{1:T+h}\mid x_{1:T})$, initialized by the marginal ${\alpha}_1(x)$ and driven by the time-dependent transition kernel
$$
\hat{\tau}_t \;=\; \tau_t(x)\,\mathds{1}_{t \le t_0} \;+\; \bar{\tau}_t\,\mathds{1}_{t > t_0}.
$$
This yields the maximum a posteriori switching-state sequence, which can be used in place of the soft assignments $\hat{\alpha}_t$. As a result, the mean and variance of the predicted trajectory can exhibit sharp regime changes rather than gradual interpolation.

\subsection{Mitigating offset noise in microbiome analysis}
\label{app:offset}

In microbiome sequencing, the total read count varies across samples and time due to sampling effort and technical factors, and is not a direct proxy for absolute microbial count. Sequencing depth is therefore generally treated as a noise effect captured by a known exogenous offset. For each time point $t$, we consider the estimated offset $$o_t=\log\Big(\sum_{k=1}^K x_t^{(k)}\Big)\eqsp,$$ which is assumed observed at test time and used when mapping predicted compositions back to counts. 
To mitigate depth-induced variability during training and enable fair comparison across baselines, CLR-VAR(1), \textsc{UwedgeICA} and \textsc{Picard} are trained in the Aitchison geometry using the CLR transform (see Eq.\eqref{eq:CLR}). At prediction time, CLR forecasts are mapped to proportions via the softmax transform (invert of CLR) and then converted to count space by rescaling with the observed depth $\exp(o_t)$. Similarly, gLV is trained on depth-normalized counts, and its forecasts are rescaled using the same oracle test offset. Finally, ARPLN-ICA and P-rSLDS are trained with the explicit additive logsum offset effect. This procedure ensures comparability across methods while mitigating sensitivity to depth variation.

\subsection{Training protocol and architecture design}
\label{app:training_protocol_forecasting}
For the forecasting experiments, we use the filtering variational distribution introduced in Appendix~\ref{app:online_variational_proxy}. The architecture follows Figure~\ref{fig:architecture_encoder}, with the key modification that the sequential encoder treats only past observations: at time $t$, it processes only the observed counts $x_{1:t}$ to parameterize the auto-regressive variational parameters.

During training, we adopt the same architectural choices as in the reconstruction experiments: a $3$-layer GRU produces a sequence embedding of dimension $10$, followed by a linear layer prior to concatenation with the current observation. The recognition network is a $2$-layer ReLU network with hidden sizes $[32,16]$, and each head network is a $2$-layer ReLU network with hidden width $16$. Following Appendix~\ref{app:offset}, ARPLN-ICA is fitted with logsum offsets, and fixed effects to model baseline abundance (see Section~\ref{sec:model_extensions}). We train for $800$ epochs using AdamW with learning rate $10^{-3}$, weight decay at $10^{-4}$, gradient clipping at $5$, and cosine annealing with $T_{\mathrm{max}}=720$.

Similarly to Appendix~\ref{app:selection_of_d_analysis}, we perform a quantitative benchmark to assess the effect of the number of latent sources on forecasting performance. For varying values of $d$, Figure~\ref{fig:MDSINE_forecasting_benchmark_dimension} reports the trajectory-level metrics from Appendix~\ref{app:metrics_trajectories}, comparing the forecasted trajectory from $t_0 = 2$ to the ground truth. Performance varies across metrics, but $d = 7$ appears like a good compromise between dimension reduction and forecasting efficacy.
\begin{figure}[H]
    \centering
    \includegraphics[width=0.6\linewidth]{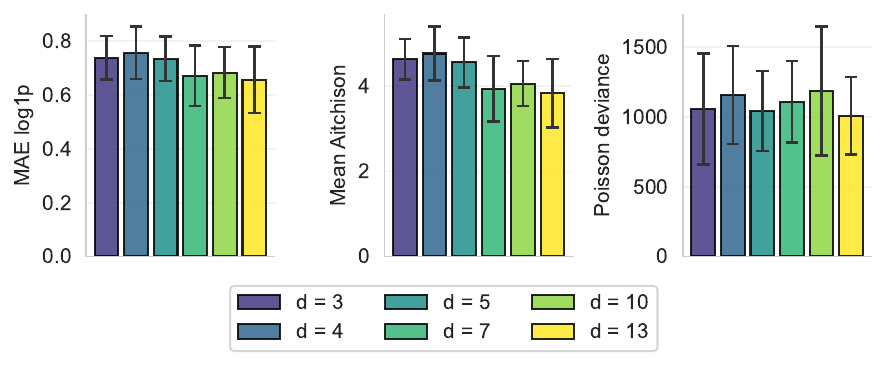}
    \caption{Average forecasting performance on held-out mice in the LOO-CV when forecasting from $t_0=2$, as a function of the latent dimension $d$. Bars report the mean and standard deviation across LOO splits of the metric value averaged over the test trajectory. All models share the same architecture and training protocol.}
    \label{fig:MDSINE_forecasting_benchmark_dimension}
\end{figure}

Figure~\ref{fig:MDSINE_forecasts} shows a representative forecasts for three microbial taxa from a held-out mouse, initialized at prediction times $t_0=2$ and $t_0'=13$. In both cases, ARPLN-ICA generally predicts the main abundance shifts associated with the perturbation, while showing relatable trends with the true trajectory.
\begin{figure}[H]
    \centering
    \includegraphics[width=.7\linewidth]{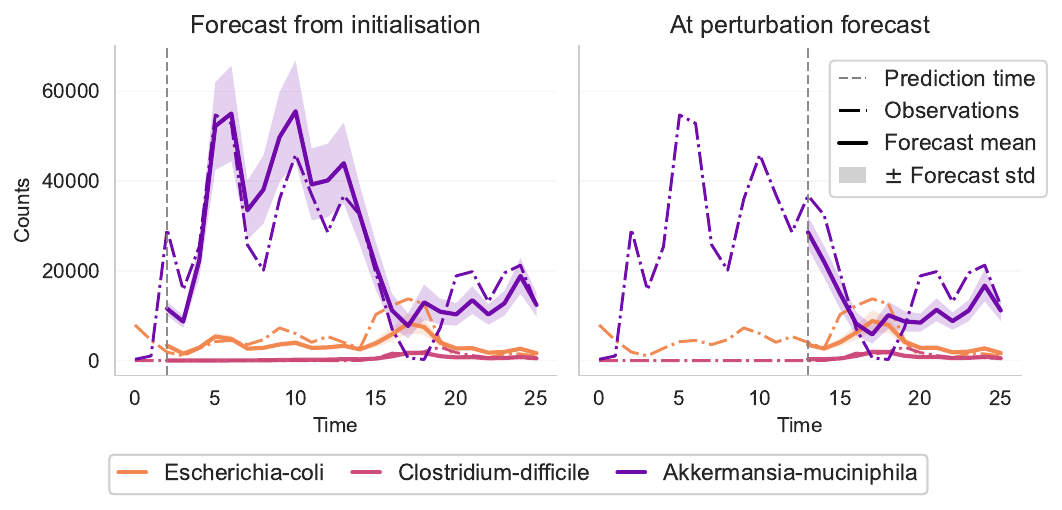}
    \caption{Illustration of forecasting tasks with \textit{E. Coli}, \textit{C. difficile} (inflammatory) and \textit{C. scindens}, \textit{R. hominis} (regulatory) from one gnotobiotic mouse. Dash–dot curves show observed test counts and solid curves show the deterministic forecast mean from ARPLN-ICA with standard deviation.}
    \label{fig:MDSINE_forecasts}
\end{figure}

\subsection{Forecasting baselines for microbiome data}
\label{app:forecasting_baselines}
During forecasting, we benchmark ARPLN-ICA against a set of standard microbiome forecasting baselines. The selected methods span common modeling approaches used in practice, including auto-regressive models, ICA baselines, mechanistic approaches, as well as an ablation of regime switching in ARPLN-ICA.

\paragraph{Time-homogeneous ARPLN-ICA.} As a first competitor, we consider a time-homogeneous variant of ARPLN-ICA obtained by disabling regime switching ($C=1$). All other aspects of the model and training protocol are kept identical to the reference ARPLN-ICA setting. This ablation isolates the contribution of regime inference and assesses whether switching dynamics improve forecasting performance.

\paragraph{Time-varying CLR-VAR(1).} As a representative auto-regressive baseline, we consider a time-varying vector auto-regressive model of order one (VAR(1)). For real-valued observations $Y=(Y_t)_{1 \leq t \leq T} \in \RR^{T \times K}$, a time-inhomogeneous VAR(1) defines a Gaussian Markov chain
\begin{equation}
    Y_{1} \sim \gaussian{m}{\psi_1} \eqsp, \quad Y_{t+1} \mid Y_t \sim \gaussian{A_{t+1}Y_t + b_{t+1}}{\psi_{t+1}} \eqsp,
\end{equation}
where $A_{t+1}, \psi_{t+1} \in \RR^{K \times K}$ and $b_{t+1} \in \RR^{K}$. A standard estimation strategy for this inhomogeneous specification is to fit $(A_{t+1},b_{t+1})$ via ridge-penalized least squares (regularization at $10^{-3}$) and estimate $\psi_{t+1}$ by maximum likelihood \citep{haslbeck2021tutorial}. Directly applying VAR models to raw counts is typically ill-suited due to positivity constraints, and heavy-tailed nature of count observations. Therefore, we rely on the CLR-VAR(1) (see \citet{zheng2017dirichlet}): we fit VAR(1) in the Aitchison geometry \cite{aitchison1982statistical} by first mapping counts to the simplex via the centered log-ratio (CLR) transform (Eq.~\eqref{eq:CLR}), which yields a Euclidean representation and mitigates compositional effects. Forecasts are then mapped back to the count space by applying the softmax transform (inverse of CLR) to obtain a composition, and scaling it by the exponential of the observed offset at each time $t$. The resulting method is denoted CLR-VAR(1) in our experiments.

\paragraph{Linear ICA models in CLR space.} Standard linear ICA tools can be used to analyze temporal data, notably \textsc{UwedgeICA} \cite{pfister2019robustifying} and \textsc{Picard} \cite{ablin2018faster} (see Appendix~\ref{app:linear_ICA_baselines}). Although they do not define generative models by themselves, they can be used as a preprocessing step to obtain a source representation on top of a forecasting model such as the time-varying CLR-VAR(1) baseline.
First, we map training counts to the CLR space. Then, we apply the ICA method to learn a linear demixing transform, encode the CLR trajectories into estimated sources, and fit a time-varying VAR(1) model on the training sources representation. For forecasting, we simulate source trajectories from the fitted VAR(1), decode them back to the CLR space using the inverse ICA transform, and finally map CLR forecasts to counts via the softmax transform followed by rescaling with the exponential offset at each time step. We refer to this pipeline as \textsc{CLR-UwedgeICA} and \textsc{CLR-Picard} in our experiments. Fitting steps are the same as that of Appendix~\ref{app:linear_ICA_baselines}, with the additional CLR instead of log preprocessing. The baselines are implemented from the \textsc{coroICA} package \cite{pfister2019robustifying} and \texttt{python-picard} package \cite{ablin2018faster}.

\paragraph{Poisson rSLDS.} Several models have been proposed for count data forecasting (see the review of \citet{davis2021count}). In particular, Poisson mixed models combined with switching latent dynamical systems have been successfully applied in biological settings, including neural spike modeling \cite{zoltowski2020general,glaser2020recurrent}. Since ARPLN-ICA can be viewed as a Poisson log-link auto-regressive model with ICA-compatible switching, we consider an auto-regressive Poisson GLM model using a log-link combined with recurrent switching linear dynamical systems (P-rSLDS, \citet{linderman2017bayesian}). This model uses grouped switching, which is not compatible with an ICA setting, but provides a more flexible model for count predictions. We implement this baseline using the \textsc{ssm} package, available at \href{https://github.com/lindermanlab/ssm}{\texttt{github.com/lindermanlab/ssm}}. In particular, offsets are incorporated as additive log-size effects on the log-intensities, as in ARPLN-ICA. The model is trained using the recommended hyperparameters of \textsc{ssm}, with the number of discrete states set equal to that of ARPLN-ICA to ensure comparability.

\paragraph{Generalized Lotka-Volterra.} Generalized Lotka-Volterra (gLV) models are widely used to describe ecological dynamics and have been extensively applied in microbiome analysis \cite{stein2013ecological,gibson2025learning}. Here, we adopt the ridge-regularized gLV approach of \citet{stein2013ecological}, which models the dynamics of microbial abundances as
\begin{equation}
    \rmd x_t = x_t \odot (r + A x_t)\,\rmd t + \rmd B_t \eqsp,
\end{equation}
where $r \in \RR^{K}$ denotes species-specific growth rates, $A \in \RR^{K \times K}$ models pairwise interaction coefficients, and $B_t$ is a $K$-dimensional Brownian motion. Parameter estimation is performed with the ridge-penalized procedure of \citet{stein2013ecological}, using the implementation provided in the \texttt{MIMIC} Python library \cite{fontanarrosa2025mimic}. In practice, we fit the model on normalized proportion data and map predictions back to the count scale by multiplying by the exponential offset, effectively mitigating sequencing-depth effects. We refer to this baseline as gLV-L2.

\section{Quantitative comparison of temporal count data}
\label{app:metrics_trajectories}
Given a reference temporal count trajectory $x = x_{1:T}^{1:K}$, we aim to quantify how close a reconstructed or forecasted trajectory $\hat{x} = \hat{x}_{1:T}^{1:K}$ is to $x$. In this context, we report complementary metrics that capture (i) point-wise discrepancies along time, (ii) goodness-of-fit under a Poisson reference, and (iii) community-level dissimilarity relevant to microbiome analysis.

\paragraph{Mean Absolute Error.} The mean absolute error (MAE) is a standard regression metric that summarizes point-wise deviations across time and tracked entities given by
\begin{equation}
    \mathrm{MAE}(x, \hat{x}) = \frac{1}{TK} \sum_{t=1}^T \sum_{k=1}^K \big|x_t^{(k)} - \hat{x}_t^{(k)}\big| \eqsp.
\end{equation}
Notably, to account for the overdispersion effect of count data and mitigate offset noise, we compute the MAE on $\log(\cdot + 1)$ counts.

\paragraph{Poisson deviance.} Due to their discrete nature, count data are typically modeled using Poisson distribution. To evaluate fit under this reference model, we treat $\hat{x}_t^{(k)}$ as the fitted mean for $x_t^{(k)}$ and compute the average Poisson deviance given by
\begin{equation}
    \mathrm{D}_{\mathcal{P}}(x, \hat{x}) = \frac{2}{TK} \sum_{t=1}^T \sum_{k=1}^K \Big(x_t^{(k)}\log \big({x_t^{(k)}}/{\hat{x}_t^{(k)}}\big) - x_t^{(k)} + \hat{x}_t^{(k)}  \Big) \eqsp,
\end{equation}
with the convention that the contribution is set to $0$ whenever $x_t^{(k)}=0$ and $\hat{x}_t^{(k)} >0$. This metric aggregates discrepancies between observed counts and their Poisson fitted means across time and tracked entities.

\paragraph{Aitchison distance.} In microbiome studies, community-level dissimilarity is often assessed through $\beta$-diversity measures \cite{bastiaanssen2023bugs}. In this work, we focus on the Aitchison distance, which is a mathematically valid distance on the simplex defined via the centered log-ratio (CLR) transform. For temporal trajectories, we report the mean Aitchison distance given by
\begin{equation}
\label{eq:CLR}
    \mathrm{D}_{\mathrm{CLR}}(x, \hat x) = \frac{1}{T} \sum_{t=1}^T \norm{\mathrm{clr}(x_t) - \mathrm{clr}(\hat{x}_t)}_2 \eqsp,
\end{equation}
where $\mathrm{clr}(x_t) = \big[\log x_t^{(k)} - \frac{1}{K}\sum_{j=1}^K \log x_t^{(j)}\big]_{1 \leq k \leq K}$.
In practice, we apply the CLR transform after adding a small pseudo-count of $0.5$ to handle zeros.

\section{Climate science dataset experiments}
\label{app:noaa_dataset_experiments}
\subsection{Storm Events dataset description}
\label{app:noaa_dataset_description}
The NOAA Storm Events database records severe weather hazards occurring across the United States since $1950$, counting major, unusual and rare weather hazards \cite{noaa1996storm}. In this work, we consider the subset of yearly panels from $2000$ to $2025$ ($n=25$) for $K=16$ continental event types, aggregated at the monthly scale ($T=12$). Table~\ref{tab:storm_events_monthly_counts} reports the corresponding monthly means and standard deviations, showing highly heterogeneous profiles across event types, months and years. Notably, several event types exhibit many low or near-zero monthly counts, reflecting the sparsity induced by rare or strongly seasonal phenomena. These characteristics make the dataset well suited to ARPLN-ICA, which can model heterogeneous count profiles while extracting recurrent seasonal regimes and co-occurrence patterns. Such dependencies may reflect shared meteorological drivers, regional effects, or longer-term changes in climate conditions.
\begin{table}[H]
\centering
\caption{Monthly mean event counts with standard deviations for the NOAA Storm Events dataset between $2000$ and $2025$.}
\label{tab:storm_events_monthly_counts}
\resizebox{\linewidth}{!}{%
\begin{tabular}{lcccccccccccc}
\toprule
\textbf{Weather hazard} & \textbf{January} & \textbf{February} & \textbf{March} & \textbf{April} & \textbf{May} & \textbf{June} & \textbf{July} & \textbf{August} & \textbf{September} & \textbf{October} & \textbf{November} & \textbf{December} \\
\midrule
Cold/Wind Chill & 246.0 (260.8) & 111.6 (164.3) & 14.5 (19.6) & 9.0 (25.2) & 7.3 (21.7) & 2.6 (8.6) & 5.2 (17.8) & 2.0 (9.2) & 0.4 (1.3) & 9.4 (17.2) & 4.7 (9.9) & 77.5 (104.8) \\
Drought & 202.3 (173.5) & 189.5 (153.2) & 196.3 (150.8) & 204.7 (149.8) & 190.4 (145.4) & 197.7 (164.3) & 246.7 (234.2) & 304.9 (232.7) & 317.0 (224.7) & 319.2 (226.3) & 297.0 (243.5) & 249.2 (198.6) \\
Excessive Heat & 7.5 (26.2) & 6.1 (22.7) & 4.6 (23.5) & 4.1 (12.3) & 12.2 (28.3) & 142.0 (192.0) & 323.7 (387.9) & 275.3 (576.9) & 36.2 (53.1) & 2.7 (7.3) & 0.1 (0.4) & 0.1 (0.3) \\
Flash Flood & 92.4 (79.8) & 106.7 (85.1) & 160.5 (111.7) & 254.7 (167.0) & 498.5 (220.6) & 601.2 (199.4) & 717.6 (282.6) & 599.0 (203.0) & 390.6 (160.3) & 171.5 (135.6) & 82.5 (69.7) & 87.6 (91.6) \\
Flood & 165.9 (140.1) & 213.0 (195.0) & 268.9 (185.8) & 259.6 (140.5) & 314.0 (158.7) & 279.5 (140.0) & 182.8 (86.1) & 158.7 (90.0) & 184.6 (173.2) & 116.9 (90.1) & 85.9 (57.9) & 153.2 (104.8) \\
Hail & 65.4 (77.0) & 180.6 (149.4) & 861.9 (450.0) & 1797.9 (762.9) & 2700.2 (788.5) & 2389.5 (1065.8) & 1495.3 (492.4) & 1051.5 (395.4) & 462.9 (202.1) & 213.0 (126.1) & 95.3 (74.0) & 60.1 (64.4) \\
Heavy Rain & 31.8 (18.8) & 40.4 (26.4) & 61.0 (44.3) & 53.3 (34.3) & 102.7 (46.2) & 143.5 (65.6) & 183.8 (100.1) & 172.6 (91.3) & 138.1 (112.3) & 76.5 (55.4) & 50.1 (39.2) & 49.6 (45.7) \\
Heavy Snow & 591.7 (223.9) & 569.5 (255.1) & 349.5 (160.7) & 114.6 (63.8) & 27.4 (21.6) & 4.4 (10.2) & 0.0 (0.0) & 0.2 (0.5) & 6.0 (8.8) & 63.9 (55.8) & 194.1 (91.3) & 510.5 (264.4) \\
High Wind & 411.2 (223.6) & 435.3 (213.3) & 459.0 (323.3) & 405.6 (279.9) & 161.5 (130.4) & 93.6 (78.9) & 19.7 (16.3) & 25.0 (23.8) & 94.5 (88.6) & 281.8 (155.8) & 350.7 (178.4) & 507.0 (286.1) \\
Thunderstorm Wind & 246.5 (244.8) & 316.4 (248.7) & 645.6 (539.5) & 1268.6 (762.9) & 2212.0 (865.8) & 3672.0 (1000.7) & 3521.6 (1017.7) & 2227.8 (726.9) & 692.7 (337.5) & 359.5 (254.3) & 260.1 (198.3) & 219.6 (215.0) \\
Tornado & 44.3 (48.8) & 48.4 (41.7) & 121.9 (88.1) & 241.8 (180.0) & 313.4 (162.6) & 201.0 (84.4) & 114.8 (42.3) & 91.1 (44.0) & 69.1 (60.9) & 66.9 (41.4) & 64.8 (50.9) & 57.5 (67.9) \\
Winter Storm & 759.3 (323.9) & 788.1 (409.6) & 411.5 (234.2) & 181.5 (135.4) & 26.1 (26.4) & 0.7 (1.4) & 0.0 (0.0) & 0.0 (0.2) & 6.9 (19.1) & 56.3 (55.3) & 188.2 (112.0) & 621.0 (352.3) \\
Blizzard & 100.5 (87.8) & 110.0 (103.6) & 84.5 (87.0) & 39.2 (44.2) & 1.9 (4.3) & 0.0 (0.0) & 0.2 (0.8) & 0.0 (0.0) & 0.2 (0.8) & 9.5 (14.9) & 35.5 (43.1) & 158.2 (159.0) \\
Dense Fog & 110.9 (52.5) & 65.9 (48.3) & 50.8 (39.9) & 21.5 (26.6) & 27.0 (28.1) & 9.4 (15.3) & 15.8 (20.7) & 27.3 (25.7) & 32.0 (35.1) & 44.5 (39.4) & 74.5 (63.9) & 129.1 (88.6) \\
Frost/Freeze & 41.8 (30.9) & 25.2 (24.0) & 24.6 (30.0) & 196.2 (299.8) & 58.6 (67.9) & 10.7 (13.3) & 0.0 (0.0) & 1.1 (3.7) & 15.2 (14.0) & 123.8 (62.5) & 45.9 (34.9) & 28.4 (23.1) \\
Lightning & 5.0 (4.8) & 7.3 (5.8) & 16.1 (9.1) & 33.5 (17.3) & 64.3 (37.6) & 114.1 (63.9) & 148.5 (63.5) & 118.2 (63.0) & 35.7 (18.8) & 12.4 (8.0) & 4.5 (4.2) & 3.8 (3.0) \\
\bottomrule
\end{tabular}%
}
\end{table}

\subsection{Model selection and ICA stability}
\label{app:noaa_model_selection_and_stability}

As in Appendix~\ref{app:selection_of_d_analysis} for microbiome data analyses, the choice of the latent dimension $d$ is an important design parameter in ICA. Since there is no universal selection criterion, we choose $d$ using reconstruction metrics relevant to the application. For the Storm Events dataset, we consider only MAE and Poisson deviance computed on test samples (see Appendix~\ref{app:metrics_trajectories}), as Aitchison distance is specific to compositional data and is therefore not applicable with climate events. The choice $d=4$ provides a reasonable trade-off between dimensionality reduction, interpretability, and reconstruction quality, while $d=10$ yields better reconstruction in this setting at the cost of more involved interpretability.
\begin{figure}[H]
    \centering
    \includegraphics[width=0.33\linewidth]{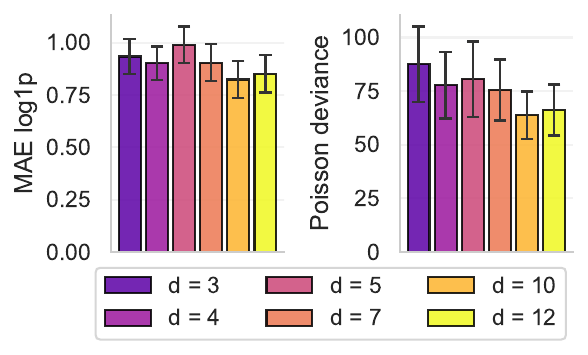}
    \caption{Average reconstruction performance on test year from LOO CV on the Storm Events dataset, with increasing dimensionality of the latent sources. Barplots show the mean and standard deviation across LOO splits of the scoring value averaged along the test year trajectory. All methods are evaluated with the same parameterization, with varying $d$.}
    \label{fig:StormEvents_reconstruction_benchmark_dimension}
\end{figure}
We then assess the stability of ARPLN-ICA on the Storm Events dataset using leave-one-out cross-validation (LOO-CV) at $d = 4, C=2$, yielding $25$ estimated mixing matrices $\Gamma$, one for each split (same hyperparameters as Appendix~\ref{fig:architecture_encoder}). Figure~\ref{fig:StormEvents_network_stability} reports pairwise absolute cosine similarities between the estimated mixings, averaged across columns, together with column-wise average similarities and $95\%$ confidence intervals. The recovered mixings are highly stable across folds, with a mean similarity above $99.7\%$, supporting the fold-wise comparability of the inferred components.
\begin{figure}[H]
    \centering
    \includegraphics[width=.6\linewidth]{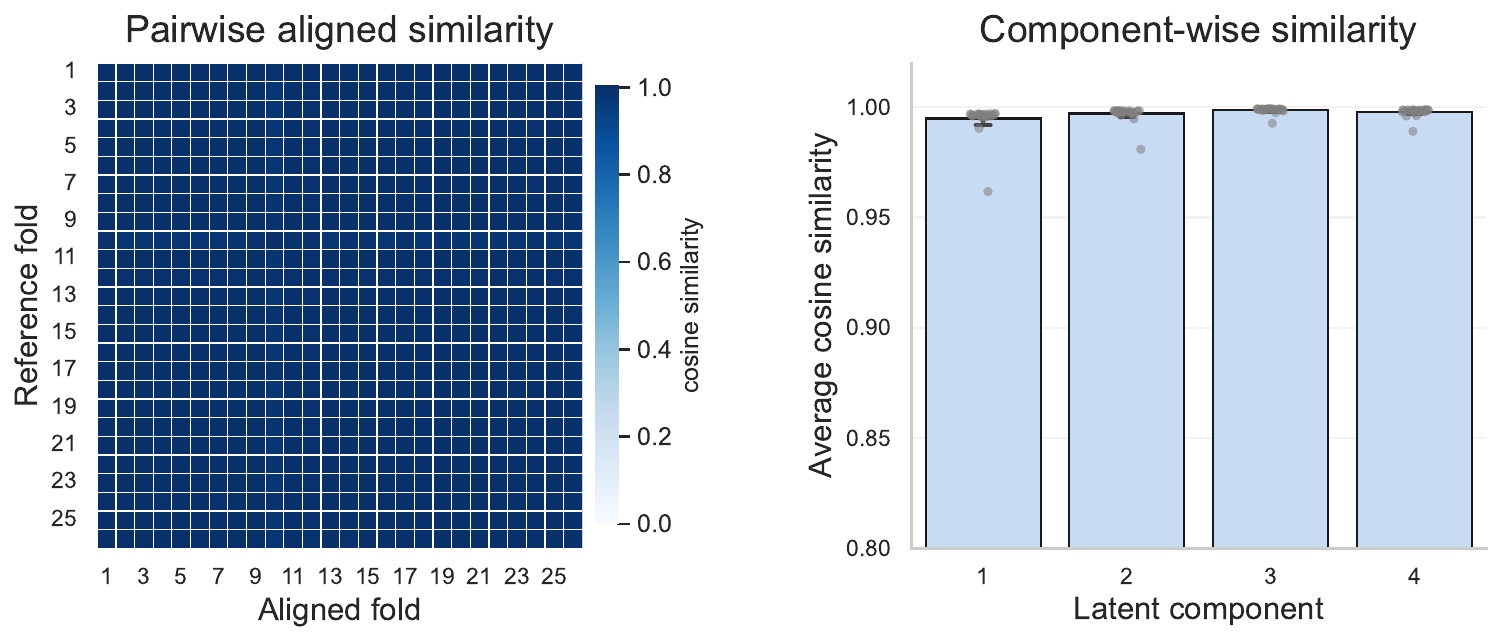}
    \caption{Stability of the learned mixing matrix $\Gamma$ across leave-one-out folds on the Storm Events dataset at $d=4$. (Left) Pairwise cosine similarity between $\Gamma$ estimates after aligning fold $j$ to fold $i$ (permutation/sign invariances). (Right) Component-wise stability, computed as the mean absolute cosine similarity of each aligned column, aggregated within each reference fold and summarized across folds, with $95\%$ confidence interval and similarity per fold indicated by grey dots.}
    \label{fig:StormEvents_network_stability}
\end{figure}


\end{document}